\DeclareMathOperator*{\argmin}{arg\,min}
\def \cC{\mathcal{C}}
\def \cD{\mathcal{D}}
\def \cI{\mathcal{I}}
\def \cH{\mathcal{H}}
\def \cH{\mathcal{H}}
\def \cF{\mathcal{F}}
\def \cL{\mathcal{L}}
\def \cT{\mathcal{T}}
\def \cA{\mathcal{A}}
\def \cO{\mathcal{O}}
\def \P{\mathsf{P}}
\def \E{\mathsf{E}}
\def \R{\mathbb{R}}
\def \ud{\textrm{d}}
\def \gg{\hat \gamma}
\def \GG{\widehat \Gamma}
\def\am{\hat{a}}
\def\Tx{\cT_{x}}
\def\wron{w}
\newcommand{\eps}{\varepsilon}
\newcommand{\rom}[1]{\uppercase\expandafter{\romannumeral #1\relax}}
\newcommand{\closure}[2][3]{%
  {}\mkern#1mu\overline{\mkern-#1mu#2}}
\newtheorem{theorem}{Theorem}[section]
\newtheorem{lemma}[theorem]{Lemma}
\newtheorem{proposition}[theorem]{Proposition}
\newtheorem{definition}[theorem]{Definition}
\newtheorem{remark}[theorem]{Remark}
 \newtheorem{assumption}{Assumption}
\title[Hedging with a single trade]{Optimal hedging of a perpetual American put \\ with a single trade}
\author[C.~Cai, T.~De Angelis, J.~Palczewski]{Cheng Cai \and Tiziano De Angelis \and Jan Palczewski}
\subjclass[2010]{91G10, 91G80, 60J60, 35R35}
\keywords{optimal hedging, discrete-time hedging, American put option, optimal stopping, free boundary problems}
\address{School of Mathematics, University of Leeds, LS2 9JT, UK}
\email{\href{mailto:mmcca@leeds.ac.uk}{mmcca@leeds.ac.uk} (Cheng Cai)}
\email{\href{mailto:t.deangelis@leeds.ac.uk}{t.deangelis@leeds.ac.uk} (Tiziano De Angelis)}
\email{\href{mailto:j.palczewski@leeds.ac.uk}{j.palczewski@leeds.ac.uk} (Jan Palczewski)}
\date{\today}
\numberwithin{equation}{section}
\begin{document}

\begin{abstract}
It is well-known that using delta hedging to hedge financial options is not feasible in practice. Traders often rely on discrete-time hedging strategies based on fixed trading times or fixed trading prices (i.e., trades only occur if the underlying asset's price reaches some predetermined values). Motivated by this insight and with the aim of obtaining explicit solutions, we consider the seller of a perpetual American put option who can hedge her portfolio once until the underlying stock price leaves a certain range of values $(a,b)$. We determine optimal trading boundaries as functions of the initial stock holding, and an optimal hedging strategy for a bond/stock portfolio. Optimality here refers to the variance of the hedging error at the (random) time when the stock leaves the interval $(a,b)$. Our study leads to analytical expressions for both the optimal boundaries and the optimal stock holding, which can be evaluated numerically with no effort.   
\end{abstract}

\maketitle

\section{Introduction}

In this paper we construct a hedging strategy in a Black and Scholes market for the seller of a perpetual American put option, who holds a bond/stock hedging portfolio and can rebalance her position only once until the stock price leaves a predetermined interval $(a,b)$ with $0<a<b<+\infty$. The aim of the trader is to minimise the variance of the hedging error at the (random) time at which the stock leaves the above interval.

We reduce the problem to an optimal stopping problem corresponding to the timing of a single rebalancing opportunity and an optimisation problem for the choice of the initial portfolio (c.f. \cite{martini1999variance, trabelsi2002,trabelsi2003}). The former has a payoff function of a very complicated form, which prevents the use of a guess-and-verify approach. We prove the existence of two optimal trading boundaries. When the stock price reaches either of the two optimal trading boundaries the hedging portfolio must be rebalanced; we give an analytical formula for the optimal stock holding after the trade, which in general is different from that prescribed by the classical delta-hedging.

The stopping boundaries can be calculated from analytical formulae up to a solution of algebraic equations. Those algebraic equations cannot be solved explicitly and do not reveal any further properties of the boundaries. Instead, we employ delicate probabilistic arguments to show that those boundaries exhibit monotone and continuous dependence on the initial stock holding (see Figure \ref{p4.0} for an illustration).

We prove that the value function $V$ of the stopping problem is a unique solution of a free boundary problem associated with the optimal trading boundaries. We show that $V$ is everywhere continuously differentiable with respect to the initial stock holding and the initial stock price. Furthermore, discontinuities in the second order derivative with respect to the initial stock price occur only at the optimal trading boundaries.

The question of rebalancing portfolios with a limited number of trades has a long history and has been addressed in various ways. In practice, continuous trading as prescribed by the classical delta-hedging is not viable due to several reasons: first, continuous trading is a mathematical abstraction; second, well-documented discrepancies between the Black and Scholes model and real markets (e.g., volatility smiles and transaction costs) somewhat curb the applicability of Black and Scholes model. Practitioners have adopted a broad range of simple rules for their rebalancing strategies as, e.g., rebalancing at fixed times or rebalancing at fixed values of the underlying asset's price (see, e.g., \cite{sinclair2011}). The latter strategy, in particular, inspired our work: we determine {\em optimal} values of the asset price at which a trade should be made and also an optimal trade. 

Of course, it would be desirable to extend our setting to allow the trader multiple trades (not just one), as in, e.g., \cite{martini1999variance}, \cite{trabelsi2002} and \cite{trabelsi2003}, but such an extension inevitably leads to more abstract results than ours. Indeed, the above papers aim for a general setup and obtain mostly results on the existence of optimal strategies (via viscosity theory, in \cite{martini1999variance}, and martingale methods, in \cite{trabelsi2002}). These results do not allow to determine analytically shapes of the trading regions and to compute efficiently trading strategies. If rebalancing once in the entire lifetime of the option is certainly too restrictive, our assumption of rebalancing once prior to the stock price leaving a given interval $(a,b)$ improves on real-life strategies, where traders set target values of the stock price at which they reassess their position (our $a$ and $b$). Considering a perpetual option is convenient because it guarantees that the problem is time-homogeneous and allows for explicit calculations. This is also a reasonable approximation for options far from their maturity. In Section \ref{sec:concl} we compare the performance of our optimal hedging strategy to some frequently used ad-hoc strategies. Our optimal hedging strategy produces the variance of the tracking error which is up to 4 times smaller than the other strategies.

The literature around optimal hedging is very rich and branches out in several directions. From a mathematical point of view it motivated important work on approximation of stochastic integrals. Since the variance of the tracking error for the hedging portfolio is an $L^2$-distance for stochastic processes, its minimisation is referred to as quadratic hedging. Besides the seminal work by F\"ollmer and Schweizer \cite{follmer1991hedging}, which laid the foundations of quadratic hedging for claims in incomplete markets, we also mention here work by Schweizer \cite{schweizer1995variance}, Sch\"al \cite{schal1994quadratic} and Mercurio and Vorst \cite{mercurio1996option}, who focus on approximation of random variables (representing European claims at maturity) via stochastic integrals for discrete-time processes. More recently in the mathematical literature we find numerous papers concerning the asymptotic optimality of discrete-time hedging strategies as the number of hedging opportunities tends to infinity (see, e.g., Fukasawa \cite{Fukasawa}, Gobet and Landon \cite{gobet2014almost}, Rosenbaum and Tankov \cite{rosenbaum2014}, Cai et al.~\cite{cai2016optimal}). Those papers also approach the problem by approximating random variables with stochastic integrals for discrete-time processes. Finally, Ekren, Liu and Muhle-Karbe \cite{ekren2018optimal} study optimal hedging frequency in the asymptotic limit of small transaction costs for portfolio with multiple assets. The methodology and the nature of the results in those papers are rather far from our work. 

In the finance literature we find work by Ahn and Wilmott \cite{ahn2009note}, who illustrate numerically the performance of various hedging strategies with finitely many hedging opportunities. Boyle and Emanuel \cite{boyle1980discretely} study the distribution of portfolio returns with discrete hedging. Ma{\v{s}}tinsek \cite{mastinvsek2006discrete} studies the error in piecewise constant hedging strategies as a function of the time interval $\delta t$ between trades in the presence of transaction costs. Mello and Neuhaus \cite{mello1998portfolio} research the accumulated hedging error due to discrete rebalancing, extending the work by Figlewski \cite{figlewski1989options} to imperfect markets. The idea of allowing hedging at the time when fixed relative changes in the stock price occur is explored in \cite{prigent2004option}, where the price dynamic (in an incomplete market) is a marked point process.

Finally, it is worth noticing that optimal multiple stopping has been studied in the context of pricing of swing options in the energy market (see, e.g., Lempa \cite{lempa2014} for a survey). In particular, optimal boundaries for options with a put payoff are studied analytically in Carmona and Touzi \cite{carmona2008optimal} and Carmona and Dayanik \cite{Dayanik2008}, in infinite horizon, and De Angelis and Kitapbayev \cite{de2017optimal}, in finite horizon (\cite{carmona2008optimal} also consider finite horizon but only numerically). In models of optimal multiple stopping there is normally a minimum time-lag between subsequent admissible stopping times. That is imposed as a constraint on the set of admissible stopping sequences and guarantees that simultaneous use of all the stopping times cannot occur. In the case of discrete hedging, there is no need for such constraint: simultaneous use of all stopping times never occurs because, at each stopping time, the portfolio weights are adjusted and therefore each subsequent stopping problem is of a different nature. Moreover, the optimisation of the portfolio weights leads to extremely convoluted analytical expressions for the subsequent stopping problems, so that the resulting optimal multiple stopping problem is much harder to tackle than those in \cite{Dayanik2008}, \cite{carmona2008optimal} and \cite{de2017optimal}. 

The paper is organised as follows. In Section \ref{sec:1} we set the hedging problem in a rigorous mathematical framework. In Section \ref{sec:2} we study the hedging problem for a fixed value of the initial stock holding. We prove continuity and differentiability of the value function with respect to the initial stock price (Theorem \ref{thm:C1}). We determine the existence of optimal trading boundaries (Proposition \ref{prop4.33}) and we prove that the value function solves a suitable variational problem (Theorem \ref{thm:VI}). In Section \ref{sec:3} we prove that the optimal trading boundaries are continuous monotonic functions of the initial stock holding (Theorems \ref{prop5.4.1} and \ref{prop5.5.1}). Section \ref{sec:4} gives necessary first order condition which enable computation of an optimal initial stock holding. We explore the optimisation problem with the right boundary $b=\infty$ in Section \ref{secSETB}. The study is complemented in Section \ref{sec:concl} with an extensive numerical analysis of the properties of the optimal hedging strategies and of the corresponding hedging error.


\section{Problem formulation and background material}\label{sec:1}

Consider a Black-Scholes economy on a complete probability space $(\Omega,\cF,\P)$ with risk neutral measure $\P$. We have one risky stock $S$ and a risk-free bond $B$, following the dynamics
\begin{align}
\label{eq:S}\ud S_t&=rS_t\ud t+\sigma S_t\ud W_t,\quad S_0=x,\\
\label{eq:B}\ud B_t&=rB_t\ud t,\qquad\qquad\qquad B_0=1,
\end{align}
where $W=(W_t)_{t\geq0}$ is a Brownian motion, $r>0$ is the risk-free rate and $\sigma>0$ is the stock's volatility. Let $\mathbb{F}:=(\mathcal{F}_t)_{t\ge 0}$ be the natural filtration generated by $W$ satisfying the usual conditions. When necessary we will denote by $(S^x_t)_{t\geq0}$ the process $S$ starting at $x$. Alternatively we will use the notation $\P_x(\,\cdot\,)=\P(\,\cdot\,|S_0=x)$ and $\E_x[\,\cdot\,]=\E[\,\cdot\,|S_0=x]$. 

An option trader sells one perpetual American put option written on the stock $S$ with the strike price $K$. Such option gives its holder the right but not the obligation to sell one share of the stock $S$ for the price $K$ at any (random) time $\tau\in[0,\infty]$. It is well-known that, if the initial stock price is $x$, the arbitrage-free price $P(x)$ of the option is given by
\begin{equation}\label{eq:AmP}
P(x)=\sup_{\tau} \E\left[e^{-r\tau}(K-S^x_\tau)^{+}\right],
\end{equation}
where the supremum is taken over all $\mathbb{F}$-stopping times. The explicit form of $P(x)$ is known (see, e.g., \cite[Chapter VII]{peskir2006optimal}) and it reads 
\begin{equation}\label{eq:Panalyt}
P(x)= 
\left\{
\begin{array}{ll}
     \frac{1}{d}\am^{1+d}x^{-d},  & \am\leq x<\infty, \cr\\
    K-x,   & 0\leq x \leq  \am,
  \end{array}
  \right.
  \end{equation}
where $d:=2r/\sigma^2$ and 
\begin{equation}
\label{opta}
\am:=\frac{K}{1+\frac{1}{d}}
\end{equation} 
is the so-called {\em exercise boundary}; that is, the holder exercises the option optimally according to the stopping rule
\begin{equation}\label{eq:taua}
\tau_{\am}:=\inf\{t\geq0:S_t\leq \am\}.
\end{equation}

By a straightforward application of It\^o-Tanaka's formula we can derive the dynamics of the discounted option price, that is 
\begin{equation}
\label{optiondyna}
\ud (e^{-rt}P(S_t))=-e^{-rt}rK 1_{\{S_t<\am\}}\ud t+e^{-rt}\sigma S_t P'(S_t)\ud W_t,
\end{equation}
where $1_{\{\cdot\}}$ denotes the indicator function. It is immediate to verify that
\begin{align*}
&t\mapsto e^{-rt}P(S_t)\text{ is a supermartingale and } t\mapsto e^{-r(t\wedge\tau_{\am})}P(S_{t\wedge\tau_{\am}})\text{ is a martingale.}
\end{align*}

According to classical theory the seller of the option should use Delta hedging to construct a replicating portfolio for the perpetual American put. The Delta of the option corresponds to the first derivative 
\[
P'(x)=\max\{-(\am/x)^{1+d},-1\},
\]
which is an increasing function taking values in $[-1,0)$ and is strictly increasing on $(\am,\infty)$. Notice that the Delta appears in the stochastic integral of \eqref{optiondyna}. This highlights that, under the classical Black-Scholes model, if the option holder does not exercise the option at $\tau_{\am}$, the option seller gains instantaneous interests $rK$ with her short position perfectly hedged. 

In our problem formulation, we tacitly assume that the option holder exercises the option optimally, hence as soon as $S_t$ falls below $\am$.        

Our trader faces the following hedging scenario: after selling the option, she constructs a self-financing (hedging) portfolio $\Pi=(\Pi_t)_{t\geq0}$ with bond holding $(m_t)_{t\ge 0}$ and stock holding $(\theta_t)_{t\ge0}$, that is 
\[
\Pi_t=\theta_tS^x_t+m_tB_t,\qquad t\ge 0;
\]
at time $t=0$, she chooses an initial stock holding $\theta_0=h$ and bond holding $m_0=P(x)-hx$. However, in contrast to the classical Delta hedging model, the seller is allowed to rebalance her portfolio only once at a (stopping) time $\tau$ of her choosing before the stock price leaves a given interval $(a,b)$. Her goal is to find an admissible trading strategy (in a sense which will be made precise in Definition \ref{Deftradingstrategy}) so that the variance of the tracking error is minimised (this will also be clarified in a moment). The thresholds $a,b$ can be interpreted as re-assessment price levels set by the option seller. From a practical point of view, the option seller will choose those levels on the grounds of subjective propensity to risk and operational/regulatory constraints.

Since we are assuming that the option holder exercises the option according to the stopping rule \eqref{eq:taua}, it is natural to only allow $a\ge\am$. We also assume that $b<\infty$ and define $\cI:=(a,b)$ and $\closure\cI:=[a,b]$. The (random) time horizon of our problem is given by
\begin{equation*}
\tau^x_\cI:=\inf\{t\geq0: S^x_t\notin \cI\}.
\end{equation*}
We often omit the superscript $x$ if it does not lead to ambiguity. For mathematical completeness the case of $b=\infty$ is discussed separately in Section \ref{secSETB} as it presents some specific technical features. 

In order to formally define admissible trading strategies, we need to introduce some notation. 
Given an initial stock price $S_0=x\in\closure\cI$, we let
\begin{align*}
\Tx:=\left\{ \tau:\,\text{$\tau$ is a $\mathbb{F}$-stopping time such that $\tau\leq\tau^x_\cI$, $\P$-a.s.}\right\}
\end{align*} 
and for any $\tau\in\Tx$ we define
\begin{align}\label{eq:Htau}
\mathcal{H}^\tau:=\{h_1: \Omega \to \R\ :\ \text{$h_1$ is $\mathcal{F}_\tau$-measurable and } \E\left[(h_1)^2\right]<\infty\}.
\end{align}
Since the seller's optimisation problem ends at the time when the price process leaves the interval $\cI$, it is natural to consider an initial stock holding $\theta_0$ which lies in the set
\begin{align*}
\cH:=[P'(a),P'(b)], 
\end{align*}
where it is worth recalling that $P'(x)=-(\am/x)^{1+d}$ for $x\ge \am$.
\begin{definition}[Trading strategy]
\label{Deftradingstrategy}
For an initial stock price $S_0=x\in \closure{\cI}$, the set of admissible trading strategies $\cA_{x}$ consists of pairs $(\tau,\theta)$, such that $\tau\in \Tx$ and
\begin{align*}
\theta_t&:=
 \begin{cases}
    h,  \quad\quad 0\leq t \le\tau, \cr
    h_1,   \quad \tau< t \leq  \tau^x_\cI,
  \end{cases}
\end{align*}
where $h\in\mathcal{H}$ is the initial stock holding and $h_1\in \mathcal{H}^{\tau}$ is the new stock holding after the trade. 
\end{definition}

Given a trading strategy $(\tau, \theta)\in\cA_x$, the trader's self-financing, hedging portfolio follows the dynamics
\begin{align}
\label{portfolio}\ud\Pi^{\tau,\theta}_t&=\theta_t\ud S^x_t+m_t\ud B_t ,\qquad \Pi^{\tau,\theta}_0=h x+m_0 = P(x).
\end{align}
Then, combining \eqref{portfolio} with \eqref{eq:S}--\eqref{eq:B}, it is easy to verify that the discounted portfolio process $t\mapsto e^{-rt}\Pi^{\tau,\theta}_t$ is a local martingale with the dynamics  
\begin{equation}
\label{portdyna}
\ud (e^{-rt}\Pi^{\tau,\theta}_t)=\theta_t\ud (e^{-rt}S^x_t)=e^{-rt}\theta_t \sigma S^x_t\ud W_t.
\end{equation}

Finally, we can formulate the optimisation problem for the option seller. As mentioned above, the seller wants to minimise the {\em variance} of the tracking error (i.e., the difference between the hedging portfolio and the option price) at the terminal time $\tau_\cI$. It is worth remarking that the choice of the variance is natural since the mean of the tracking error is completely uninformative. Indeed
\begin{align}\label{eq:mean}
\E_x\left[e^{-r\tau_\cI}\big(\Pi^{\tau,\theta}_{\tau_\cI}-P(S_{\tau_\cI})\big)\right]=0
\end{align}
thanks to the optional sampling theorem, upon recalling that on the stochastic interval $[0,\tau_\cI]$ the price process $S$ is bounded and $(\theta_t)_{t\ge 0}$ is a square integrable process (cf.~\eqref{eq:Htau}).
Then, given an initial price $S_0=x$ we are interested in the problem 
\begin{align}\label{eq:V}
\mathcal V(x)=&\inf_{(\tau,\theta)\in\cA_x} \mathcal{V}ar_{x}\left[e^{-r\tau_\cI}\bigl(\Pi^{\tau,\theta}_{\tau_\cI}-P(S_{\tau_\cI})\bigr)\right]\\
=&\inf_{(\tau,\theta)\in\cA_x} \E_{x}\left[e^{-2r\tau_\cI}\bigl(\Pi^{\tau,\theta}_{\tau_\cI}-P(S_{\tau_\cI})\bigr)^2\right],\notag
\end{align}
where we use the notation $\mathcal{V}ar_{x}[\,\cdot\,]=\mathcal{V}ar[\,\cdot\,|S_0=x]$ and the second equality follows from \eqref{eq:mean}.

\begin{remark}\label{rem:prem}
It is assumed above that the option is sold for the price $P(x)$ and the seller invests the proceeds in the hedging portfolio, i.e., $\Pi^{\tau,\theta}_0 = P(x)$. However, the seller aware of her trading constraints may sell the option at a premium over the Black-Scholes price, i.e., for $P(x)+\delta$ with $\delta>0$. Denoting by $(\Pi^{\tau,\theta;\delta}_t)_{t\ge 0}$ the associated hedging portfolio, for any trading strategy $(\tau,\theta) \in \cA_x$ it follows from \eqref{portfolio} and \eqref{portdyna} that $\Pi^{\tau,\theta;\delta}_t=e^{rt}\delta+\Pi^{\tau,\theta}_t$ for all $t\ge 0$. The mean tracking error equals (c.f.\ \eqref{eq:mean})
\[
\E_x\left[e^{-r\tau_\cI}\big(\Pi^{\tau,\theta;\delta}_{\tau_\cI}-P(S_{\tau_\cI})\big)\right]=\delta
\]
and consequently
\begin{align*}
\inf_{(\tau,\theta)\in\cA_x} \mathcal{V}ar_x\left[e^{-r\tau_\cI}\bigl(\Pi^{\tau,\theta;\delta}_{\tau_\cI}-P(S_{\tau_\cI})\bigr)\right]&=\inf_{(\tau,\theta)\in\cA_x} \mathcal{V}ar_x\left[e^{-r\tau_\cI}\bigl(\Pi^{\tau,\theta}_{\tau_\cI}-P(S_{\tau_\cI})\bigr)+\delta \right]\\
&=\inf_{(\tau,\theta)\in\cA_x} \E_{x}\left[e^{-2r\tau_\cI}\bigl(\Pi^{\tau,\theta}_{\tau_\cI}-P(S_{\tau_\cI})\bigr)^2\right] = \mathcal{V}(x).
\end{align*}
Hence the problem simplifies to the one studied in the paper.
\end{remark}

One may argue that if all sellers on the market charge a premium on the Black-Scholes price, then the tracking error should be computed accounting for such premium too. As shown in the next remark, if we assume a multiplicative premium we can embed these models in our set-up.

\begin{remark}
Due to trading frictions on real markets, the selling price of the option may be higher than the theoretical Black-Scholes price. Assuming a multiplicative adjustment, the option's selling price is $P(x)(1+\eps)$, where $\eps \ge 0$, and we denote by $(\Pi^{\tau,\theta;\eps}_t)_{t\ge 0}$ the associated hedging portfolio. The trader receives $P(x)(1+\eps)$ at time $0$ and tracks the selling price $P(S_t)(1+\eps)$ (so that she can close the position at time $\tau_\cI$). In view of \eqref{eq:mean}, the mean tracking error is
\begin{align*}
\E_x\left[e^{-r\tau_\cI}\big(\Pi^{\tau,\theta;\eps}_{\tau_\cI}- P(S_{\tau_\cI})(1+\eps)\big)\right]=(1+\eps) \E_x\left[e^{-r\tau_\cI}\big(\Pi^{\tau,\theta'}_{\tau_\cI}- P(S_{\tau_\cI})\big)\right]=0,
\end{align*}
where $\theta'_t = \theta_t/(1+\eps)$, $t \ge 0$, is used along with \eqref{portdyna} to obtain $e^{-r t}\Pi^{\tau,\theta;\eps}_t=(1+\eps)e^{-r t}\Pi^{\tau,\theta'}_t$. Therefore, 
\begin{align*}
&\inf_{(\tau,\theta)\in\cA_x} \mathcal{V}ar_{x}\left[e^{-r\tau_\cI}\bigl(\Pi^{\tau,\theta;\eps}_{\tau_\cI}-P(S_{\tau_\cI})(1+\eps)\bigr)\right]\\
&=(1+\eps)^2 \inf_{(\tau,\theta)\in\cA_x} \mathcal{V}ar_{x}\left[e^{-r\tau_\cI}\bigl(\Pi^{\tau,\theta}_{\tau_\cI}-P(S_{\tau_\cI})\bigr)\right] = (1+\eps)^2 \mathcal{V}(x),
\end{align*}
and the optimisation problem simplifies to the one studied in the paper.
\end{remark}

Using the integral forms of the dynamics \eqref{portdyna} and \eqref{optiondyna} and It\^o's isometry we obtain a more convenient problem formulation:
\begin{align}
\label{1.0}
\mathcal V(x)&=\inf_{(\tau,\theta)\in\cA_x} \E_{x}\left[\Bigl(\int_0^{\tau_\cI}e^{-ru}(\theta_u-P'(S_u))\sigma S_u\ud W_u\Bigr)^2\right]\\
&=\inf_{(\tau,\theta)\in\cA_x} \E_{x}\left[\int_0^{\tau_\cI}e^{-2ru}f(S_u,\theta_u)\ud u\right],\nonumber
\end{align}
where
\begin{equation}
\label{f_x_h}
f(x,\theta):=(\theta- P'(x))^2\sigma^2x^2.
\end{equation}
The final expression in \eqref{1.0} highlights the well-known fact that Delta hedging amounts to controlling the difference between $\theta_t$ and $P'(S_t)$. In the absence of trading constraints the optimal trading strategy would be the Black-Scholes strategy $\theta_t=P'(S_t)$, which would produce no tracking error with certainty, i.e.~$\mathcal V(x)=0$.

Notice that we can rewrite our problem as 
\begin{align}\label{eq:vscript}
\mathcal V(x)=\inf_{h \in\cH} V(x,h),
\end{align}
where
\begin{equation}
\label{1.1}
V(x,h):=\inf_{(\tau,h_1)\in\cT_x\times\mathcal{H}^{\tau}} \E_x\left[\int_0^{\tau}e^{-2ru}f(S_u,h)\ud u+\int_{\tau}^{\tau_\cI}e^{-2ru}f(S_u,h_1)\ud u\right].
\end{equation}
In light of this observation we will first proceed with a detailed analysis of the function $V(x,h)$ and subsequently we will determine $\mathcal V(x)$. By doing this, we will also obtain an optimal control $(\tau^*,\theta^*)$. 

We close this section recalling some useful facts and some notation. Let $\partial_x$ and $\partial_{xx}$ denote partial derivatives with respect to $x$. For future frequent use we introduce the infinitesimal generator of the process $S$, denoted by $\mathcal{L}$, and defined by its action on functions $v\in C^2(\R_+)$ as follows:
\begin{equation*}
\mathcal{L}v(x):=\tfrac{\sigma^2}{2}x^2\partial_{xx}v(x)+rx \partial_xv(x).
\end{equation*}
Recalling that $d=2r/\sigma^2$ we have that 
\begin{equation}\label{eq:q12}
q_1= \frac{1-d+\sqrt{(1-d)^2+8d}}{2}>0,\quad\quad q_2= \frac{1-d-\sqrt{(1-d)^2+8d}}{2}<0,
\end{equation}
are the roots of
\[
q^2+(d-1)q-2d=0.
\]
Since our price process is absorbed at $\{a,b\}$, we will need the functions $\varphi$ and $\psi$ defined, respectively, as the unique (up to multiplication) decreasing and increasing fundamental solutions of the ODE
\begin{align}
\label{funODE}
(\mathcal{L}-2r)v(x)&=0,\quad x\in(a,b),
\end{align}
with boundary conditions $$\psi(a+)=0,\quad\psi'(a+)>0,\quad\varphi(b-)=0,\quad \varphi'(b-)<0.$$
They are conveniently constructed as linear combinations of $\hat\varphi(x):=x^{q_2}$ and $\hat\psi(x):=x^{q_1}$ by taking (see, e.g., \cite{Alv2001})
\begin{align}\label{eq:fund}
\varphi(x)=\hat\varphi(x)-\frac{\hat\varphi(b)}{\hat\psi(b)}\hat\psi(x)\quad\text{and}\quad \psi(x)=\hat\psi(x)-\frac{\hat\psi(a)}{\hat\varphi(a)}\hat\varphi(x).
\end{align}

Finally, using $\varphi$ and $\psi$, we recall an analytical expression of the resolvent for a one-dimensional diffusion, which can be found in \cite[Chapter \rom{2}, p.19]{borodin2012handbook}. For any $x\in \cI$, and any bounded measurable function $g:\cI\to\R$ we have
\begin{align}
\label{diffusionformula}
&\E_{x}\left[\int_{0}^{\tau_\cI}e^{-2ru}g(S_u)\ud u\right]
=\wron^{-1}\left(\varphi(x)\int_{a}^x\psi(z)g(z)m'(z)\ud z+\psi(x)\int_{x}^{b}\varphi(z)g(z)m'(z)\ud z\right),
\end{align}
where $\wron$ is the Wronskian (with the value independent of $x$)
\begin{equation*}
\wron=\psi'(x)\frac{\varphi(x)}{s'(x)}-\varphi'(x)\frac{\psi(x)}{s'(x)} > 0,
\end{equation*}
and $s'(x)$ and $m'(x)$ are the densities of the scale function and of the speed measure of $(S_t)_{t\ge 0}$, respectively. They are explicitly given by
\begin{equation}
\label{speedmeasure}
s'(x)=c\,x^{-d}\qquad\text{and}\qquad m'(x)=2x^{d-2}/c\,\sigma^2,
\end{equation}
where $c>0$ is the same constant in both expressions ($s'$ and $m'$ are uniquely defined up to multiplication). For future reference, we notice that the Wronskian $w$ can be also expressed in terms of the Wronskian $\hat w$ associated to $\hat{\varphi}$ and $\hat \psi$. In particular, it is not hard to check that (recall that $q_2<0<q_1$)
\begin{align}\label{eq:w}
w=\hat w\left(1-(a/b)^{q_1-q_2}\right).
\end{align}
This observation will be useful when we later consider fundamental solutions of \eqref{funODE} on intervals $\cI'\neq\cI$.


\section{A one dimensional optimal stopping problem}\label{sec:2}

In this section, we study problem \eqref{1.1} for each fixed initial stock holding $h\in\cH$. First we find the optimal stock holding $h_1$ and reduce \eqref{1.1} to a standard one dimensional optimal stopping problem, then we solve the optimal stopping problem via associated free boundary problems.


\subsection{Reduction to a Markovian optimal stopping problem}\label{sec2.1}

The first task is to show that it is sufficient to draw $h_1\in\cH^\tau$ from the class of Markovian controls. To this end, we introduce the set of Markovian controls $\cH^{\tau}_m$ defined as
\begin{equation*}
\mathcal{H}_{m}^{\tau}:=\{h_1\in \cH^{\tau}: h_1=\ell(S_\tau)\text{ for some measurable $\ell:\closure{\cI}\to\R$}\}. 
\end{equation*}
Consider an analogue of problem \eqref{1.1} but with the constraint of using Markovian controls and denote its value by
\begin{equation}
\label{markovianvalue}
\widetilde{V}(x,h):=\inf_{(\tau,h_1)\in\cT_x\times\mathcal{H}_{m}^{\tau}} \E_x\left[\int_0^{\tau}e^{-2ru}f(S_u,h)\ud u+\int_{\tau}^{\tau_\cI}e^{-2ru}f(S_u,h_1)\ud u\right].
\end{equation}
Next we show the equivalence of \eqref{1.1} and \eqref{markovianvalue}. 
\begin{proposition}
\label{propmarkoviancontrol}
For all $(x,h)\in\closure{\cI}\times\cH$ we have $\widetilde{V}(x,h)=V(x,h)$.
\end{proposition}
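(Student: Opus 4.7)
The plan is to leverage the quadratic dependence of $f$ on $h$ together with the strong Markov property of $S$. Expanding
\[
f(S_u, h_1) = h_1^{2}\,\sigma^2 S_u^2 - 2 h_1\, \sigma^2 S_u^2 P'(S_u) + \sigma^2 S_u^2 (P'(S_u))^2,
\]
the post-$\tau$ term in \eqref{1.1} splits into three path integrals multiplied by the $\cF_\tau$-measurable scalars $h_1^{2}$, $-2h_1$ and $1$. Conditioning on $\cF_\tau$ and applying the strong Markov property (with a time shift by $\tau$) to each of the three integrals, I would arrive at
\[
\E_x\!\left[\int_\tau^{\tau_\cI} e^{-2ru} f(S_u, h_1)\,\ud u\bigm|\cF_\tau\right] = e^{-2r\tau}\bigl(h_1^{2}\, A(S_\tau) - 2h_1\, B(S_\tau) + C(S_\tau)\bigr),
\]
where $A,B,C:\closure{\cI}\to\R$ are bounded measurable functions given explicitly by \eqref{diffusionformula} applied respectively to $\sigma^2 y^2$, $\sigma^2 y^2 P'(y)$ and $\sigma^2 y^2(P'(y))^2$; in particular $A=B=C=0$ on $\{a,b\}$.

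For every $y\in\cI$ one has $A(y)>0$, so the right-hand side is a strictly convex quadratic in $h_1$ with pointwise minimiser
\[
h^{*}(y) := B(y)/A(y).
\]
The bound $|P'|\leq 1$ on $\cI$ yields $|B(y)|\leq A(y)$, hence $|h^{*}(y)|\leq 1$, so $h^{*}$ is bounded and measurable on $\cI$. Extending $h^{*}$ arbitrarily (say by $0$) to $\{a,b\}$, the random variable $h^{*}(S_\tau)$ is $\cF_\tau$-measurable and bounded by $1$ in absolute value, so $h^{*}(S_\tau)\in \cH_m^{\tau}\subseteq \cH^{\tau}$.

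Putting these pieces together, for any $(\tau,h_1)\in\cT_x\times\cH^{\tau}$ the pointwise minimisation gives
\[
\E_x\!\left[e^{-2r\tau}\bigl(h_1^{2} A(S_\tau) - 2 h_1 B(S_\tau) + C(S_\tau)\bigr)\right] \geq \E_x\!\left[e^{-2r\tau}\bigl((h^{*}(S_\tau))^{2} A(S_\tau) - 2 h^{*}(S_\tau) B(S_\tau) + C(S_\tau)\bigr)\right].
\]
Adding the pre-$\tau$ integral (which is independent of $h_1$) to both sides and taking the infimum over $(\tau,h_1)$ yields $V(x,h)\geq \widetilde{V}(x,h)$; the reverse inequality is immediate from $\cH_m^{\tau}\subseteq \cH^{\tau}$. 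The boundary cases $x\in\{a,b\}$ are trivial, since then $\tau_\cI=0$ forces both values to zero. The only non-routine step is the strong Markov reduction with the $\cF_\tau$-measurable multipliers, but the polynomial structure of $f$ in $h$ makes this entirely standard, so no substantive obstacle is expected.
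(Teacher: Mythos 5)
Your proof is correct and follows essentially the same route as the paper's: expand the quadratic in $h_1$, condition on $\cF_\tau$, apply the strong Markov property, and observe that the pointwise minimiser is a measurable function of $S_\tau$. You are in fact slightly more careful than the paper, explicitly checking that the minimiser is bounded (hence in $\cH^\tau_m$) and handling the degenerate cases at $\{a,b\}$, but this does not change the substance of the argument.
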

\begin{proof}
Since $h_1$ is $\cF_{\tau}$ measurable, expanding the square in \eqref{f_x_h} and using the tower property of conditional expectation, we can write \eqref{1.1} as
\begin{align}
\label{markprop}
V(x,h)=&\inf_{(\tau,h_1)\in\cT_x\times\mathcal{H}^{\tau}} \E_x\bigg[\int_0^{\tau}e^{-2ru}f(S_u,h)\ud u+h_1^2\E_x\left(\int_{\tau}^{\tau_\cI}e^{-2ru}\sigma^2 S^2_u\ud u\Big| \mathcal{F}_\tau \right)\\
&\qquad\qquad\qquad\quad -2h_1\E_x\left(\int_{\tau}^{\tau_\cI}e^{-2ru}P'(S_u)\sigma^2 S^2_u\ud u\Big| \mathcal{F}_\tau \right)\notag\\
&\qquad\qquad\qquad\quad+\E_x\left(\int_{\tau}^{\tau_\cI}e^{-2ru}(P'(S_u))^2\sigma^2 S^2_u\ud u\Big| \mathcal{F}_\tau \right)\bigg].\notag
\end{align}
Notice that for any trading time $\tau$, the expression under the expectation $\E_x$ is quadratic in $h_1$. Then the optimal stock holding $h_1^*$ is
\begin{align}
\label{h1tau}
h_1^*=&\frac{\E_x\left(\int_{\tau}^{\tau_\cI}e^{-2ru}P'(S_u)\sigma^2 S^2_u\ud u\Big| \mathcal{F}_\tau \right)}{\E_x\left(\int_{\tau}^{\tau_\cI}e^{-2ru}\sigma^2 S^2_u\ud u\Big| \mathcal{F}_\tau \right)}=\frac{\E_{S_\tau}\left[\int_{0}^{\tau_\cI}e^{-2ru}P'(S_u)\sigma^2 S^2_u\ud u\right]}{\E_{S_\tau}\left[\int_{0}^{\tau_\cI}e^{-2ru}\sigma^2 S^2_u\ud u \right]},
\end{align}
where the final equality follows from the strong Markov property of the process $S$. Therefore, the optimal stock holding $h^*_1$ is a measurable function of the stock price $S_\tau$ at time $\tau$. Hence it suffices to consider problem \eqref{markovianvalue} instead of \eqref{1.1}. Notice that a similar result was also obtained by \cite{martini1999variance}.
\end{proof}

\begin{figure}[tb]
\centering
\includegraphics[width=7cm]{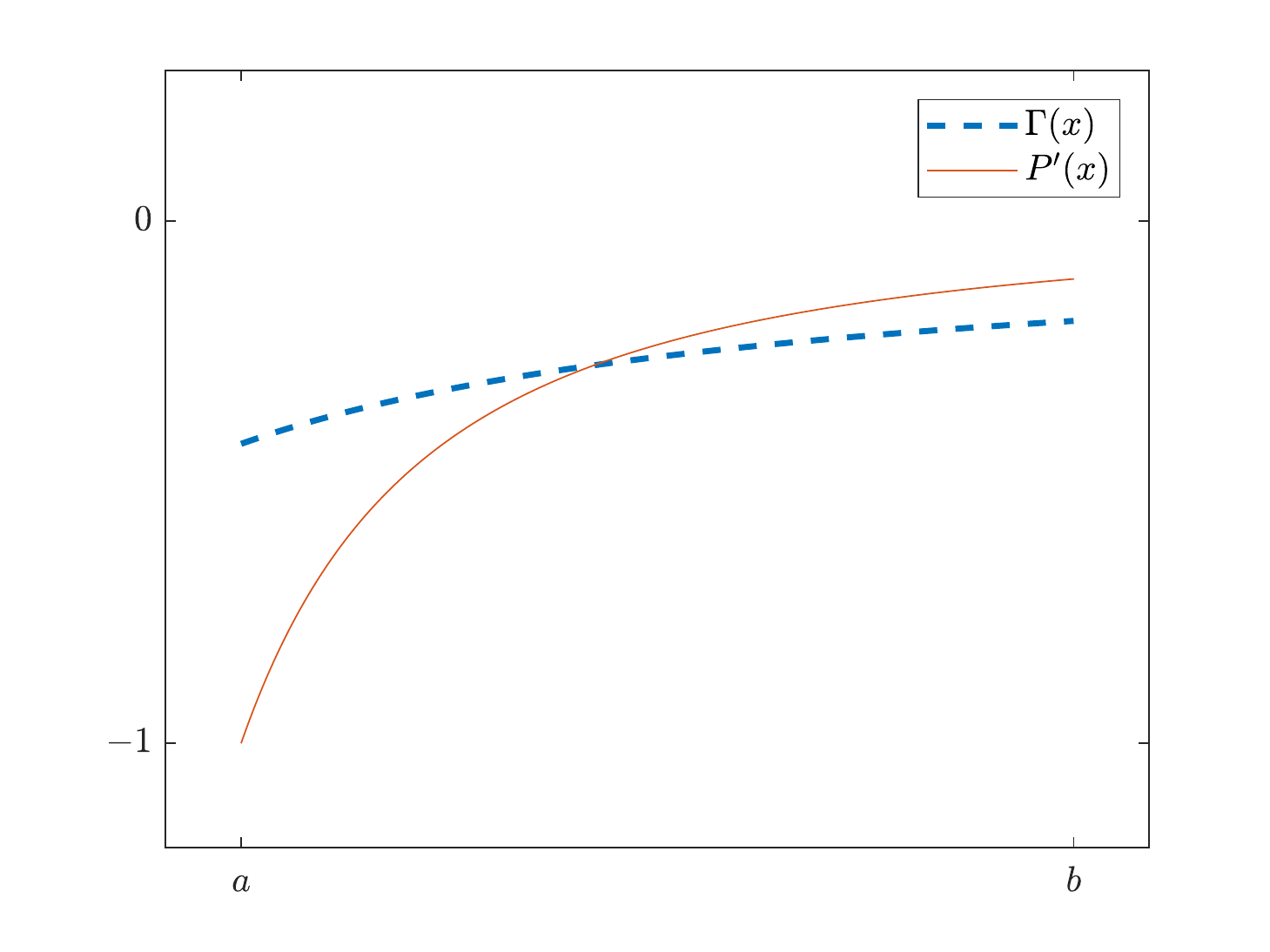}
\caption{Plots of the functions $\Gamma(x)$ and $P'(x)$ using parameters $r=3\%$, $\sigma=30\%$, $K=100$ and $b=150$. Notice that $a=K/(1+d^{-1})=40$.} 
\label{p1.0}
\end{figure}
 
Thanks to Proposition \ref{propmarkoviancontrol}, we can apply the strong Markov property of $(S_t)_{t\ge 0}$ to transform \eqref{1.1} into a canonical impulse control form:
\begin{align}
\label{findh1}
V(x,h)&=\inf_{(\tau,h_1)\in\mathcal{T}_x\times\mathcal{H}_{m}^{\tau}}\E_x\left[\int_0^{\tau}e^{-2ru}f(S_u,h)\ud u +e^{-2r\tau}\widehat{M}(S_\tau,h_1)\right],
\end{align}
where 
\begin{equation}
\label{2.2}
\widehat{M}(x,\zeta):=\E_x\left[\int_{0}^{\tau_\cI}e^{-2ru}f(S_u,\zeta)\ud u\right],\qquad \zeta\in\R, \quad x \in \closure\cI.
\end{equation}
Expanding the square in $f$ yields the following representation for $\widehat M$
\begin{equation}
\label{2.3}
\widehat M(x, \zeta) =\zeta^2\gamma_1(x)-2\zeta \gamma_2(x)+\gamma_3(x),
\end{equation}
where
\begin{equation}
\label{eq:Gamma}
\begin{aligned}
\gamma_1(x) &= \E_{x}\left[\int_{0}^{\tau_\cI}e^{-2ru}\sigma^2 S^2_u\ud u \right],\qquad \gamma_2(x) = \E_{x}\left[\int_{0}^{\tau_\cI}e^{-2ru}P'(S_u)\sigma^2 S^2_u\ud u\right],\\
\gamma_3(x) &= \E_{x}\left[\int_{0}^{\tau_\cI}e^{-2ru}(P'(S_u))^2 \sigma^2S^2_{u} \ud u\right].
\end{aligned}
\end{equation}
Direct calculations, using \eqref{diffusionformula}, lead to explicit formulae for $\gamma_i$, $i=1, 2, 3,$
\begin{align}
\label{eq:gamma1}\gamma_1(x)&=-x^2 + A_1 D_{2}x^{q_1} + A_{2}D_1 x^{q_2},\\
\label{eq:gamma2}\gamma_2(x)&=-\frac1{d}\am^{1+d} x^{1-d}+ A_1C_2x^{q_1}+  A_2C_1x^{q_2},\\
\label{eq:gamma3}\gamma_3(x)&=-\frac1{d^2}\hat{a}^{2+2d}x^{-2d}+A_1B_2x^{q_1}+ A_2B_1x^{q_2},
\end{align}
where, using $q_1$ and $q_2$ given in \eqref{eq:q12},
\[
\begin{array}{ll}
A_i:=[a^{q_i-q_{3-i}}-b^{q_i-q_{3-i}}]^{-1}, &B_i:=d^{-2}\bigl((\am/a)^{2+2d} a^{2-q_i}-(\am/b)^{2+2d}b^{2-q_i}\bigr),\\[5pt]
C_i:=d^{-1}\bigl((\am/a)^{1+d} a^{2-q_i}- (\am/b)^{1+d} b^{2-q_i}\bigr), & D_i:=(a^{2-q_i}-b^{2-q_i}).
\end{array}
\]

The expression \eqref{h1tau} in the proof of Proposition \ref{propmarkoviancontrol} implies that the optimal stock holding after the rebalancing of the portfolio is a measurable function of the stock price at the rebalancing time. It is a minimiser of $\zeta \mapsto \widehat M(x, \zeta)$ which, thanks to the representation \eqref{2.3}, is unique and given by
\begin{align}\label{2.4}
\Gamma(x) = \argmin_{\zeta}\widehat{M}(x,\zeta) = \frac{\gamma_2(x)}{\gamma_1(x)}, \qquad x \in \cI.
\end{align}
Notice that the optimal stock holding $\Gamma$ is defined only on $\cI$. If the trader trades at $\tau_\cI$, her choice of the stock holding becomes irrelevant for the optimisation problem. 

Denoting
\begin{align}\label{eq:M}
M(x):=\widehat{M}(x,\Gamma(x)) \text{\quad for } x\in\cI, \text{\quad and\quad } M(a) = M(b) = 0,
\end{align} 
$V(x,h)$ can be represented as 
\begin{equation}
V(x,h)=\inf_{\tau\in\mathcal{T}_x}\E_x\left[\int_0^{\tau}e^{-2ru}f(S_u,h)\ud u+e^{-2r\tau}M(S_\tau)\right].\label{2.5}
\end{equation}
While \eqref{2.5} defines a standard optimal stopping problem, the explicit expression of $M$ is extremely convoluted and makes the analysis of our problem very challenging. Indeed, it immediately follows from \eqref{2.3} and \eqref{2.4} that
\begin{equation}\label{Mprobability}
M(x) = -\frac{\gamma^2_2(x)}{\gamma_1(x)} + \gamma_3(x),\qquad x\in\cI.
\end{equation}
However, thanks to the analytical expressions we can easily assert the smoothness of $\Gamma$ and $M$ in $\cI$ and their behaviour at the boundary $\partial\cI$. 
\begin{proposition}
\label{prop4.1}
The optimal stock holding $\Gamma$ and the payoff function $M$ belong to $C^\infty(\cI)$. Furthermore,
\begin{itemize}
\item[(i)]
$\Gamma$ is negative, strictly increasing, with bounded first derivative on $\cI$. The limits of $\Gamma$ at $a$ and $b$ satisfy
\begin{equation}\label{eq:limG}
\Gamma(a):=\lim_{x\downarrow a}\Gamma(x)>P'(a)\quad\text{and}\quad \Gamma(b):=\lim_{x\uparrow b}\Gamma(x)<P'(b).        
\end{equation}
\item[(ii)]
Limits of the derivatives $M'$, $M''$ at $a$ and $b$ exist and are finite. Moreover,
\begin{align}\label{eq:limM}
\lim_{x\downarrow a}M(x)=\lim_{x\uparrow b}M(x)=0.
\end{align}
\end{itemize}
\end{proposition}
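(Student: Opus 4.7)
The plan is to exploit the explicit representations \eqref{eq:gamma1}--\eqref{eq:gamma3}, which show immediately that $\gamma_1,\gamma_2,\gamma_3\in C^\infty((0,\infty))$. The probabilistic definition \eqref{eq:Gamma} then yields $\gamma_1(x)>0$ on $\cI$ (the integrand is strictly positive and $\tau_\cI>0$ $\P_x$-a.s.\ for $x\in\cI$), while $\gamma_2(x)<0$ because $P'(S_u)<0$ everywhere. Consequently $\Gamma = \gamma_2/\gamma_1$ and $M = \gamma_3-\gamma_2^2/\gamma_1$ inherit $C^\infty$ regularity on $\cI$, and $\Gamma<0$.

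For the boundary behaviour, the probabilistic formulae force $\gamma_i(a)=\gamma_i(b)=0$ for $i=1,2,3$, and combined with smoothness on $(0,\infty)$ this yields a factorisation
\[
\gamma_i(x) = (x-a)(b-x)\bar\gamma_i(x), \qquad \bar\gamma_i \in C^\infty((0,\infty)).
\]
The constants $\bar\gamma_i(a)$ and $\bar\gamma_i(b)$ can be read off from the resolvent formula \eqref{diffusionformula}: since $\psi(a)=0$, $\psi'(a)>0$ and $\psi(x)\sim\psi'(a)(x-a)$ as $x\downarrow a$, a short computation gives
\[
\gamma_i'(a+) = \wron^{-1}\psi'(a)\int_a^b \varphi(y)\, g_i(y)\, m'(y)\ud y,
\]
with $g_1(y)=\sigma^2 y^2$, $g_2=P'g_1$, $g_3=(P')^2 g_1$, and analogous expressions at $b$. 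In particular $\bar\gamma_1$ is strictly positive on $[a,b]$, so both $\Gamma=\bar\gamma_2/\bar\gamma_1$ and
\[
M(x)=(x-a)(b-x)\bigl(\bar\gamma_3(x)-\bar\gamma_2(x)^2/\bar\gamma_1(x)\bigr)
\]
extend smoothly to $[a,b]$. This immediately gives part (ii): $M(a+)=M(b-)=0$ and $M',M''$ admit finite limits at the endpoints. It also yields
\[
\Gamma(a+) = \frac{\int_a^b \varphi(y)\, P'(y)\, y^2\, m'(y)\ud y}{\int_a^b \varphi(y)\, y^2\, m'(y)\ud y},
\]
together with an analogous expression at $b$. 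Since $P'$ is strictly increasing on $[\am,\infty)\supset[a,b]$ and the weight $\varphi(y)\, y^2\, m'(y)$ is strictly positive on $(a,b)$, one concludes $\Gamma(a+)>P'(a)$ and, analogously, $\Gamma(b-)<P'(b)$. Boundedness of $\Gamma'$ on $\cI$ then follows from the smooth extension of $\Gamma$ to $[a,b]$.

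The main remaining task, which I expect to be the delicate one, is strict monotonicity of $\Gamma$ on $\cI$. Writing $\Gamma(x)=\int_a^b P'(y)\ud\nu_x(y)$ where $\nu_x$ is the probability measure with density proportional to $G(x,y)\, y^2$ and $G(x,y)=\wron^{-1}m'(y)\psi(x\wedge y)\varphi(x\vee y)$ is the Green's function, I would establish a monotone likelihood ratio property: for $y_1<y_2$ a three-region case analysis ($x\le y_1$, $y_1<x<y_2$, $x\ge y_2$) shows that $G(x,y_2)/G(x,y_1)$ is non-decreasing in $x$ and strictly increasing on $(y_1,y_2)$, the strict piece relying on the standard fact that $\psi/\varphi$ is strictly increasing (a consequence of positivity of the Wronskian). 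Strict MLR in $x$ implies strict first-order stochastic ordering of $\{\nu_x\}_{x\in\cI}$, and coupled with strict monotonicity of $P'$ on $[a,b]$ this delivers $\Gamma'(x)>0$, completing the proof.
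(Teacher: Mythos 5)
Your proof is correct, but it takes a genuinely different route from the paper's on the two substantive points. For the boundary behaviour, you factor out the zeros, $\gamma_i(x)=(x-a)(b-x)\bar\gamma_i(x)$, and identify $\Gamma(a+)=\gamma_2'(a+)/\gamma_1'(a+)$ as a $\varphi$-weighted average of $P'$ over $(a,b)$; the paper instead writes $\Gamma=(\varphi p_2+\psi p_4)/(\varphi p_1+\psi p_3)$ with $p_i$ the partial resolvent integrals and applies de L'Hospital, landing on the same ratios $p_4(a)/p_3(a)$ and $p_2(b)/p_1(b)$. The real divergence is the monotonicity step: the paper differentiates $\Gamma$ by the quotient rule, simplifies via the identities $\psi p_4'=-\varphi p_2'$ and $\psi p_3'=-\varphi p_1'$, and reduces everything to the determinant inequality $p_1p_4-p_2p_3>0$, proved in two lines from $p_4>P'(x)p_3$ and $p_2<P'(x)p_1$. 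You never compute $\Gamma'$; you read $\Gamma(x)=\int P'\,d\nu_x$ and prove strict MLR of the kernel $\psi(x\wedge y)\varphi(x\vee y)$ by the three-region case analysis, using that $\psi/\varphi$ is strictly increasing (Wronskian positivity). The two arguments are two faces of the same total-positivity fact — the paper's determinant $p_1p_4-p_2p_3$ is exactly your TP2 condition integrated — so the paper's version is more elementary and self-contained, while yours makes the stochastic-ordering mechanism transparent and generalises immediately to any strictly increasing integrand in place of $P'$. Two details worth tightening: for \emph{strict} ordering of $\Gamma(x)<\Gamma(x')$ you should note the likelihood ratio $y\mapsto G(x',y)/G(x,y)$ is non-constant on the full support (it equals $\varphi(x')/\varphi(x)<1$ for $y\le x$ and $\psi(x')/\psi(x)>1$ for $y\ge x'$), so the covariance inequality against the strictly increasing $P'$ is strict; and your deduction that $\Gamma'$ is bounded requires $\bar\gamma_1>0$ on the \emph{closed} interval $[a,b]$, which you do supply via $\gamma_1'(a+)>0$ and $\gamma_1'(b-)<0$ — this is in fact cleaner than the paper's separate de L'Hospital evaluation of $\lim\Gamma'$ at the endpoints.
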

\begin{proof}
The smoothness of $\Gamma$ and $M$ on $\cI$ can be checked directly from their explicit expressions \eqref{2.4} and \eqref{Mprobability}.

The monotonicity of $\Gamma$ in (i) is hard to obtain directly from its analytical expression \eqref{2.4} with $\gamma_1, \gamma_2$ as in \eqref{eq:gamma1}-\eqref{eq:gamma2}. Instead we exploit the probabilistic formulae for $\gamma_i$'s given in \eqref{eq:Gamma}, combined with \eqref{diffusionformula}. It can be easily verified that
\begin{equation}\label{eq:gamma}
\Gamma(x)=\frac{\varphi(x)p_2(x)+\psi(x)p_4(x)}{\varphi(x)p_1(x)+\psi(x)p_3(x)},
\end{equation}
where
\begin{align*} 
p_1(x)&=\int_{a}^{x}\psi(z)\sigma^2z^2m'(z)\ud z,\quad\quad p_2(x)=\int_{a}^{x}\psi(z)P'(z)\sigma^2z^2m'(z)\ud z,\\
p_3(x)&=\int_{x}^{b}\varphi(z)\sigma^2z^2m'(z)\ud z,\quad\quad p_4(x)=\int_{x}^{b}\varphi(z)P'(z)\sigma^2z^2m'(z)\ud z.
\end{align*} 
From \eqref{eq:gamma} using simple algebra, we obtain
\begin{align}\label{cancel1}
\Gamma'(x)&=\frac{\wron s'(x)}{\big(\varphi(x)p_1(x)+\psi(x)p_3(x)\big)^2}\big(p_1(x)p_4(x)-p_2(x)p_3(x)\big),
\end{align}
where $w$ is the Wronskian and in the calculations we have used
\begin{equation*}
\psi(x) p_4'(x)=-\varphi(x) p_2'(x), \quad \psi(x) p_3'(x)=-\varphi(x) p_1'(x).
\end{equation*}
Since $P'(\,\cdot\,)$ is strictly increasing, we have
\begin{align*}
p_4(x)&>P'(x)\int_{x}^{b}\varphi(z)\sigma^2z^2m'(z)\ud z=P'(x)p_3(x),\\
p_2(x)&<P'(x)\int_{a}^{x}\psi(z)\sigma^2z^2m'(z)\ud z=P'(x)p_1(x). 
\end{align*}
Therefore $p_1(x)p_4(x)>P'(x)p_3(x)p_1(x)>p_2(x)p_3(x)$,
which implies that $\Gamma'(x)>0$. Noticing that 
\[
p_1(a)=p_2(a)=p_3(b)=p_4(b)=0,
\]
\[
p_1'(x)P'(x)=p_2'(x),\quad\text{and}\quad p_3'(x)P'(x)=p_4'(x),
\] 
and using de L'Hospital's rule for the right-hand side of \eqref{cancel1},  we can compute the limits
\begin{align*}
\lim_{x\downarrow a}\Gamma'(x)&=\frac{\wron(p_4(a)-P'(a)p_3(a))}{\psi'(a+)p_3(a)^2}<\infty,\\
\lim_{x\uparrow b}\Gamma'(x)&=\frac{\wron(p_2(b)-P'(b)p_1(b))}{\varphi'(b-)p_1(b)^2}<\infty,
\end{align*}
\begin{equation*}
\lim_{x\downarrow a}\Gamma(x)=\frac{p_4(a)}{p_3(a)}>P'(a)\quad\text{and}\quad\lim_{x\uparrow b}\Gamma(x)=\frac{p_2(b)}{p_1(b)}<P'(b),     
\end{equation*}
which, together with \eqref{cancel1}, concludes the proof of $(i)$.

Now we prove $(ii)$. The boundedness of derivatives follows directly from the explicit representation \eqref{Mprobability}. Limits at $a$ and $b$ are deduced from \eqref{eq:gamma1}-\eqref{eq:gamma3}.
\end{proof}

We close this section by proving the Lipschitz continuity of the value function. Since $M$ is continuous on $\closure{\cI}$, \cite[Theorem 3.4]{palczewski2011stopping} implies that $V$ is continous and the smallest optimal stopping time is in the standard form, i.e., the first hitting time of the set where $V$ coincides with $M$ (see \eqref{eq:tau*} below). However, in the particular case of the optimal stopping problem $V(x,h)$, the Lipschitz continuity, and, therefore, continuity, can be proven directly. Arguments below rely on the Lipschitz continuity of $f$ and $M$ and not on their particular form. Notice that the underlying process is absorbed at $a$ and $b$ which differentiates our setting from results found in the literature.
\begin{proposition}
\label{vlips}
There exists a constant $L$ such that for any $(x,h)$ and $(x',h')$ in $\closure{\cI}\times\cH$
\begin{equation}\label{eq:Lip}
|V(x,h)-V(x',h')|\leq L(|x-x'|+|h-h'|).
\end{equation}
\end{proposition}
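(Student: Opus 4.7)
The plan is to use the triangle inequality
\[
|V(x,h) - V(x',h')| \le |V(x,h) - V(x,h')| + |V(x,h') - V(x',h')|
\]
and prove Lipschitz continuity in each argument separately, with a constant independent of the other variable.

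For Lipschitz continuity in $h$, I would use the formulation \eqref{1.1}. The $h$-dependence enters only through the integrand $f(\cdot,h)$ on the interval $[0,\tau]$, and from $\partial_h f(x,h) = 2(h - P'(x))\sigma^2 x^2$ one sees that $h \mapsto f(x,h)$ is Lipschitz with a constant $L_f$ uniform in $x \in [a,b]$. Given an $\eps$-optimal pair $(\tau^\eps,h_1^\eps)$ for $V(x,h')$, using it as a sub-optimal control in $V(x,h)$ gives
\[
V(x,h) - V(x,h') \le \E_x\!\left[\int_0^{\tau^\eps} e^{-2ru}|f(S_u,h) - f(S_u,h')|\, du\right] + \eps \le \frac{L_f}{2r}|h - h'| + \eps.
\]
Sending $\eps \downarrow 0$ and repeating the argument with $h,h'$ swapped yields $|V(x,h) - V(x,h')| \le (L_f/2r)|h - h'|$.

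For Lipschitz continuity in $x$, I switch to formulation \eqref{2.5} together with the boundary values $V(a,h) = V(b,h) = 0$, which follow from $\tau_\cI^a = \tau_\cI^b = 0$ and $M(a) = M(b) = 0$ (Proposition \ref{prop4.1}(ii)). Fix $x,x' \in \closure{\cI}$ with $x < x'$. The dynamic programming principle applied with the sub-optimal stopping time $\sigma_1 := \tau_{x'}^x \wedge \tau_a^x \in \cT_x$ (admissible because $x < x' < b$ forces $S^x$ to visit $x'$ before exiting through $b$) gives
\[
V(x,h) \le \|f\|_\infty\, \E_x[\sigma_1] + V(x',h)\,\E_x\!\left[e^{-2r\sigma_1} 1_{\{\tau_{x'}^x < \tau_a^x\}}\right],
\]
where I used $V(a,h) = 0$. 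Since $V(x',h) \ge 0$ and $\E_x[e^{-2r\sigma_1} 1_{\{\cdot\}}] \le 1$, this rearranges to $V(x,h) - V(x',h) \le \|f\|_\infty\, \E_x[\sigma_1]$. A symmetric argument with $\sigma_2 := \tau_x^{x'} \wedge \tau_b^{x'} \in \cT_{x'}$ and $V(b,h) = 0$ yields $V(x',h) - V(x,h) \le \|f\|_\infty\, \E_{x'}[\sigma_2]$. Here $\|f\|_\infty < \infty$ on $[a,b] \times \cH$ by inspection of \eqref{f_x_h}.

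The only non-routine step, and the main obstacle, is the uniform hitting-time bound $\E_x[\sigma_1] + \E_{x'}[\sigma_2] \le C|x-x'|$ with $C$ depending only on $a,b,r,\sigma$. By Dynkin's formula each expectation solves $\cL v = -1$ with zero Dirichlet data on a sub-interval of $[a,b]$. For $\E_x[\sigma_1]$ a direct linear supersolution works: $g(y) := (x' - y)/(ra)$ satisfies $\cL g = -ry/(ra) \le -1$ on $[a,x']$ and $g \ge 0$ on the boundary, so the comparison principle yields $\E_x[\sigma_1] \le g(x) = (x' - x)/(ra)$. For $\E_{x'}[\sigma_2]$ the same linear ansatz fails because $\cL(y - x) = ry > 0$; one must instead invoke the explicit solution of the ODE (which, for $d \ne 1$, has the form $v_2(y) = B(y^{1-d} - x^{1-d}) + \alpha\log(y/x)$ with $\alpha = -1/(r - \sigma^2/2)$ and $B$ fixed by $v_2(b) = 0$), and verify by a direct computation, using l'H\^opital at $x = b$, that $v_2'(x)$ stays bounded uniformly for $x \in [a,b]$. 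A first-order Taylor expansion $v_2(x') = v_2'(x)(x'-x) + O(|x'-x|^2)$ then gives $\E_{x'}[\sigma_2] \le C(x' - x)$ (the case $d = 1$ is analogous, with logarithmic corrections). Assembling the four estimates via the triangle inequality proves \eqref{eq:Lip}.
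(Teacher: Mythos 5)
Your proof is correct, but it takes a genuinely different route from the paper's. The paper handles both variables at once: it takes an $\eps$-optimal stopping time $\tau_1$ for $V(x,h)$, truncates it at $\tau^{x'}_\cI$ to make it admissible for $V(x',h')$, and then exploits the multiplicative coupling $S^{x'}_t=(x'/x)S^x_t$ together with the Lipschitz continuity of $f$ on compacts \emph{and} of $M$ on $\closure{\cI}$ (the latter supplied by Proposition \ref{prop4.1}(ii)); the mismatch between exit times is absorbed using $f\ge 0$ and $M(a)=M(b)=0$. You instead split via the triangle inequality: the $h$-direction is handled exactly as in the paper's spirit (transfer of an $\eps$-optimal control), while the $x$-direction is handled by a dynamic-programming/hitting-time argument, evaluating $V$ only at the four points $a,b,x,x'$ and reducing everything to the bounds $\E_x[\sigma_1]+\E_{x'}[\sigma_2]\le C|x-x'|$, which you obtain from a linear supersolution on one side and the explicit solution of $\cL v=-1$ on the other. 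What your route buys is independence from the regularity of $M$ — you only use $M\ge 0$, $M(a)=M(b)=0$ and boundedness of $f$, so the argument would survive a much rougher obstacle; what it costs is the explicit exit-time computation (and the case distinction at $d=1$), whereas the paper's coupling argument is shorter once Proposition \ref{prop4.1}(ii) is in hand. Two small points worth making explicit if you write this up: the inequality $V(x,h)\le\E_x[\int_0^{\sigma_1}e^{-2ru}f\,du+e^{-2r\sigma_1}V(S_{\sigma_1},h)]$ should be justified by concatenating $\sigma_1$ with an $\eps$-optimal stopping time started afresh at $x'$ via the strong Markov property (this avoids any measurable-selection issue since $S_{\sigma_1}$ takes only the two values $a$ and $x'$), and the discard of the term $V(x',h)\bigl(\E_x[e^{-2r\sigma_1}1_{\{\tau^x_{x'}<\tau^x_a\}}]-1\bigr)$ uses $V\ge 0$, which holds because $f\ge 0$ and $M\ge 0$.
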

\begin{proof}
Take $(x,h)$ and $(x',h')$ in $\closure{\cI}\times\cH$. Let $\tau_1\in \Tx$ be an $\varepsilon$-optimal stopping time for $V(x,h)$ and let $\tilde{\tau}=\tau_1\wedge\tau^{x'}_\cI$, so that $\tilde{\tau}\in \cT_{x'}$. Since $\tilde\tau$ is in general sub-optimal for $V(x',h')$, we have
\begin{align*}
V(x',h')&\leq\E\left[\int_0^{\tilde{\tau}}e^{-2ru}f(S^{x'}_u,h')\ud u+e^{-2r\tilde{\tau}}M(S^{x'}_{\tilde{\tau}})\right]\\
&\leq\E\left[\int_0^{\tau_1}e^{-2ru}f(S^{x'}_u,h')\ud u\right]+\E\left[e^{-2r\tau_1}M(S^{x'}_{\tau_1})1_{\{\tau_1\leq\tau^{x'}_\cI\}}\right],
\end{align*}
where we used that $f\ge 0$ and $M(S^{x'}_{\tau^{x'}_\cI})=0$ by \eqref{eq:limM}.
Now, by direct comparison we obtain
\begin{align}\label{vlipineq1}
&V(x',h')-V(x,h)\\
&\leq\E\left[\int_0^{\tau_1}e^{-2ru}(f(S^{x'}_u,h')-f(S^{x}_u,h))\ud u \right]\nonumber\\
&\quad+\E\left[e^{-2r\tau_1}\big(M(S^{x'}_{\tau_1})-M(S^x_{\tau_1})\big)1_{\{\tau_1\leq\tau^{x'}_\cI\}}-e^{-2r\tau_1}M(S^{x}_{\tau_1})1_{\{\tau_1>\tau^{x'}_I\}}\right] + \varepsilon\nonumber\\
&\leq\E\left[\int_0^{\tau_1}e^{-2ru}|f(S^{x'}_u,h')-f(S^{x}_u,h)|\ud u \right]+\E\left[e^{-2r\tau_1}|M(S^{x'}_{\tau_1})-M(S^{x}_{\tau_1})|1_{\{\tau_1\leq\tau^{x'}_\cI\}}\right] + \varepsilon.\nonumber
\end{align}
The map $(x,h)\mapsto f(x,h)$ is Lipschitz on $K\times\cH$, with $K\subset\R_+$ any compact, and $x\mapsto M(x)$ is also Lipschitz by $(ii)$ in Proposition \ref{prop4.1}. Since $S^x_{t\wedge\tau_1}\in\closure{\cI}$, for all $t\ge 0$, then 
\begin{equation}\label{eqn:Kab}
S^{x'}_{t\wedge\tau_1}=x'/x S^x_{t\wedge\tau_1}\in[a^2/b,b^2/a]=:K_{a,b}.
\end{equation}
Let $L_1, L_2$ be the Lipschitz constants for $f(x,h)$ on $K_{a,b} \times \cH$ and for $M(x)$ on $\closure{\cI}$, respectively. Then, using the explicit expression of $S_t^x$, we can bound \eqref{vlipineq1} with
\begin{align*}
&V(x',h')-V(x,h)\\
&\leq\E\left[\int_0^{\tau_1}e^{-2ru}L_1(|x-x'|S^1_u+|h-h'|)\ud u \right]
+\E\left[e^{-2r\tau_1}L_2|x-x'|S^1_{\tau_1}\right] + \varepsilon\nonumber\\
&\leq (L_1\vee L_2)(|x-x'|+|h-h'|)\left(1+\int_0^{\infty}e^{-ru}\ud u \right) + \varepsilon, \nonumber
\end{align*}
where we used $\E[e^{-r t}S^1_{t}]= 1$ for any $t \ge 0$. Since $\varepsilon > 0$ is arbitrary, we conclude that $V(x',h')-V(x,h) \le (1 + 1/r) (L_1\vee L_2)(|x-x'|+|h-h'|)$. 
A symmetric argument leads to the reverse inequality and \eqref{eq:Lip} is proven with $L=(1 + 1/r)(L_1\vee L_2)$.
\end{proof}

We note here for future use that
\begin{align}\label{eq:V0}
V(a,h)=V(b,h)=0.
\end{align} 

Thanks to the reduction to a standard Markovian setup we can introduce the continuation and stopping set of problem \eqref{2.5}, denoted respectively by $\cC$ and $\cD$, and defined as
\begin{align}
\label{eq:C}&\cC:=\{(x,h)\in \closure{\cI}\times \cH: V(x,h)<M(x)\},\\
\label{eq:D}&\cD:=\{(x,h)\in \closure{\cI}\times \cH: V(x,h)=M(x)\}.
\end{align}
Obviously, we have $\{a,b\}\times\cH \subset\cD$ due to \eqref{eq:V0}.
It is well known (see, e.g., \cite[Chapter \rom{1}, Corollary 2.9]{peskir2006optimal}) that the minimal optimal stopping time in \eqref{2.5} is 
\begin{equation}\label{eq:tau*}
\tau_{x,h}^*:=\inf\{t\geq 0: (S^x_t,h)\in \cD\}.
\end{equation}
For simplicity, in the rest of the paper we also use the notation $\tau^*_{h}=\tau^*_{x,h}$ under $\P_x$.

The slightly odd aspect of \eqref{eq:tau*} is that the two dimensional process $(S,h)$ is actually constant in its second coordinate. This motivates introducing the sets
\begin{align*}
&\cC_{h}:=\{x\in \closure{\cI}: V(x,h)<M(x)\},\\
&\cD_h:=\{x\in \closure{\cI}: V(x,h)=M(x)\},
\end{align*}
for each $h\in\cH$. In terms of these two sets, the optimal stopping time \eqref{eq:tau*} reads 
\begin{equation}\label{eq:tauh}
\tau^*_{x,h}:=\inf\{t\geq 0: S^x_t\in \cD_{h}\}.
\end{equation}
Since functions $M,V$ are continuous, the sets $\cC$ and $\cC_{h}$ are open whereas $\cD$ and $\cD_{h}$ are closed.  

Finally, letting 
\begin{align}
Y^h_t:= e^{-2r {(t\wedge\tau_\cI)}}V(S_{t\wedge\tau_\cI},h)+\int_0^{t\wedge\tau_\cI}e^{-2rs}f(S_s,h)\ud s
\end{align}
we have that, for any $(x,h)\in \cI \times \cH$, the process $(Y^h_t)_{t\ge 0}$ is a $\P^x$-sub-martingale and 
\begin{align}\label{eq:mart}
\text{the process}\:(Y^h_{t\wedge\tau^*_h})_{t\ge 0}\: \text{is a $\P^x$-martingale}.
\end{align}

\subsection{A free boundary problem}\label{sec2.2}
It is expected that, for each $h\in\cH$, the stopping problem \eqref{2.5} be linked to an obstacle problem
\begin{align}
\label{eq:ObsP}&\min\left\{(\cL -2r)u+f,M-u\right\}(x,h)=0,\quad \text{a.e.}~x\in\cI,\\
\label{eq:ObsP2}&u(a,h)=u(b,h)=0.
\end{align}
This problem can be stated as the following free boundary problem
\begin{align}
\label{eq:FBP1}&(\cL -2r)u(x,h)+f(x,h)=0,\quad x\in\{z \in \cI: u(z,h)<M(z)\},\\
\label{eq:FBP2}&(\cL -2r)u(x,h)+f(x,h)\ge 0,\quad \text{a.e. } x\in\cI,\\
\label{eq:FBP3}&u(x,h) \le M(x), \quad x \in \cI, \qquad u(a,h)=u(b,h)=0.
\end{align}
It is also often postulated that the so-called smooth-pasting condition holds, i.e.,
\begin{align}\label{eq:smooth}
\partial_x u(\cdot,h)=M'(\cdot)\quad\text{on}\quad \partial\{z \in \cI:u(z,h)<M(z)\}.
\end{align}

In the literature on one dimensional optimal stopping problems the obstacle problem \eqref{eq:ObsP} is usually solved in its form \eqref{eq:FBP1}--\eqref{eq:FBP3} by first making an educated guess on the shape of the set $\{z \in \cI:u(z,h)<M(z)\}$ and then by solving the corresponding boundary value problem \eqref{eq:FBP1}. The solution of the resulting ODE can be often computed explicitly and the smooth pasting \eqref{eq:smooth} is used to determine the boundary $\partial\{z \in \cI:u(z,h)<M(z)\}$. The latter normally relies on finding roots of nontrivial algebraic equations. Finally, one verifies \eqref{eq:FBP2}-\eqref{eq:FBP3}. 

Since the payoff function $M(x)$ has a very complicated form, the approach sketched above is infeasible, particularly, the verification of \eqref{eq:FBP2}-\eqref{eq:FBP3} from the smooth-pasting condition. Instead, we follow a mixed probabilistic/analytic approach. In this section we determine the shape of the continuation set, while in Section \ref{sec:2.3} we prove the smoothness of the value function and determine in what sense it solves the obstacle problem \eqref{eq:ObsP}-\eqref{eq:ObsP2}.

It is well-known that one can gain insights into the geometry of the stopping set $\cD$ by studying the sign of the function $G: \cI\times\cH\mapsto \R$ defined as
\begin{equation}
\label{G}
G(x,h):=(\mathcal{L}-2r)M(x)+f(x,h),
\end{equation}
where $\cL M$ is well-defined thanks to Proposition \ref{prop4.1}.
\begin{lemma}
\label{propcalculateG}
For each $h\in \cH$, 
\begin{align}\label{eq:simple}
\{x\in\cI:G(x,h)<0\} \subset \mathcal{C}_h.
\end{align}
\end{lemma}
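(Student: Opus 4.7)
The plan is the standard \emph{local-time} type argument: use It\^o's formula on $e^{-2rt}M(S_t)$ up to a small exit time of a neighborhood where $G(\cdot,h)<0$, and compare the resulting expression with the definition of $V$ to show strict suboptimality of immediate stopping.

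First, fix $h\in\cH$ and $x_0\in\cI$ with $G(x_0,h)<0$. By Proposition \ref{prop4.1}, $M\in C^\infty(\cI)$, and $f(\cdot,h)$ is smooth, so $G(\cdot,h)$ is continuous on $\cI$. Hence there exists $\delta>0$ such that $[x_0-\delta,x_0+\delta]\subset\cI$ and
\begin{equation*}
G(x,h)<0\qquad\text{for all } x\in[x_0-\delta,x_0+\delta].
\end{equation*}
Define the $\mathbb F$-stopping time $\tau_\delta:=\inf\{t\ge 0 : S^{x_0}_t\notin(x_0-\delta,x_0+\delta)\}$. Since $S$ has continuous paths and starts at the interior point $x_0$, we have $\tau_\delta>0$ $\P_{x_0}$-a.s.; moreover $\tau_\delta\le\tau^{x_0}_\cI$, so $\tau_\delta\in\cT_{x_0}$.

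Next, apply It\^o's formula to $e^{-2rt}M(S_t)$ on $[0,\tau_\delta]$, which is legitimate since $M\in C^2$ on the compact $[x_0-\delta,x_0+\delta]$ where $S$ is confined:
\begin{align*}
e^{-2r\tau_\delta}M(S_{\tau_\delta})=M(x_0)+\int_0^{\tau_\delta}e^{-2ru}\bigl[(\cL-2r)M(S_u)\bigr]\ud u+\int_0^{\tau_\delta}e^{-2ru}\sigma S_u M'(S_u)\ud W_u.
\end{align*}
Boundedness of $S_u$ and $M'(S_u)$ on $[0,\tau_\delta]$ makes the stochastic integral a true $\P_{x_0}$-martingale with zero expectation. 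Adding $\int_0^{\tau_\delta}e^{-2ru}f(S_u,h)\ud u$ to both sides and taking $\E_{x_0}$, the definition \eqref{G} of $G$ yields
\begin{equation*}
\E_{x_0}\left[\int_0^{\tau_\delta}e^{-2ru}f(S_u,h)\ud u+e^{-2r\tau_\delta}M(S_{\tau_\delta})\right]=M(x_0)+\E_{x_0}\left[\int_0^{\tau_\delta}e^{-2ru}G(S_u,h)\ud u\right].
\end{equation*}

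Finally, on the event $\{u\le\tau_\delta\}$ the sample path $S_u$ lies in $[x_0-\delta,x_0+\delta]$, so $G(S_u,h)<0$; combined with $\tau_\delta>0$ $\P_{x_0}$-a.s., the right-most expectation is strictly negative. Using $\tau_\delta\in\cT_{x_0}$ as an admissible (generally suboptimal) stopping time in \eqref{2.5} gives
\begin{equation*}
V(x_0,h)\le \E_{x_0}\left[\int_0^{\tau_\delta}e^{-2ru}f(S_u,h)\ud u+e^{-2r\tau_\delta}M(S_{\tau_\delta})\right]<M(x_0),
\end{equation*}
so $x_0\in\cC_h$. This holds for every $x_0$ in $\{x\in\cI:G(x,h)<0\}$, proving \eqref{eq:simple}. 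There is no serious obstacle here; the only care needed is to keep $\tau_\delta$ bounded away from $\partial\cI$ (ensured by our choice of $\delta$) so that we may apply standard It\^o on $M$ without worrying about the boundary behaviour.
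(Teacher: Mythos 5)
Your proof is correct and follows essentially the same route as the paper: apply Dynkin's/It\^o's formula to $e^{-2rt}M(S_t)$ up to a stopping time during which $G(\cdot,h)$ stays negative, then invoke suboptimality of that stopping time in \eqref{2.5} to get $V(x_0,h)<M(x_0)$. The only (immaterial) difference is your choice of localizing time — the exit time of a small interval around $x_0$ rather than the paper's first time $G(S_t,h)\ge 0$ capped at $\tau_\cI$ — both of which keep the integrand strictly negative and make the argument go through.
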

\begin{proof}
The proof of \eqref{eq:simple} is standard but we present arguments for the convenience of the reader. For a fixed $h$, assume there is $\hat{x}\in \cI$ such that $G(\hat{x},h)<0$ and let 
\[
\tau_0:=\inf\{t\geq0 : G(S_t,h)\geq0\}\wedge \tau_\cI.
\]
Then $\tau_0>0$, $\P_{\hat{x}}$-a.s., by continuity of $G$ and $t\mapsto S_t$. Since $M\in C^2(\cI)$ with bounded derivatives (Proposition \ref{prop4.1} $(ii)$), by an application of Dynkin's formula we have
\begin{align*}
V(\hat{x},h)&\leq\E_{\hat{x}}\left[\int_0^{\tau_0}e^{-2ru}f(S_u,h)\ud u+e^{-2r\tau_0}M(S_{\tau_0})\right]\\
      &=\E_{\hat{x}}\left[\int_0^{\tau_0}e^{-2ru}G(S_u,h)\ud u\right]+M(\hat x)<M(\hat{x}),
\end{align*}
hence $\hat x\in\cC_h$.
\end{proof}

The following lemma provides an explicit expression for $G$.
\begin{lemma}
\label{corobdedM}
For all $(x,h)\in\cI\times\cH$ we have
\begin{align}
\label{2.6}
G(x,h)&=\sigma^2x^2\Bigl((h-P'(x))^2 -(\Gamma(x)-P'(x))^2-(\Gamma'(x))^2\gamma_1(x)\Bigr).
\end{align}
\end{lemma}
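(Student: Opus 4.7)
The plan is to exploit the variational characterisation $M(x)=\widehat M(x,\Gamma(x))$ in which $\Gamma$ is the minimiser of $\zeta\mapsto\widehat M(x,\zeta)$. By the envelope property, $\partial_\zeta\widehat M(x,\Gamma(x))=0$ for every $x\in\cI$, so the chain rule collapses $M'(x)$ to $\partial_x\widehat M(x,\Gamma(x))$. This lets us compute $(\cL-2r)M$ by combining a Feynman--Kac identity for $\widehat M(\cdot,\zeta)$ with a single mixed-derivative correction that survives in $M''$.

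First I would record the Feynman--Kac identities on $\cI$:
\begin{align*}
(\cL-2r)\gamma_1(x)&=-\sigma^2 x^2,\\
(\cL-2r)\gamma_2(x)&=-P'(x)\sigma^2 x^2,\\
(\cL-2r)\gamma_3(x)&=-(P'(x))^2\sigma^2 x^2.
\end{align*}
These can be obtained either from the probabilistic representation \eqref{eq:Gamma} via Dynkin's formula (since $P'$ is smooth on $[\am,\infty)\supset\cI$ because $a\ge\am$), or by direct verification on the explicit formulas \eqref{eq:gamma1}--\eqref{eq:gamma3}: the $x^{q_i}$ summands lie in the kernel of $\cL-2r$ and only the non-homogeneous powers contribute. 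Writing $\widehat{\cL}$ for $\cL$ acting in the $x$-variable of $\widehat M(x,\zeta)$, summing the three identities with weights $\zeta^2,-2\zeta,1$ gives $(\widehat{\cL}-2r)\widehat M(x,\zeta)=-\sigma^2 x^2(\zeta-P'(x))^2=-f(x,\zeta)$.

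Next I would compute $(\cL-2r)M$ via the chain rule. Since $\partial_\zeta\widehat M(x,\Gamma(x))=0$, one gets $M'(x)=\partial_x\widehat M(x,\Gamma(x))$ at once, and differentiating once more yields
\[
M''(x)=\partial_{xx}\widehat M(x,\Gamma(x))+\partial_{x\zeta}\widehat M(x,\Gamma(x))\,\Gamma'(x).
\]
From $\partial_{x\zeta}\widehat M(x,\zeta)=2\zeta\gamma_1'(x)-2\gamma_2'(x)$ and the first-order condition $\gamma_2=\Gamma\gamma_1$ (so $\gamma_2'=\Gamma'\gamma_1+\Gamma\gamma_1'$), this mixed derivative evaluates to $-2\Gamma'(x)\gamma_1(x)$ at $\zeta=\Gamma(x)$. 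Therefore
\begin{align*}
(\cL-2r)M(x)&=(\widehat{\cL}-2r)\widehat M(x,\Gamma(x))+\tfrac{\sigma^2}{2}x^2\bigl(-2\Gamma'(x)\gamma_1(x)\bigr)\Gamma'(x)\\
&=-\sigma^2 x^2(\Gamma(x)-P'(x))^2-\sigma^2 x^2(\Gamma'(x))^2\gamma_1(x),
\end{align*}
and adding $f(x,h)=\sigma^2 x^2(h-P'(x))^2$ yields \eqref{2.6}. The only delicate point is the chain-rule bookkeeping: the envelope identity kills the $\partial_\zeta\widehat M\cdot\Gamma'$ contribution in $M'$, but the mixed derivative $\partial_{x\zeta}\widehat M\cdot\Gamma'$ survives in $M''$, and it is precisely this surviving term that produces the $(\Gamma')^2\gamma_1$ summand in $G$.
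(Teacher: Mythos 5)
Your proof is correct and follows essentially the same route as the paper: both rest on the Feynman--Kac identities $(\cL-2r)\gamma_i=-g_i$, the first-order condition $\gamma_2=\Gamma\gamma_1$ (equivalently $\Gamma\gamma_1'-\gamma_2'=-\Gamma'\gamma_1$), and a chain-rule expansion of $(\cL-2r)M$; your envelope/mixed-derivative bookkeeping is just a cleaner repackaging of the paper's display \eqref{G1}, whose term $\sigma^2x^2\Gamma'(\Gamma\gamma_1'-\gamma_2')$ is exactly your $\tfrac{\sigma^2}{2}x^2\,\partial_{x\zeta}\widehat M\cdot\Gamma'$.
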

\begin{proof}
Using \eqref{Mprobability}, we obtain
\begin{align}
\label{G1}
G(x,h)&=(\Gamma(x))^2(\mathcal{L}\gamma_1-2r\gamma_1)(x)-2\Gamma(x)(\mathcal{L}\gamma_2-2r\gamma_2)(x)+(\mathcal{L}\gamma_3-2r\gamma_3)(x)\\
&\quad+\sigma^2 x^2\Gamma'(x)(\Gamma(x)\gamma_1'(x)-\gamma_2'(x))+\sigma^2x^2(h-P'(x))^2.\nonumber
\end{align}
Recall the probabilistic expressions for $\gamma_1$, $\gamma_2$ and $\gamma_3$ given in \eqref{2.4} and \eqref{eq:gamma3}. Hence, for $i=1,2,3$,
\begin{equation}\label{eq:Lgamma}
(\mathcal{L}-2r)\gamma_i(x)=-g_i(x),\quad\text{on $\cI$,}
\end{equation} 
where $g_1(x)=\sigma^2x^2$, $g_2(x)=\sigma^2 x^2 P'(x)$ and $g_3(x)=\sigma^2 x^2(P'(x))^2$.
Furthermore,
\begin{align}\label{eq:GG}
\Gamma(x)\gamma_1'(x)-\gamma_2'(x)=-\Gamma'(x)\gamma_1(x),
\end{align}
since $\Gamma(x)=\gamma_2(x)/\gamma_1(x)$. Finally, inserting \eqref{eq:Lgamma} and \eqref{eq:GG} into \eqref{G1} yields \eqref{2.6}.
\end{proof}

Next we proceed to prove that the continuation and the stopping sets have non-empty intersection with $\cI$ (recall that $\{a, b\} \in \cD_h$). For any $h\in \cH$, it is convenient to define $x_p(h)\in \closure{\cI}$ as the unique root of the equation
$P'(x)-h=0$, that is, 
\begin{equation}
\label{2.7}
x_p(h)=\am(-h)^{-\frac{1}{1+d}}.
\end{equation}
Notice that for $h\in\text{int}(\cH)$ we have $x_p(h)\in\cI$.

\begin{proposition}
\label{prop4.202}
For each $h\in \cH$, we have $\cD_h\cap\cI\neq\varnothing$ and $\mathcal{C}_h\not=\varnothing$.
\end{proposition}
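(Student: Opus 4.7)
The plan is to establish the two claims separately: non-emptiness of $\cC_h$ follows from the local analysis of the sign of $G$ via Lemma \ref{propcalculateG}, while non-emptiness of $\cD_h\cap\cI$ follows from a contradiction argument exploiting the strict convexity of $\widehat M(x,\cdot)$ and the strict monotonicity of $\Gamma$.

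For $\cC_h\neq\varnothing$, I would use the explicit formula
\[
G(x,h)=\sigma^2 x^2\Bigl((h-P'(x))^2-(\Gamma(x)-P'(x))^2-(\Gamma'(x))^2\gamma_1(x)\Bigr)
\]
from Lemma \ref{corobdedM} and look at values of $x$ where the first (non-negative) term is small relative to the other two. If $h\in\mathrm{int}(\cH)$, then $x_p(h)\in\cI$ defined by \eqref{2.7} satisfies $P'(x_p(h))=h$, so the first summand vanishes. Since $\gamma_1>0$, $\Gamma'>0$ on $\cI$ by Proposition \ref{prop4.1}(i), the remaining summands yield $G(x_p(h),h)<0$. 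In the boundary cases $h=P'(a)$ or $h=P'(b)$, we let $x\downarrow a$ or $x\uparrow b$, respectively, and use $\Gamma(a)>P'(a)$ and $\Gamma(b)<P'(b)$ from \eqref{eq:limG}, so that $(\Gamma(x)-P'(x))^2$ is bounded away from $0$ near the relevant endpoint while $(h-P'(x))^2\to 0$ by continuity of $P'$; hence $G(x,h)<0$ for $x$ close enough to that endpoint. In either case Lemma \ref{propcalculateG} gives a point of $\cI$ that lies in $\cC_h$.

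For $\cD_h\cap\cI\neq\varnothing$, I would argue by contradiction: suppose $\cD_h\cap\cI=\varnothing$, so that $\cD_h=\{a,b\}$ and, by \eqref{eq:tauh}, the minimal optimal stopping time equals $\tau_\cI$. Then from \eqref{2.5} and \eqref{eq:limM} one reads
\[
V(x,h)=\E_x\!\left[\int_0^{\tau_\cI}e^{-2ru}f(S_u,h)\,\ud u\right]=\widehat M(x,h),\qquad x\in\cI.
\]
On the other hand $V(x,h)\le M(x)=\widehat M(x,\Gamma(x))$ for every $x\in\cI$ by taking $\tau=0$. Combining the two inequalities yields $\widehat M(x,h)\le \widehat M(x,\Gamma(x))$ on $\cI$. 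Since $\gamma_1(x)>0$ on $\cI$, the quadratic $\zeta\mapsto \widehat M(x,\zeta)$ from \eqref{2.3} is strictly convex with unique minimiser $\Gamma(x)$, which forces $h=\Gamma(x)$ for every $x\in\cI$.

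This contradicts the strict monotonicity of $\Gamma$ on $\cI$ (Proposition \ref{prop4.1}(i)): a strictly increasing function cannot be constantly equal to $h$ on an interval of positive length. Therefore the assumption $\cD_h\cap\cI=\varnothing$ must fail, and the proof is complete. The only subtle point, which I expect to be the main obstacle, is handling the boundary values $h\in\{P'(a),P'(b)\}$ in the proof of $\cC_h\neq\varnothing$, where $x_p(h)\in\partial\cI$ and the natural candidate point is only reached by a one-sided limit; here the strict inequalities in \eqref{eq:limG} from Proposition \ref{prop4.1}(i) are essential.
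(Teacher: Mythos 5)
Your proposal is correct and follows essentially the same route as the paper: non-emptiness of $\cC_h$ via the sign of $G$ at $x_p(h)$ (with the limiting argument near $a$ or $b$ for the endpoint values of $h$, using \eqref{eq:limG}), and non-emptiness of $\cD_h\cap\cI$ by contradiction from $\cC_h=\cI$. The only (harmless) difference is in the final step of the second part: the paper reaches the contradiction at a single point by combining the strict inequality $V(x,h)<M(x)$ on $\cC_h$ with $\widehat M(x,h)\ge \inf_{l}\widehat M(x,l)=M(x)$, whereas you use the non-strict $V\le M$, deduce $h=\Gamma(x)$ for all $x\in\cI$, and then invoke the strict monotonicity of $\Gamma$; both are valid.
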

\begin{proof}
First consider $h\in (P'(a),P'(b))$. Then $x_p(h)\in \cI$ and it is immediate to see from \eqref{2.6} that $G(x_p(h),h)<0$. Hence, \eqref{eq:simple} implies $\cC_h\neq\varnothing$. If $h=P'(a)$ then the expression \eqref{2.6} and the fact that $\Gamma(a) > P'(a)$ (Proposition \ref{prop4.1}) imply $G(a, P'(a)) < 0$. By the continuity of $G$, there is $x \in \cI$ with $G(x, P'(a)) < 0$ and an application of \eqref{eq:simple} gives $\cC_h\neq\varnothing$. A similar argument applies for $h = P'(b)$.
 
Assume now that $\cD_h\setminus\{a,b\}=\varnothing$ so that $\mathcal{C}_h=\cI$. Then for any $x\in\cI$ we have 
\begin{align*}
M(x)>V(x,h)=\E_{x}\left[\int_{0}^{\tau_\cI}e^{-2ru}f(S_u,h)\ud u\right] = \widehat M(x, h)
\ge \inf_{l\in \cH} \widehat{M}(x,l) = M(x), 
\end{align*}
hence a contradiction.
\end{proof}

The subsequent analysis will show that the roots of the map $x\mapsto G(x,h)$ for each $h\in\cH$ determine the shape of the continuation and the stopping sets. Due to the complexity of the expression for $G$, it seems very hard to determine analytically the exact number of zeros of the map $x\mapsto G(x,h)$. However, the exercise is trivial from a numerical point of view, thanks to the fully explicit expression in \eqref{2.6}. We performed extensive numerical tests and observed only three possible situations displayed in Figure \ref{p2.0}. It will also follow from the proof of Proposition \ref{prop:Gzeros2} that the map $x\mapsto G(x,h)$ has at least one root if $h\in[P'(a),\Gamma(a)]\cup [\Gamma(b),P'(b)]$ and it has at least two roots if $h\in(\Gamma(a),\Gamma(b))$. The following assumption provides a necessary ingredient to determine the exact number of zeros of $G$ and the shape of the stopping set.

\begin{assumption}\label{ass:zeros1}
For each $h\in \cH $, the equation $G(\,\cdot\,,h)=0$ has at most two roots in $\cI$.
\end{assumption}
\begin{figure}[tb]
\centering
\includegraphics[width=7cm]{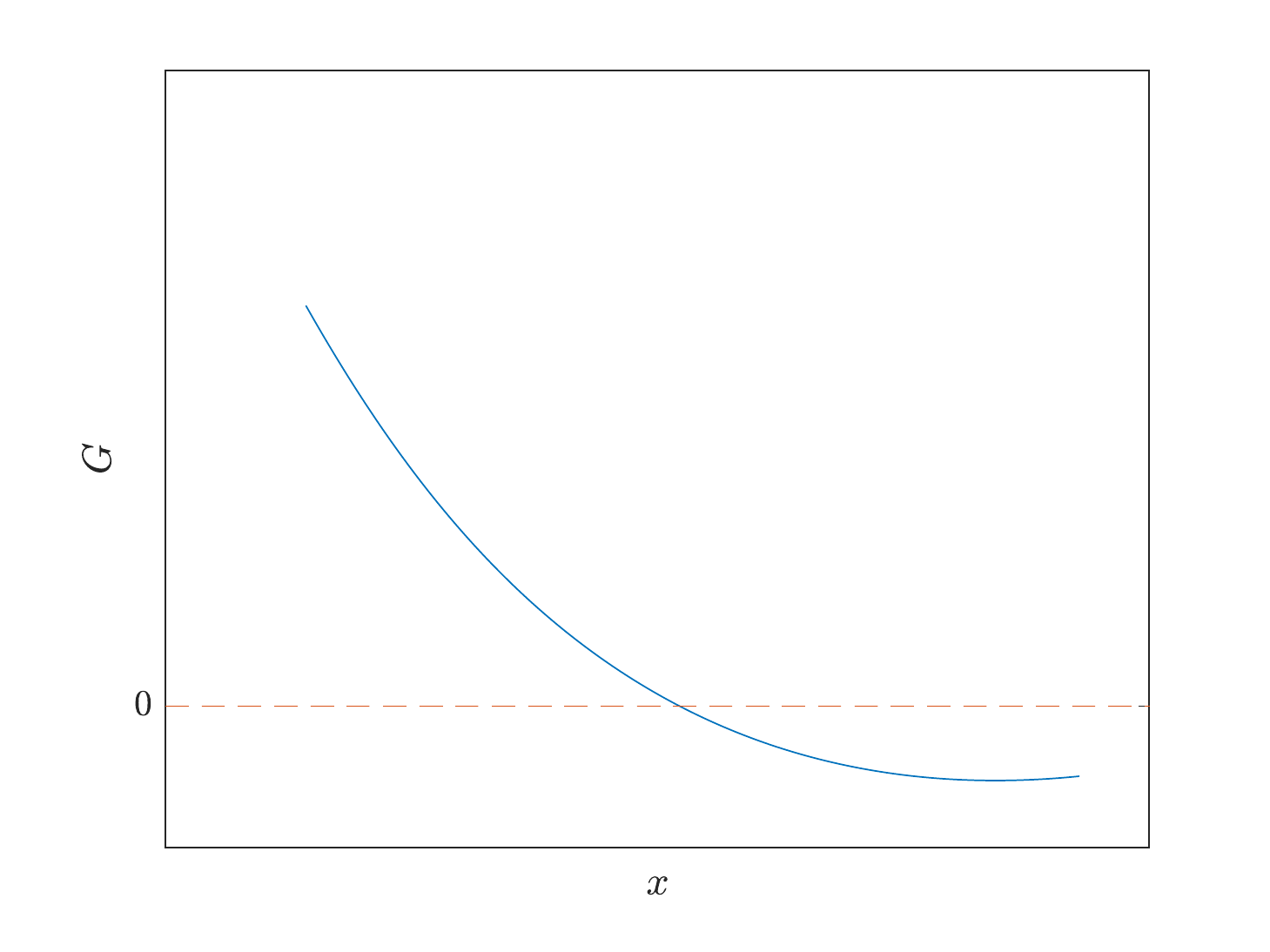}
\includegraphics[width=7cm]{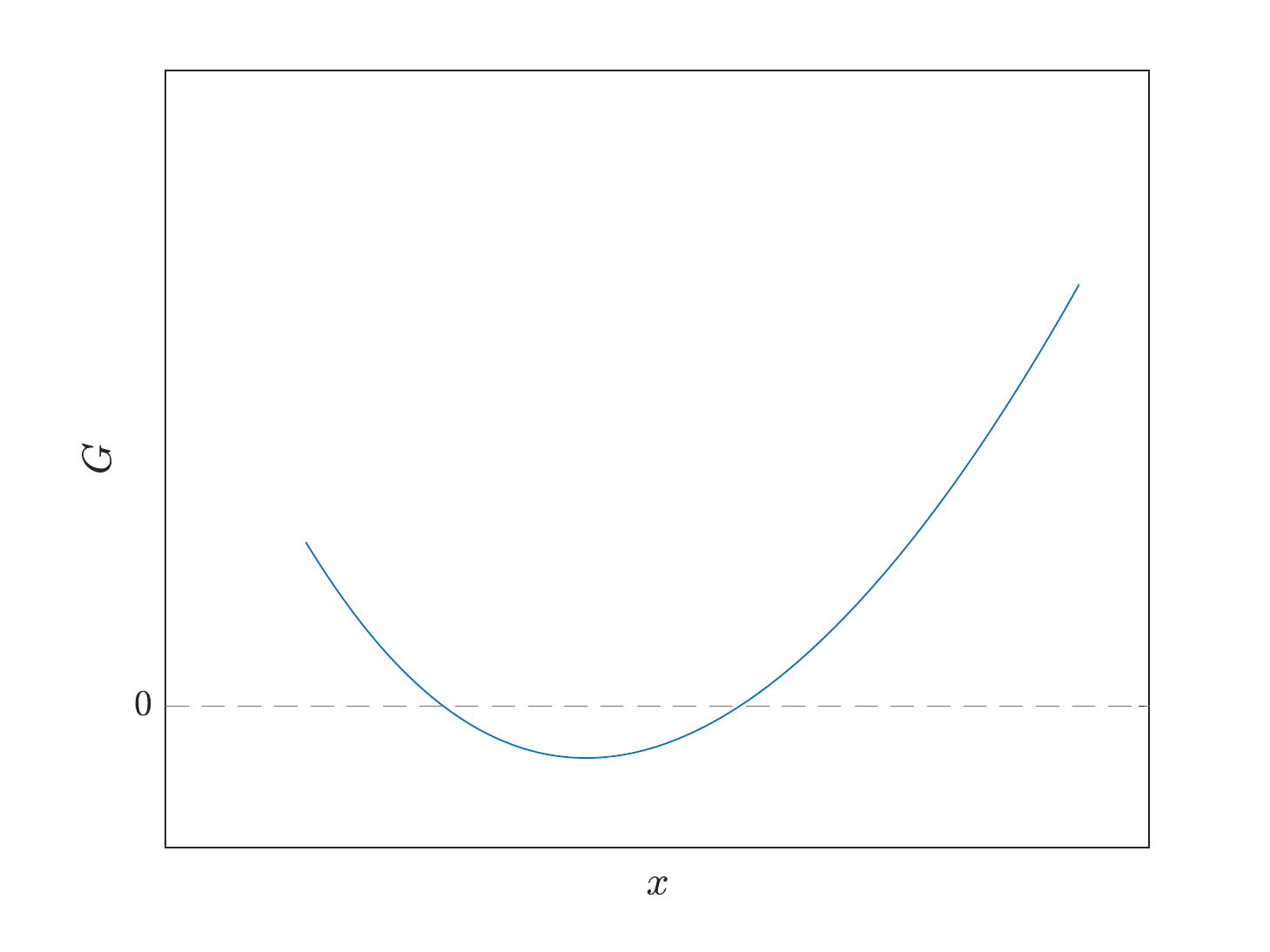}
\includegraphics[width=7cm]{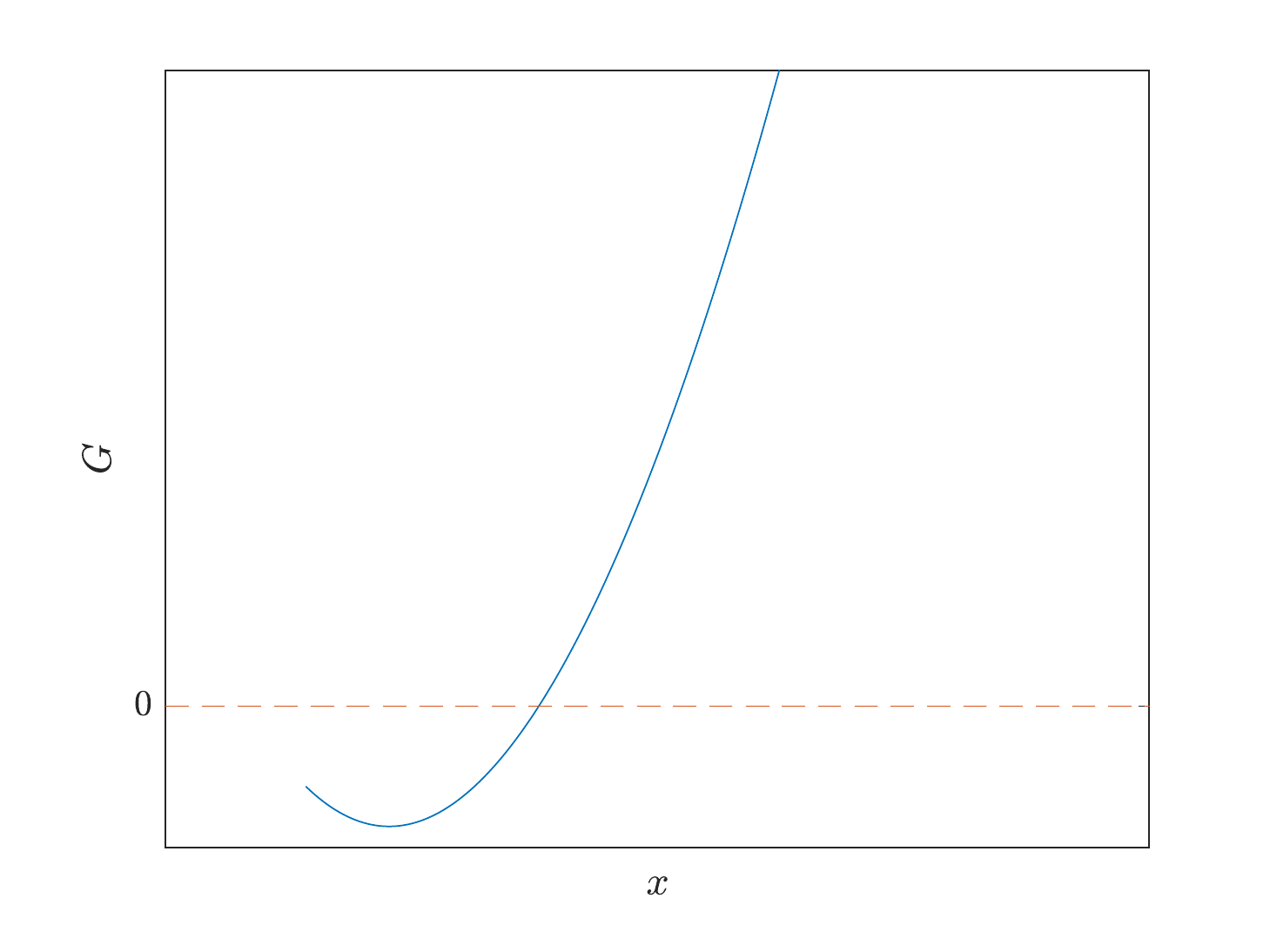}
\caption{Plots of the map $x\mapsto G(x,h)$ for different values of the initial stock holding $h$ using parameters $r=3\%$, $\sigma=30\%$, $K=100$, $b=150$, and $a = \am = K/(1+d^{-1})=40$.} 
\label{p2.0}
\end{figure} 

Denoting the roots of $G(\,\cdot\,,h)=0$ on $\cI$ by $x_{G_1}$ and $x_{G_2}$ (when they both exist) consider the following three cases:
\begin{itemize}
\item[(A.1)] $G(x,h)>0$ for $x\!\in(a,x_{G_1})$ and $G(x,h)<0$ for $x\!\in\!(x_{G_1},b)$, except possibly at $x_{G_2}$;
\item[(A.2)] $G(x,h)>0$ for $x\in(a,x_{G_1})\cup(x_{G_2},b)$ and $G(x,h)<0$ for $x\in(x_{G_1},x_{G_2})$;
\item[(A.3)] $G(x,h)<0$ for $x\in(a,x_{G_1})$ and $G(x,h)>0$ for $x\in(x_{G_1},b)$, except possibly at $x_{G_2}$.
\end{itemize}
\begin{remark}
In (A.1) we mean that, if $G(\,\cdot\,,h)$ has two roots then it must be $\partial_x{G}(x_{G_2},h)=0$. The root $x_{G_2}$ may be on the right or on the left of $x_{G_1}$. An analogous rationale holds in (A.3).
\end{remark}
It turns out that the above cases (A.1)-(A.3) are uniquely linked to the choice of the initial stock holding $h$ as the following proposition demonstrates.
\begin{proposition}
\label{prop:Gzeros2}
Under Assumption \ref{ass:zeros1}, we have:
\begin{itemize}
\item[(i)] Condition (A.1) holds if and only if $h\in[\Gamma(b),P'(b)]$;
\item[(ii)] Condition (A.2) holds if and only if $h\in(\Gamma(a),\Gamma(b))$;
\item[(iii)] Condition (A.3) holds if and only if $h\in[P'(a),\Gamma(a)]$.
\end{itemize}
\end{proposition}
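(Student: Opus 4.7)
The plan is to exploit the factorisation from Lemma \ref{corobdedM},
\[
G(x,h) = \sigma^2 x^2\bigl((h-P'(x))^2 - \Phi(x)\bigr), \qquad \Phi(x):=(\Gamma(x)-P'(x))^2 + (\Gamma'(x))^2\gamma_1(x),
\]
noting that $\Phi>0$ on $\cI$ because $\Gamma'>0$ there (Proposition \ref{prop4.1}(i)) and $\gamma_1>0$ on $\cI$. First I would extract the boundary values of $G$: since $\gamma_1(a)=\gamma_1(b)=0$, a direct computation gives
\[
G(a,h) = \sigma^2 a^2(h-\Gamma(a))(h-2P'(a)+\Gamma(a)), \quad G(b,h) = \sigma^2 b^2(h-\Gamma(b))(h-2P'(b)+\Gamma(b)).
\]
Combining $h\in[P'(a),P'(b)]$ with $\Gamma(a)>P'(a)$ and $\Gamma(b)<P'(b)$ from Proposition \ref{prop4.1}(i), the second factor is strictly positive at $a$ and strictly negative at $b$, so
\[
\mathrm{sign}\,G(a,h)=\mathrm{sign}(h-\Gamma(a)),\qquad\mathrm{sign}\,G(b,h)=\mathrm{sign}(\Gamma(b)-h).
\]
Moreover, whenever $h\in(P'(a),P'(b))$ the point $x_p(h)\in\cI$ satisfies $G(x_p(h),h)=-\sigma^2 x_p(h)^2\Phi(x_p(h))<0$.

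The ``only if'' direction then follows immediately from the descriptions (A.1)--(A.3) by reading off the forced signs of $G$ at $a$ and $b$ and identifying the corresponding ranges of $h$. For the ``if'' direction I would treat each open subinterval separately. If $h\in(\Gamma(a),\Gamma(b))$, then $G(a,h)>0$, $G(b,h)>0$ and $G(x_p(h),h)<0$, so the intermediate value theorem yields at least two zeros of $G(\cdot,h)$ in $\cI$; Assumption \ref{ass:zeros1} forces exactly two, and both must be genuine sign changes (otherwise the number of sign changes on $\cI$ would be even, incompatible with $G$ starting positive, becoming negative at $x_p(h)$, and ending positive). This is precisely configuration (A.2). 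If $h\in(\Gamma(b),P'(b)]$, then $G(a,h)>0$ while $G(b,h)<0$ (in the marginal case $h=P'(b)$, where $x_p(h)=b$, a direct inspection shows $G<0$ on a left neighbourhood of $b$). Hence there is at least one zero in $\cI$ and at most two by Assumption \ref{ass:zeros1}. One is a sign change from $+$ to $-$, and any second zero must be a touching (even-multiplicity) root, because two distinct sign changes would leave $G(b,h)>0$. This is exactly (A.1), and case (iii) is symmetric.

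The delicate point is the values $h=\Gamma(b)$ and $h=\Gamma(a)$, where $G$ vanishes at $b$ or $a$ and the previous counting argument does not immediately apply. For $h=\Gamma(b)$ I would Taylor-expand $G(\cdot,\Gamma(b))$ at $x=b^-$ to first order. Using Proposition \ref{prop4.1} together with $\gamma_1'(b)<0$ (which follows, e.g., from evaluating the ODE $(\cL-2r)\gamma_1=-\sigma^2 x^2$ at $b$ combined with $\gamma_1(b)=0$ and $\gamma_1>0$ on $\cI$, which rules out $\gamma_1'(b)=0$), the leading coefficient becomes
\[
-\bigl[2(P'(b)-\Gamma(b))\Gamma'(b) + (\Gamma'(b))^2\,|\gamma_1'(b)|\bigr]<0,
\]
so $G(x,\Gamma(b))<0$ immediately to the left of $b$. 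The previous counting argument then produces (A.1). The case $h=\Gamma(a)$ is fully analogous, using $\gamma_1'(a)>0$ to deduce that $G(x,\Gamma(a))<0$ just to the right of $a$, which gives (A.3). I expect this Taylor expansion, and in particular the unambiguous sign of its leading coefficient, to be the main technical hurdle; once settled, the whole proposition assembles cleanly from the intermediate value theorem and Assumption \ref{ass:zeros1}.
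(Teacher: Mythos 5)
Your proposal is correct and follows essentially the same route as the paper: the boundary factorisation $G(a,h)=\sigma^2a^2(h-\Gamma(a))(h+\Gamma(a)-2P'(a))$ (and its analogue at $b$), the negativity of $G$ at $x_p(h)$, and the combination of continuity of $G$ with Assumption \ref{ass:zeros1} are exactly the paper's ingredients. The only divergence is in the marginal cases $h=\Gamma(b)$ (resp.\ $h=\Gamma(a)$): where you Taylor-expand to first order and therefore need $\gamma_1'(b)<0$ and $\Gamma'(b)>0$, the paper simply rewrites $G(x,\Gamma(b))=\sigma^2x^2\bigl((\Gamma(b)-\Gamma(x))(\Gamma(b)+\Gamma(x)-2P'(x))-(\Gamma'(x))^2\gamma_1(x)\bigr)$ and reads off the sign near $b$ from the monotonicity of $\Gamma$, \eqref{eq:limG} and $\gamma_1\ge 0$ --- a zeroth-order argument that avoids differentiating $\gamma_1$ at the boundary; your first-order version is also valid, since the sign of your leading coefficient checks out.
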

\begin{proof}
Assume (A.1) and $h \in \cH$. Then $G(b,h) \le 0$. Using $\gamma_1(b) = 0$ in \eqref{2.6}, we obtain
\[
(h-P'(b))^2 -(\Gamma(b)-P'(b))^2 \le 0,
\]
which yields $h \ge \Gamma(b)$ and completes the proof of the right implication in (i). 

Consider now $h\in[\Gamma(b),P'(b)]$. Directly from \eqref{2.6} we calculate $G(a,h) > 0$ since $h > \Gamma(a) > P'(a)$ and $\gamma_1(a) = 0$. For $h > \Gamma(b)$ we have $G(b,h) < 0$, which combined with Assumption \ref{ass:zeros1} and the continuity of $G$ proves (A.1). For $h=\Gamma(b)$ we have to use a different argument because $G(b, \Gamma(b)) = 0$. Rewriting \eqref{2.6} yields
\[
G(x,\Gamma(b))=\sigma^2x^2\Bigl((\Gamma(b)-\Gamma(x))(\Gamma(b)+\Gamma(x) - 2P'(x))-(\Gamma'(x))^2\gamma_1(x)\Bigr).
\]
The last term in the bracket is non-positive. We have $\Gamma(b)+\Gamma(x) - 2P'(x) < 0$ for $x \in \cI$ sufficiently close to $b$, and, $\Gamma(b)-\Gamma(x) > 0$ by the monotonicity of $\Gamma$. Hence, $G(x, \Gamma(b)) < 0$ for $x \in \cI$ sufficiently close to $b$, which immediately proves (A.1).

Assume now (A.2). Using arguments from the beginning of the proof, $G(b,h) > 0$ implies $h < \Gamma(b)$. Analogously, $G(a, h) > 0$ implies $h > \Gamma(a)$. For the left implication in (ii), we note that $G(x_p(h), h) < 0$ for $h \in \operatorname{int}(\cH)$. The sign of $G(x, h)$ at $x \in \{a, b\}$ is determined by the sign of
\[
(h-P'(x))^2-(\Gamma(x)-P'(x))^2=(h-\Gamma(x))(h+\Gamma(x)-2P'(x)).
\] 
Recalling that $P'(a) < \Gamma(a) < \Gamma(b) < P'(b)$ (c.f. \eqref{eq:limG}), we have $G(a, h) > 0$ and $G(b, h) > 0$ for $h \in (\Gamma(a), \Gamma(b))$. As above, the continuity of $G$ and Assumption \ref{ass:zeros1} completes the proof of the left implication in (ii).

The proof of (iii) is analogous to (i).
\end{proof}

In light of the above proposition, we will refer to conditions (A.1)--(A.3) as determining the ranges of $h$ as well as the zeros of $G(x,h)$. We now show that they are sufficient to determine shapes of the continuation and the stopping sets $\cC_h$ and $\cD_h$.
\begin{proposition}
\label{prop4.33}
Let Assumption \ref{ass:zeros1} hold and take $h\in\cH$. Then we have 
\begin{enumerate}
\item[(i)] under (A.1) there is $x^*_1\in (a, x_{G_1}]$ such that $\cC_{h}=(x^{*}_1,b)$; 
\item[(ii)] under (A.2) there exist $x^*_1 \in [a,x_{G_1}]$ and $x^*_2 \in [x_{G_2}, b]$ such that $\cC_{h}=(x^{*}_1,x^{*}_2)$. Moreover, at least one of $x^*_1, x^*_2$ is in $\cI$;  
\item[(iii)] under (A.3) there is $x^*_2\in [x_{G_1}, b)$ such that $\cC_{h}=(a,x^{*}_2)$.
\end{enumerate}
\end{proposition}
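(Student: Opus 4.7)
The plan is to pin down the shape of $\cC_h$ by combining three ingredients: Lemma \ref{propcalculateG}; a refinement of it which also handles points at which $G(\cdot,h)$ vanishes while remaining non-positive on a whole neighborhood; and a ``component lemma'' which forces every connected component of $\cC_h$ to intersect $\{G(\cdot,h)<0\}$ on a set of positive Lebesgue measure.

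For the refinement, suppose $\hat x\in\cI$ admits an open neighborhood $U\subset\cI$ on which $G(\cdot,h)\le 0$ with strict inequality off a Lebesgue-null set. The stopping time used in the proof of Lemma \ref{propcalculateG} degenerates here because $G(\hat x,h)=0$, but one can instead take $\tau_U:=\inf\{t\ge 0:S_t\notin U\}\wedge\tau_\cI$, which is $\P_{\hat x}$-a.s.\ strictly positive. Using $\tau_U$ as a sub-optimal strategy and applying Dynkin's formula to $M$, valid by Proposition \ref{prop4.1}$(ii)$, one obtains
\[
V(\hat x,h)\le \E_{\hat x}\Bigl[\int_0^{\tau_U}e^{-2rs}f(S_s,h)\,ds+e^{-2r\tau_U}M(S_{\tau_U})\Bigr]= M(\hat x)+\E_{\hat x}\Bigl[\int_0^{\tau_U}e^{-2rs}G(S_s,h)\,ds\Bigr]<M(\hat x),
\]
with the final strict inequality coming from the fact that $(S_t)_{t\ge 0}$ has a transition density on $U$ and therefore visits $\{G<0\}\cap U$ for a set of times of positive Lebesgue measure with positive probability. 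Hence $\hat x\in\cC_h$.

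The component lemma is proved by an analogous Dynkin computation. If $(\alpha,\beta)\subset\cI$ is a connected component of $\cC_h$, then $\alpha,\beta\in\cD_h\cup\{a,b\}$, so starting from $x\in(\alpha,\beta)$ the minimal optimal stopping time $\tau^*_{x,h}$ coincides with the exit time $\tau_{\alpha,\beta}$ of $(\alpha,\beta)$. Plugging $\tau^*_{x,h}=\tau_{\alpha,\beta}$ into the optimal representation of $V(x,h)$ from \eqref{2.5} and applying Dynkin's formula to $M$ gives
\[
M(x)-V(x,h)=-\E_x\Bigl[\int_0^{\tau_{\alpha,\beta}}e^{-2rs}G(S_s,h)\,ds\Bigr];
\]
since the left-hand side is strictly positive on $\cC_h$, the set $\{G(\cdot,h)<0\}\cap(\alpha,\beta)$ must have positive Lebesgue measure.

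It remains to specialise to the three cases. Under (A.1), Lemma \ref{propcalculateG} puts $(x_{G_1},b)\setminus\{x_{G_2}\}$ in $\cC_h$, and if $x_{G_2}$ lies in $(x_{G_1},b)$ the refinement puts it in $\cC_h$ as well; hence $(x_{G_1},b)\subset\cC_h$. Let $(\alpha^*,b)$ be the component of $\cC_h$ containing this interval. Any further component of $\cC_h$ would, by the component lemma, have to intersect $\{G(\cdot,h)<0\}$, but under (A.1) that set is contained in $(x_{G_1},b)\subset(\alpha^*,b)$; hence $\cC_h=(\alpha^*,b)$, and $x^*_1:=\alpha^*\in[a,x_{G_1}]$ with $x^*_1=a$ excluded by Proposition \ref{prop4.202}. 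Case (iii) is symmetric. For (ii), Lemma \ref{propcalculateG} gives $(x_{G_1},x_{G_2})\subset\cC_h$, and the component lemma again forces $\cC_h$ to be a single interval $(x^*_1,x^*_2)$ with $x^*_1\in[a,x_{G_1}]$ and $x^*_2\in[x_{G_2},b]$; Proposition \ref{prop4.202} excludes $\cC_h=\cI$, so at least one of $x^*_1,x^*_2$ is interior. The main subtlety is the component lemma's Dynkin step when $\alpha=a$ or $\beta=b$: this is covered by the boundedness of $M'$ and $M''$ up to $\partial\cI$ (Proposition \ref{prop4.1}$(ii)$) together with $\tau_{\alpha,\beta}\le\tau_\cI$.
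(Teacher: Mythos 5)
Your proof is correct and follows essentially the same route as the paper's: Lemma \ref{propcalculateG} plus the small-exit-time argument for the isolated zero $x_{G_2}$, and then a Dynkin-formula computation showing $V\ge M$ wherever the occupation measure only sees $\{G\ge 0\}$. Your ``component lemma'' is just the contrapositive packaging of the paper's step that a point of $\cD_h$ in the region where $G\ge0$ propagates to the whole adjacent interval, so the two arguments coincide in substance.
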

\begin{proof}
We only give a full proof of $(iii)$ as the other claims follow by analogous arguments. Assume (A.3) and that the root $x_{G_2}$ exists and is smaller than $x_{G_1}$. Inclusion \eqref{eq:simple} implies $\cD_h\cap\cI\subseteq \{x_{G_2}\}\cup[x_{G_1},b)$. We will show that $x_{G_2}\notin \cD_h$. Indeed, for a small $\eps>0$, let 
\[
\tau_\eps:=\inf\{t\geq0: S_t\notin (x_{G_2}-\eps,x_{G_2}+\eps)\}.
\]
Since $\tau_\eps$ is sub-optimal for $V(x_{G_2},h)$, we have
\begin{align*}
V(x_{G_2},h)&\leq\E_{x_{G_2}}\left[\int_0^{\tau_\eps}e^{-2ru}f(S_u,h)\ud u+e^{-2r\tau_\eps}M(S_{\tau_\eps})\right]\\
&=\E_{x_{G_2}}\left[\int_0^{\tau_\eps}e^{-2ru}G(S_u,h)\ud u\right]+M(x_{G_2})< M(x_{G_2}),
\end{align*}
where the equality is an application of Dynkin formula for $M(S_{\tau_\eps})$ and the final strict inequality holds because, under (A.3), we have $G(x,h)<0$ on $(x_{G_2}-\eps,x_{G_2}+\eps)\setminus\{x_{G_2}\}$ for a sufficiently small $\eps$ and $G(x_{G_2},h)=0$. This shows that $x_{G_2}\notin \cD_h$ and therefore $\cD_h\cap\cI\subseteq[x_{G_1},b)$. The latter inclusion trivially holds if $x_{G_2} > x_{G_1}$ or when the second root $x_{G_2}$ does not exist.

Next we show that if $x_0\in[x_{G_1},b)$ and $x_0\in \cD_h$, then $[x_0,b]\subseteq\cD_h$.
Arguing by contradiction, assume there exists such an $x_0$ and an open set $U\subset (x_0,b)$ such that $U\subset \cC_h$. For any $x\in U$, we have 
\[
\tau^*_{x,h}\le \inf\{t\geq0: S^x_t\le x_0\}, \quad\P-a.s. 
\]
Applying Dynkin formula, we obtain
\begin{align*}
V(x,h)&=\E_{x}\left[\int_0^{\tau^*_h}e^{-2ru}f(S_u,h)\ud u+e^{-2r\tau^*_h}M(S_{\tau^*_h})\right]\\
&=\E_x\left[\int_0^{\tau_h^*}e^{-2ru}G(S_u,h)\ud u\right]+M(x)\ge M(x),  
\end{align*}
where the final inequality is due to $G(x,h)\ge 0$ on $(x_0,b)$. Hence a contradiction. Notice that the existence of $x_0\in [x_{G_1},b)$ such that $x_0\in \cD_h$ is guaranteed by $\cD_h \cap \cI \ne \varnothing$ (Proposition \ref{prop4.202}).
\end{proof}

The above proposition shows that under (A.1) and (A.3) the shape of the stopping set is unambiguously determined. Only under (A.2), the set $\cD_h \cap \cI$ may have one or two connected components, depending on the choice of the parameters in the problem.

\subsection{Solution of the free boundary problem}\label{sec:2.3}
Showing that the value function $V$ is a solution to the free boundary problem \eqref{eq:FBP1}-\eqref{eq:FBP3} is relatively easy. However, this provides little value unless one can further ascertain uniqueness. This is done via a verification argument, which typically requires smooth pasting across stopping boundaries. Smooth pasting is also required for efficient calculation of stopping boundaries via a solution of algebraic equations, see Subsection \ref{subsec:algebraic_eqn}. In this section we first show that the value function $V$ of \eqref{2.5} satisfies $V(\,\cdot\,,h)\in C^1(\cI)$ for each $h\in\cH$ (i.e., smooth pasting), then we use this fact to prove that $V$ solves \eqref{eq:FBP1}-\eqref{eq:FBP3} uniquely (Theorem \ref{thm:VI}).

We can immediately claim that $V(\,\cdot\,,h)\in C^2(\cI\setminus\partial\cC_h)$. Indeed, on $\cD_h\setminus\partial\cC_h$, $V=M$, so the result is trivial by $(ii)$ in Proposition \ref{prop4.1}. Instead, on $\cC_h$, the result follows by \eqref{eq:mart} and a standard argument \cite[Chapter 4.2]{karatzas2012brownian} (see also \cite[Chapter III, Section 7]{peskir2006optimal}). Hence, for any $h\in\cH$, $V$ is a classical solution of
\begin{align}\label{eq:FBPV}
&(\cL-2r)V(x,h)=-f(x,h),\qquad x\in\cC_h,
\end{align}
with the boundary condition $V(x,h)=M(x)$ for $x\in\partial\cC_h$.

The difficulty lies in showing the regularity of the value function across the boundary. For that we will use the following well known fact. For any open interval $\cO\subset\R_+$, letting
\begin{align}\label{eq:tauO}
\tau^x_\cO:=\inf\{t\ge 0: S^x_t\notin\cO\}\quad\text{and}\quad \hat \tau^x_{ \cO}:=\inf\{t \ge 0: S^x_t\notin \closure\cO\}
\end{align}
we have (see, \cite[Chapter V, Lemma 46.1]{rogers1994diffusions})
\begin{equation}\label{eq:tauOa}
\P(\tau^x_\cO=\hat \tau^x_{\cO})=1,\quad\text{for all $x\in\R_+$}.
\end{equation}

This fact and a well-known argument based on properties of the process sample paths, guarantee that 
if a sequence $(x_n)_{n\ge 0}\in\R_+ $ converges to $x_0\in\R_+$ as $n\rightarrow \infty$, then 
\begin{align}\label{eq:conv0}
\tau^{x_n}_{\cO}\to \tau^{x_0}_{\cO},\qquad\P-a.s.
\end{align}
The proof of \eqref{eq:conv0} is an easier version of the one we give for the continuity of optimal stopping times in Theorem \ref{prop5.5.1}, hence we omit it here.
In particular, under Assumption \ref{ass:zeros1} and using Proposition \ref{prop4.33}, this implies that for any sequence $(x_n)_{n\ge 0}\in\cC_h$ converging to $x_0\in\partial{\cC_h}$ as $n\rightarrow \infty$, we have
\begin{align}\label{eq:conv}
\tau^*_{x_n,h}\to 0,\qquad\P-a.s.
\end{align}
This is the key tool to the next result, which makes use of an approach developed in \cite{de2018global}.

\begin{theorem} \label{thm:C1}
Under Assumption \ref{ass:zeros1} we have, for each $h\in\cH$, 
\[
V(\cdot,h)\in C(\closure{\cI})\cap C^1(\cI)\cap C^2(\cI\setminus\partial\cC_h)
\]
and for any $x_0\in\partial\cC_h\cap\cI$
\begin{align}\label{eq:C2}
\lim_{\cC_h\ni x\to x_0}\partial_{xx}V(x,h)=2(\sigma x_0)^{-2}\left(-rx_0 M'(x_0)+2r M(x_0)-f(x_0,h)\right).
\end{align}
\end{theorem}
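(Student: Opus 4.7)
The plan is to decompose the statement into three pieces and attack each in turn, with the smooth-fit identity at $\partial\cC_h\cap\cI$ being the one requiring real work. Continuity of $V(\cdot,h)$ on $\closure\cI$ is already handed to us by Proposition \ref{vlips}, so no effort is needed there. The $C^2$ regularity on $\cI\setminus\partial\cC_h$ follows in the standard way: on the relatively open subset $\cD_h\setminus\partial\cC_h$ we have $V(\cdot,h)=M$ and $M\in C^2(\cI)$ by Proposition \ref{prop4.1}(ii); on the open set $\cC_h$ the martingale property \eqref{eq:mart} combined with the strong Markov property gives $V(\cdot,h)\in C^2(\cC_h)$ solving $(\cL-2r)V(\cdot,h)=-f(\cdot,h)$ classically, as already indicated in the text preceding the theorem.

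The main step, and the genuine obstacle, is the smooth-fit identity $\partial_x V(x_0,h)=M'(x_0)$ at each $x_0\in\partial\cC_h\cap\cI$; once this is shown and combined with the trivial identity $\partial_x V=M'$ on the interior of $\cD_h\cap\cI$, one obtains $V(\cdot,h)\in C^1(\cI)$. I would follow the probabilistic approach of \cite{de2018global}, whose essential input is precisely the convergence \eqref{eq:conv}. By Proposition \ref{prop4.33} and Assumption \ref{ass:zeros1} each boundary point $x_0$ is approached from within $\cC_h$ on one side only; fix a monotone sequence $(x_n)\subset\cC_h$ with $x_n\to x_0$ and write $\tau_n:=\tau^*_{x_n,h}$. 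One side of the difference quotient is immediate: the inequality $V\le M$ together with $V(x_0,h)=M(x_0)$ gives $V(x_n,h)-V(x_0,h)\le M(x_n)-M(x_0)$, which upon dividing by $x_n-x_0$ (with the sign dictated by the side of approach) yields the correct one-sided bound with limit $M'(x_0)$ via Proposition \ref{prop4.1}(ii). For the opposite bound I would set $\sigma_n:=\tau_n\wedge\tau^{x_0}_\cI\in\cT_{x_0}$ and use $\sigma_n$ as a sub-optimal stopping time in $V(x_0,h)$. Because $\tau^{x_0}_\cI>0$ $\P$-a.s.\ for $x_0\in\cI$ and $\tau_n\to 0$ $\P$-a.s.\ by \eqref{eq:conv}, one has $\sigma_n=\tau_n$ pathwise for all $n$ sufficiently large, while on the vanishing event $\{\tau_n>\tau^{x_0}_\cI\}$ one exploits $M(S^{x_0}_{\sigma_n})=0$ (cf.\ \eqref{eq:M}). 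Subtracting the resulting sub-optimal bound for $V(x_0,h)$ from the exact representation
\[
V(x_n,h)=\E\Bigl[\int_0^{\tau_n}e^{-2ru}f(S^{x_n}_u,h)\ud u+e^{-2r\tau_n}M(S^{x_n}_{\tau_n})\Bigr],
\]
exploiting the scaling $S^{x_n}_u=(x_n/x_0)S^{x_0}_u$, applying the mean value theorem to $M$ and to $f(\cdot,h)$, and then dividing by $x_n-x_0$, I expect to land in a dominated-convergence argument: the integral term vanishes because $\tau_n\to 0$, while the boundary term converges to $M'(x_0)$, using that $M'$ is bounded on $\closure\cI$ by Proposition \ref{prop4.1}(ii) and that $\partial_x f(\cdot,h)$ is bounded on $\closure\cI$.

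The second-derivative limit \eqref{eq:C2} is then a direct consequence: rearranging the ODE $\tfrac{\sigma^2}{2}x^2\partial_{xx}V+rx\partial_x V-2rV=-f(\cdot,h)$ on $\cC_h$ to isolate $\partial_{xx}V$, every term on the right-hand side has a finite limit as $\cC_h\ni x\to x_0\in\partial\cC_h\cap\cI$ by the smooth-fit identity and the continuity of $V$, giving exactly the claimed formula. The delicate point throughout is the lower bound in the smooth-fit step, where one must control the behaviour on the event $\{\sigma_n<\tau_n\}$ and justify passage to the limit under the expectation; both are handled by the uniform boundedness of the relevant integrands on $\closure\cI$ afforded by Proposition \ref{prop4.1} together with the confinement $S^{x_n}_{t\wedge\tau_n}\in K_{a,b}$ from \eqref{eqn:Kab}.
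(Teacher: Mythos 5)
Your decomposition, and the treatment of the $C(\closure\cI)$ and $C^2(\cI\setminus\partial\cC_h)$ parts, match the paper. Your smooth-fit step, however, is a genuinely different implementation of the \cite{de2018global} idea: the paper differentiates $V(\cdot,h)$ at \emph{interior} points $x\in\cC_h$ (perturbing to $x\pm\eps$ and using $\tau^*_{x,h}\wedge\tau^{x\pm\eps}_\cI$), obtains two-sided bounds on $\partial_xV(x,h)$ carrying an error term $\sup_{[a,b]}|M'|\,\P(\tau^*_{x,h}=\tau^x_\cI)$ that only dies in a \emph{second} limit $x\to x_0$ via \eqref{eq:conv}; you differentiate once, directly at $x_0$, and get one inequality for free from $V\le M$ with $V(x_0,h)=M(x_0)$. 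Your route can be made to work, but note that on $\{\sigma_n<\tau_n\}$ the leftover terms $\int_{\sigma_n}^{\tau_n}e^{-2ru}f(S^{x_n}_u,h)\,\ud u$ and $e^{-2r\tau_n}M(S^{x_n}_{\tau_n})$ are merely bounded, not $o(|x_n-x_0|)$, so ``uniform boundedness'' does not let you divide them by $x_n-x_0$ and pass to the limit; what actually saves the argument is their \emph{sign} ($f\ge0$, $M\ge0$, and $M(S^{x_0}_{\tau^{x_0}_\cI})=0$), which lets you discard them in exactly the inequality direction needed on each side of $x_0$ (lower bound when $\cC_h$ lies to the right of $x_0$, upper bound when it lies to the left). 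This must be said explicitly, as it is the crux of the ``bad event'' control.

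The genuine gap is elsewhere: your argument establishes the \emph{one-sided derivative of $V(\cdot,h)$ at the point $x_0$} and identifies it with $M'(x_0)$, whereas both the claim $V(\cdot,h)\in C^1(\cI)$ and the formula \eqref{eq:C2} require $\lim_{\cC_h\ni x\to x_0}\partial_xV(x,h)=M'(x_0)$, i.e., continuity of the derivative across the boundary. Existence of a derivative at $x_0$ does not imply that $\partial_xV(\cdot,h)$ admits a limit there (compare $x\mapsto x^2\sin(1/x)$), and your derivation of \eqref{eq:C2} from the ODE tacitly uses the limit statement, not the pointwise one. The repair is short but must be supplied: by Proposition \ref{vlips} one has $|\partial_xV(\cdot,h)|\le L$ on $\cC_h$, so the equation $(\cL-2r)V=-f$ on $\cC_h$ bounds $\partial_{xx}V(\cdot,h)$ uniformly on $\cC_h$ near $x_0$; hence $\partial_xV(\cdot,h)$ is uniformly continuous there and admits a limit $\ell$ at $x_0$, and the mean value theorem identifies $\ell$ with the one-sided difference quotient you computed, giving $\ell=M'(x_0)$. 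The paper's route sidesteps this entirely because it computes the limit of the derivatives directly. With these two additions your plan goes through.
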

\begin{proof}
The continuity of $V(\cdot, h)$ follows from Proposition \ref{vlips}, whereas \eqref{eq:C2} can be obtained from \eqref{eq:FBPV} provided that $V(\cdot\,,h)\in C^1(\cI)$. Hence, it only remains to show that for any $x_0\in\partial\cC_h\cap\cI$ it holds
\[
\lim_{\cC_h\ni x\to x_0}\partial_{x}V(x,h)=M'(x_0).
\]

\newcommand{\tauEps}{\tau^{x+\eps}_\cI}
Fix $x\in\cC_h$ and denote $\tau^*:=\tau^*_{x,h}$ which is optimal for the problem $V(x,h)$. Fix $\eps>0$ and
notice that the stopping time $\tau^*\wedge \tauEps \in \cT_{x+\eps}$ is admissible for the problem $V(x+\eps,h)$. We get an upper bound 
\begin{equation*}
V(x+\eps,h)\leq\E\left[\int_0^{\tau^*\wedge \tauEps}\!\!e^{-2ru}f(S^{x+\eps}_u,h)\ud u+e^{-2r (\tau^*\wedge \tauEps)}M(S^{x+\eps}_{\tau^*\wedge \tauEps})\right].
\end{equation*}
Using this and the optimality of $\tau^*$ for $V(x,h)$ we obtain
\begin{align*}
&\frac{V(x+\eps,h)-V(x,h)}{\eps}\\
&\leq\frac{1}{\eps}\E\left[\int_0^{\tau^*\wedge\tauEps}\!\!e^{-2ru}\big(f(S^{x+\eps}_u,h)\!-\!f(S^{x}_u,h)\big)\ud u +e^{-2r\tau^*}\big(M(S^{x+\eps}_{\tau^*})\!-\!M(S^{x}_{\tau^*})\big)1_{\{\tau^*\le\tauEps\}}\right]\notag\\
&\quad-\frac{1}{\eps}\E\left[\left(\int_{\tauEps}^{\tau^*}\!\!e^{-2ru}f(S^{x}_u,h)\ud u+e^{-2r\tau^*}M(S^{x}_{\tau^*})\right)1_{\{\tau^* >\tauEps\}}\right]\\
&\leq\frac{1}{\eps}\E\left[\int_0^{\tau^*\wedge\tauEps}\!\!e^{-2ru}\big(f(S^{x+\eps}_u,h)\!-\!f(S^{x}_u,h)\big)\ud u +e^{-2r\tau^*}\big(M(S^{x+\eps}_{\tau^*})\!-\!M(S^{x}_{\tau^*})\big)1_{\{\tau^*\le\tauEps\}}\right],
\end{align*}
where in the first inequality we also use $M(S^{x+\eps}_{\tauEps})=0$, $\P$-a.s., by \eqref{eq:limM}, and the second inequality follows from $f\ge 0$ and $M\ge 0$. The final term in the last inequality can be further estimated by 
\begin{align*}
&\E\left[e^{-2r\tau^*}\big(M(S^{x+\eps}_{\tau^*})-M(S^{x}_{\tau^*})\big)1_{\{\tau^*\leq\tauEps\}}\right]\\
&= \E\left[e^{-2r\tau^*}\big(M(S^{x+\eps}_{\tau^*})-M(S^{x}_{\tau^*})\big)1_{\{\tau^*\leq\tauEps\}\cap\{\tau^*<\tau^x_{\cI}\}}\right]\\
&\quad+\E\left[e^{-2r\tau^*}\big(M(S^{x+\eps}_{\tau^*})-M(S^{x}_{\tau^*})\big)1_{\{\tau^*\leq\tauEps\}\cap\{\tau^*=\tau^x_{\cI}\}}\right]\\
&=\E\left[e^{-2r\tau^*}\big(M(S^{x+\eps}_{\tau^*})-M(S^{x}_{\tau^*})\big)1_{\{\tau^*\leq\tauEps\}\cap\{\tau^*<\tau^x_{\cI}\}}\right]\\
&\quad+\E\left[e^{-2r\tau^*}\big(M(S^{x+\eps}_{\tau^*})-M(S^{x}_{\tau^*})\big)1_{\{\tau^*\leq\tauEps\}\cap\{\tau^*=\tau^x_{\cI}\}\cap\{S^x_{\tau_\cI}=a\}}\right],
\end{align*}
where we use $\{\tau^*\leq\tauEps\}\cap\{\tau^*=\tau^x_{\cI}\}\cap\{S^x_{\tau_\cI}=b\}=\varnothing$ in the second equality. Notice that on $\{\tau^*\leq\tauEps\}\cap\{\tau^*=\tau^x_{\cI}\}\cap\{S^x_{\tau_\cI}=a\}$ we have
\begin{align*}
M(S^{x+\eps}_{\tau^*})-M(S^{x}_{\tau^*})&\le (S^{x+\eps}_{\tau^*}-S^{x}_{\tau^*})\sup_{z\in[a,b]}|M'(z)|\\
&=((1+\eps/x)a-a)\sup_{z\in[a,b]}|M'(z)|\le \eps\sup_{z\in[a,b]}|M'(z)|.
\end{align*}
Hence, 
\begin{align*}
&\E\left[e^{-2r\tau^*}\big(M(S^{x+\eps}_{\tau^*})-M(S^{x}_{\tau^*})\big)1_{\{\tau^*\leq\tauEps\}}\right]\\
&\le \E\left[e^{-2r\tau^*}\big(M(S^{x+\eps}_{\tau^*})-M(S^{x}_{\tau^*})\big)1_{\{\tau^*\leq\tauEps\}\cap\{\tau^*<\tau^x_{\cI}\}}\right]+\eps\sup_{z\in[a,b]}|M'(z)|\P\left(\tau^*=\tau^x_{\cI}\right).
\end{align*}
Then we have 
\begin{align*}
&\frac{V(x+\eps,h)-V(x,h)}{\eps}\\
&\leq\E\left[\int_0^{\tau^*\wedge \tauEps}e^{-2ru}\big(f(S^{x+\eps}_u,h)-f(S^{x}_u,h)\big)\eps^{-1}\ud u\right]\\
&\quad +\E\left[e^{-2r\tau^*}\big(M(S^{x+\eps}_{\tau^*})-M(S^{x}_{\tau^*})\big)\eps^{-1}1_{\{\tau^*\leq\tauEps\}\cap\{\tau^*<\tau^x_{\cI}\}}\right]+ \sup_{z\in[a,b]}|M'(z)|\,\P(\tau^*=\tau^x_{\cI}).
\end{align*}
From \eqref{eq:conv0} we obtain that $\tauEps\rightarrow \tau^{x}_{\cI}$, $\P$-a.s., as $\eps\to 0$. Thus, when $\eps\to 0$ we have 
\[
1_{\{\tau^*\leq\tauEps\}\cap\{\tau^*<\tau^x_{\cI}\}} \to 1_{\{\tau^*<\tau^x_{\cI}\}},\quad\P-a.s.
\]
By the smoothness of $f$ on $K_{a,b}$ (defined in \eqref{eqn:Kab}) and of $M$ on $[a,b]$, we have 
\begin{align*} 
&\int_0^{\tau^*\wedge \tauEps}e^{-2ru}\frac{|f(S^{x+\eps}_u,h)-f(S^{x}_u,h)|}{\eps}\ud u \le \sup_{z\in K_{a,b}}|\partial_x f(z,h)|\int_0^{\tau^*}e^{-(r+\frac{1}{2}\sigma^2) u+\sigma W_u}\ud u,\\
&e^{-2r\tau^*}\frac{|M(S^{x+\eps}_{\tau^*})-M(S^{x}_{\tau^*})|}{\eps}1_{\{\tau^*\leq\tauEps\}\cap\{\tau^*<\tau^x_{\cI}\}}\le \sup_{z\in[a,b]}|M'(z)|.
\end{align*} 
Thus, letting $\eps\to0$ and applying the dominated convergence theorem we get 
\begin{align}
\label{smoothfiteq1}
\partial_x V(x,h)&\le \E\left[\int_0^{\tau^*}e^{-2ru}\ \partial_x f(S^x_u,h)\, S^1_{u}\ud u+e^{-2r\tau^*}M'(S^{x}_{\tau^*})S^1_{\tau^*}1_{\{\tau^*<\tau^x_{\cI}\}}\right]\\
&\quad+\sup_{z\in[a,b]}|M'(z)|\, \P(\tau^*=\tau^x_{\cI}).\notag
\end{align}

Similar arguments, applied to the stopping time $\tau^*\wedge\tau^{x-\eps}_\cI$, which is admissible for $V(x-\eps,h)$, allow us to obtain
\begin{align}\label{eq:Vx}
\partial_x V(x,h) &= \lim_{\eps\to 0}\frac{V(x,h)-V(x-\eps,h)}{\eps}\\
&\geq\E\left[\int_0^{\tau^*}e^{-2ru}\, \partial_x f(S^x_u,h)\,S^1_{u} \ud u+e^{-2r\tau^*}M'(S^{x}_{\tau^*})S^1_{\tau^*}1_{\{\tau^*<\tau^x_{\cI}\}}\right]\notag\\
&\quad-\sup_{z\in[a,b]}|M'(z)|\, \P(\tau^*=\tau^x_{\cI}).\notag
\end{align}

Recall that $\tau^*=\tau^*_{x,h}$ and then let $\cC_h\ni x\to x_0\in\partial\cC_h\cap\cI$. From \eqref{eq:conv} we get
$\P(\tau^*_{x,h}=\tau^x_{\cI})\to 0$ and $1_{\{\tau^*_{x,h}<\tau^x_{\cI}\}}\to 1$, $\P$-a.s. Then using dominated convergence in \eqref{smoothfiteq1} and \eqref{eq:Vx} we obtain
\[
\partial_xV(x,h)\to M'(x_0),\quad\text{as $x\to x_0$},
\]
which concludes the proof.
\end{proof}

\begin{remark}
We could not infer smooth fit across the stopping boundary diectly from \cite{de2018global} because our underlying process is killed at points $a,b$. Instead we adapted the line of arguments in the aforementioned paper and used the particular characteristics of our optimal stopping problem.
\end{remark}

Thanks to the regularity obtained in the theorem above we can rigorously connect the stopping problem \eqref{2.5} to the obstacle problem \eqref{eq:ObsP}--\eqref{eq:ObsP2} (equivalently to the free boundary problem \eqref{eq:FBP1}--\eqref{eq:smooth}).
\begin{theorem}\label{thm:VI}
Let Assumption \ref{ass:zeros1} hold. For each $h\in\cH$ the value function $V(\,\cdot\,,h)$ is the unique solution, in the a.e.~sense, of \eqref{eq:ObsP}--\eqref{eq:ObsP2} (equivalently of \eqref{eq:FBP1}--\eqref{eq:smooth}) in the class of functions $C(\closure{\cI})\cap C^1(\cI)$ whose second order partial derivative lies in $L^\infty_{\ell oc}(\cI)$.
\end{theorem}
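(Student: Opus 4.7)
The plan is to split the statement into an existence part (the value function solves the free boundary problem) and a uniqueness part (any solution in the prescribed regularity class must coincide with $V$), and to rely on the regularity already established in Theorem \ref{thm:C1} together with a verification argument based on a generalised It\^o formula.

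For existence, I would gather the pieces already proved. By Theorem \ref{thm:C1}, for each $h\in\cH$ we have $V(\cdot,h)\in C(\closure{\cI})\cap C^1(\cI)\cap C^2(\cI\setminus\partial\cC_h)$, and the boundary conditions $V(a,h)=V(b,h)=0$ hold by \eqref{eq:V0}. On the open set $\cC_h$ the PDE $(\cL-2r)V(x,h)+f(x,h)=0$ holds classically by \eqref{eq:FBPV}, and $V<M$ holds by the definition of $\cC_h$. On $\cD_h\cap\cI$ we have $V=M$, so $(\cL-2r)V+f=G(\cdot,h)$; Propositions \ref{prop:Gzeros2} and \ref{prop4.33} together imply that $\cD_h\cap\cI\subset\{G(\cdot,h)\ge 0\}$, hence \eqref{eq:FBP2} holds a.e. on $\cI$ (the boundary $\partial\cC_h$ is a finite set by Assumption \ref{ass:zeros1}). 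The smooth-fit identity \eqref{eq:smooth} is precisely the $C^1$ statement of Theorem \ref{thm:C1}, while \eqref{eq:C2} plus the $C^2$ boundedness of $M$ from Proposition \ref{prop4.1}(ii) yield $\partial_{xx}V(\cdot,h)\in L^\infty_{\ell oc}(\cI)$.

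For uniqueness, let $u(\cdot,h)\in C(\closure{\cI})\cap C^1(\cI)$ with $\partial_{xx}u(\cdot,h)\in L^\infty_{\ell oc}(\cI)$ solve \eqref{eq:FBP1}--\eqref{eq:FBP3}. Fix $x\in\cI$ and an arbitrary $\tau\in\Tx$. Applying an It\^o formula valid for $C^1$ functions with locally bounded second derivative to $t\mapsto e^{-2r(t\wedge\tau)}u(S^x_{t\wedge\tau},h)$ and using $(\cL-2r)u+f\ge0$ a.e., together with $u\le M$ and $u(S_{\tau_\cI},h)=0=M(S_{\tau_\cI})$, gives
\begin{align*}
u(x,h)\le \E_x\!\left[\int_0^{\tau}e^{-2rs}f(S_s,h)\,\ud s+e^{-2r\tau}M(S_\tau)\right],
\end{align*}
so that $u(x,h)\le V(x,h)$ after taking the infimum over $\tau\in\Tx$. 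For the reverse inequality, define $\tau_u:=\inf\{t\ge 0:u(S^x_t,h)=M(S^x_t)\}\wedge\tau^x_\cI$. On $[0,\tau_u)$ the equation $(\cL-2r)u+f=0$ holds by \eqref{eq:FBP1}, and at $\tau_u$ either $u(S_{\tau_u},h)=M(S_{\tau_u})$ or $\tau_u=\tau^x_\cI$ where $u$ and $M$ both vanish; applying the same It\^o formula with $\tau=\tau_u$ yields equality and hence $u(x,h)\ge V(x,h)$.

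The principal technical point, and the step I expect to require the most care, is justifying the application of It\^o's formula to $u$ despite its lack of classical $C^2$ regularity at $\partial\cC_h$. Since $\partial\cC_h$ is finite (Proposition \ref{prop4.33} under Assumption \ref{ass:zeros1}), $u(\cdot,h)$ is $C^1$ on $\cI$, and $\partial_{xx}u(\cdot,h)$ is only locally essentially bounded with well-defined one-sided limits at the free boundary, I would invoke the It\^o--Tanaka--Meyer formula (or, equivalently, approximate $u$ by mollification on each side of $\partial\cC_h$ and pass to the limit using dominated convergence, exploiting that the local time of $S$ at any single point contributes nothing once the $C^1$ smooth-fit is in force). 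The martingale term in the It\^o expansion is controlled via square-integrability on the bounded interval $\closure\cI$, so the optional sampling at $\tau\wedge\tau^x_\cI$ produces no boundary terms, completing the verification.
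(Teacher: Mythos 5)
Your proposal is correct and follows essentially the same route as the paper: existence is assembled from Theorem \ref{thm:C1}, \eqref{eq:FBPV}, \eqref{eq:V0} and the fact that $G(\cdot,h)\ge 0$ on $\cD_h$ (the contrapositive of \eqref{eq:simple}), and uniqueness is a verification argument applying a generalised It\^o/Tanaka formula to a candidate solution $u$, using $(\cL-2r)u+f\ge0$ a.e.\ and $u\le M$ for one inequality and the hitting time of $\{u=M\}$ for the other. Your extra care about the local-time term at the finite set $\partial\cC_h$ is exactly the justification the paper compresses into the phrase ``by Tanaka's formula.''
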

\begin{proof}
From Theorem \ref{thm:C1} we know that $V(\,\cdot\,,h)$ has the right regularity. Moreover, $V(\,\cdot\,,h)=M(\,\cdot\,)$ on $\cD_h$, where $(\cL-2r)M\ge -f$ by \eqref{eq:simple}. Then, combining these facts with \eqref{eq:FBPV} we conclude that for any $h\in\cH$ 
\[
\min\{(\cL-2r)V(x,h)+f(x,h), M(x)-V(x,h)\}=0,\quad\text{for $x\in\cI\setminus\partial\cC_h$}
\]
and clearly $V(a,h)=V(b,h)=0$ (cf.~\eqref{eq:V0}). The same argument guarantees that $V(\,\cdot\,,h)$ also solves \eqref{eq:FBP1}--\eqref{eq:FBPV}.
 
Uniqueness of the solution follows by a standard verification argument. Let $u$ be another solution of \eqref{eq:ObsP}--\eqref{eq:ObsP2} in $C(\closure{\cI})\cap C^1(\cI)$ with $u''\in L^\infty_{\ell oc}(\cI)$ (for simplicity of notation we omit $h\in\cH$, given and fixed). Then, by Tanaka's formula and using $(\cL-2r)u\ge -f$, we obtain
\begin{align*}
\E_x\left[e^{-2r \tau}u(S_{\tau})\right]\ge u(x)-\E_x\left[\int_0^{\tau}e^{-2r t}f(S_t,h)\ud t\right],
\end{align*}

for any  stopping time $\tau\in\cT_x$. Rearranging terms and using $u\le M$ we obtain
\begin{align*}
u(x)\le \E_x\left[\int_0^{\tau}e^{-2r t}f(S_t,h)\ud t+e^{-2r \tau}M(S_{\tau})\right].
\end{align*}
Hence $u\le V$. To prove the reverse inequality it is sufficient to choose $\tau=\inf\{t\ge 0: u(S_t)=M(S_t)\}$ and all the inequalities above become equalities.
\end{proof}

\subsection{Analytical formulae}\label{subsec:algebraic_eqn}
We will sketch now the approach we use to compute the value function and the optimal stopping boundaries. Thanks to Proposition \ref{prop4.33} and Theorems \ref{thm:C1} and \ref{thm:VI}, for $h\in\cH$, the value function $V(\,\cdot\,,h)$ is a classical solution of the system
\begin{equation}\label{eq:ODEp}
\begin{cases}
(\mathcal{L}-2r)v(x)=-f(x,h),& x\in (x^*_1,x^*_2),\\[+3pt]
v(x)=M(x),& x=x^*_1, x^*_2,\\[+3pt]
v'(x^*_1)=M'(x^*_1),& \text{if } x^*_1 > a,\\[+3pt]
v'(x^*_2)=M'(x^*_2),& \text{if } x^*_2 < b,\\[+3pt]
a\le x^*_1 < x_2^* \le b, &
\end{cases}
\end{equation}
where $x^*_1, x^*_2$ are the stopping boundaries, i.e., $\cC_h = (x^*_1, x^*_2)$. Conversely, under Assumption \ref{ass:zeros1}, there is at most one solution of the above system. Indeed, using techniques of \cite[Lemma 2.5 and Lemma 2.6]{ruschendorf2008class} we can show that $x^*_1 \le \min(x_{G_1},x_{G_2})$ and $x^*_2 \ge \max(x_{G_1},x_{G_2})$ and $v < M$ on $(x^*_1, x^*_2)$. We further set $v=M$ on $\closure{\cI} \setminus (x^*_1, x^*_2)$. The positioning of $x^*_1, x^*_2$ with respect to $x_{G_1}, x_{G_2}$ toghether with Assumption \ref{ass:zeros1} implies that $(\mathcal{L}-2r)v(x)\ge-f(x,h)$ for all $x \in \cI \setminus [x^*_1, x^*_2]$. Hence, $v$ solves the variational inequality \eqref{eq:ObsP}--\eqref{eq:ObsP2} and, due to Theorem \ref{thm:VI}, coincides with $V(\cdot, h)$.

\begin{remark}
Notice that the condition $x^*_1 < x^*_2$ is necessary. Otherwise, taking $v=M$ and any $x^*_1 = x^*_2 \in \cI$ solves \eqref{eq:ODEp}.
\end{remark}

\begin{remark}
An alternative approach to the one we adopted in the section above, consists in proving directly that there exists a (classical) solution $(v,x^*_1,x^*_2)$ to the system \eqref{eq:ODEp}. The arguments sketched above would then imply that this solution is unique, it is the value function of \eqref{2.5}, and $x^*_1, x^*_2$ are the stopping boundaries. Since solving \eqref{eq:ODEp} is infeasible due to the complexity of the equations arising from the explicit form of the function $M$, we took an alternative route: we used direct methods to obtain the properties of the stopping set and the smoothness of the value function, getting, as a consequence, the existence of a solution to \eqref{eq:ODEp}.
\end{remark}

Having established its existence, computing the solution of the system \eqref{eq:ODEp} is now straightforward. A general solution of the ODE in the first line of \eqref{eq:ODEp} is of the form 
\begin{equation}
\label{V:analytic}
v(x)=C_1x^{q_1}+C_2x^{q_2}-x^2(h-d^{-1}(\am/x)^{1+d})^2,
\end{equation}
with $d$ as in \eqref{eq:Panalyt} and $q_1,q_2$ as in \eqref{eq:q12}. The constants $C_1,C_2$ and the optimal boundaries $x^*_1,x^*_2$ are determined by solving a system of algebraic equations derived from the remaining four conditions in \eqref{eq:ODEp}. The existence and uniqueness of those constants follows from the earlier discussion in this subsection. We mention that we could not solve those algebraic equations analytically, so all examples presented in the paper involve numerical solution of this system of algebraic equations.

Figure \ref{p3.0} displays three possible forms of the stopping set and corresponding value functions. The stopping sets are identified by the values where the solid line (the payoff $M(\cdot)$) coincides with the dashed line (the value function $V(\cdot, h)$).

\begin{figure}[tb]
\centering
\includegraphics[width=7cm]{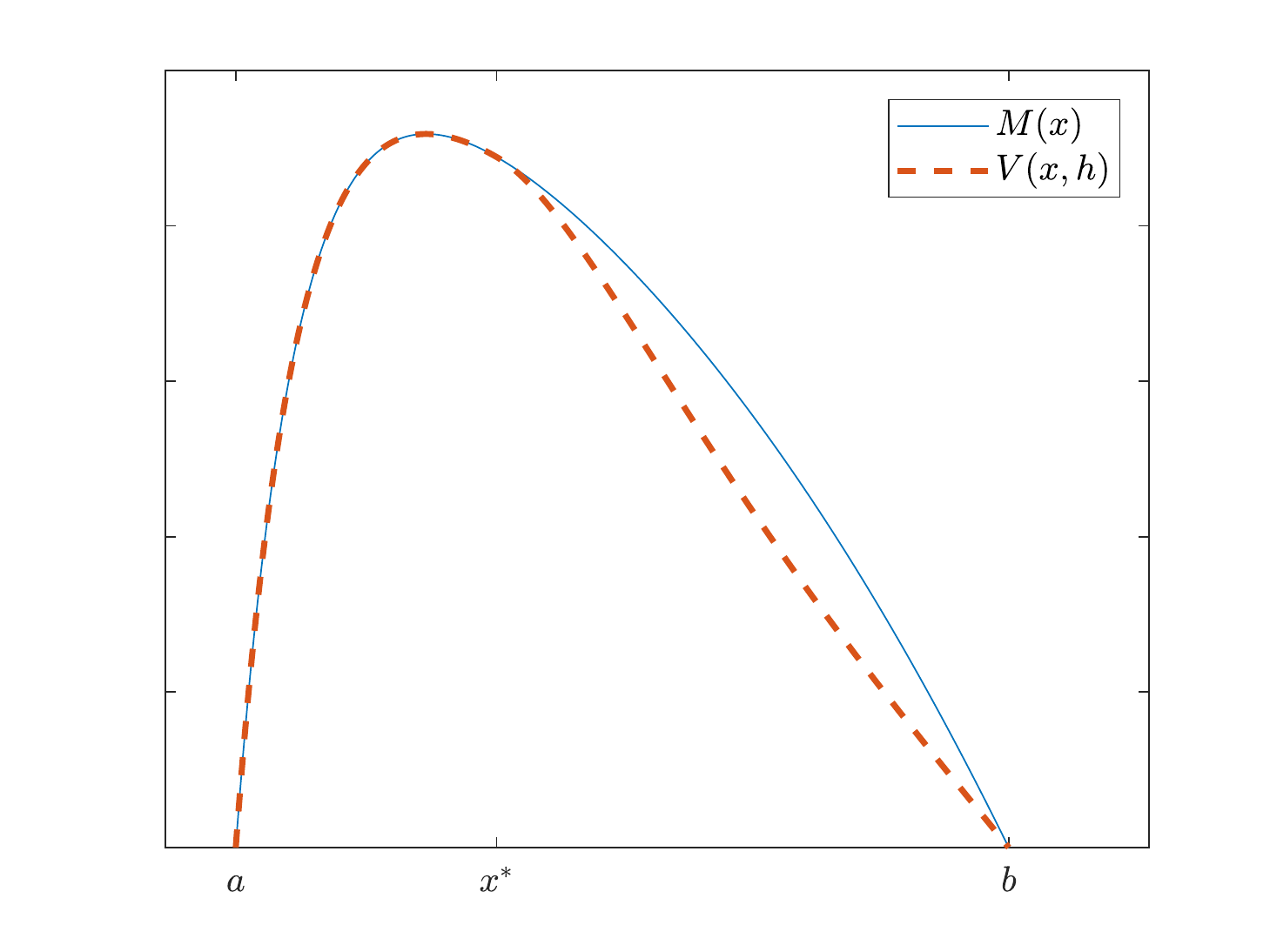}
\includegraphics[width=7cm]{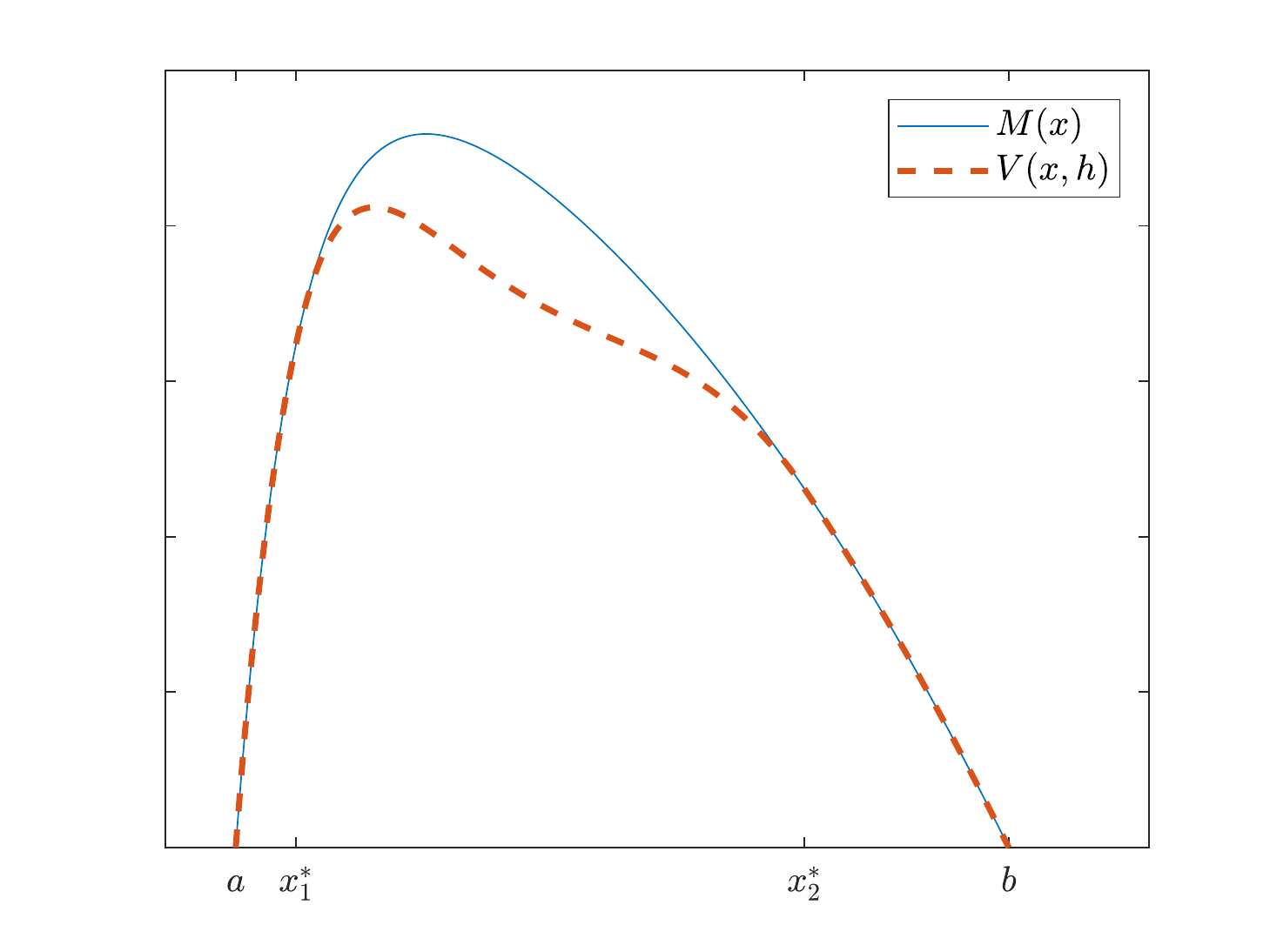}
\includegraphics[width=7cm]{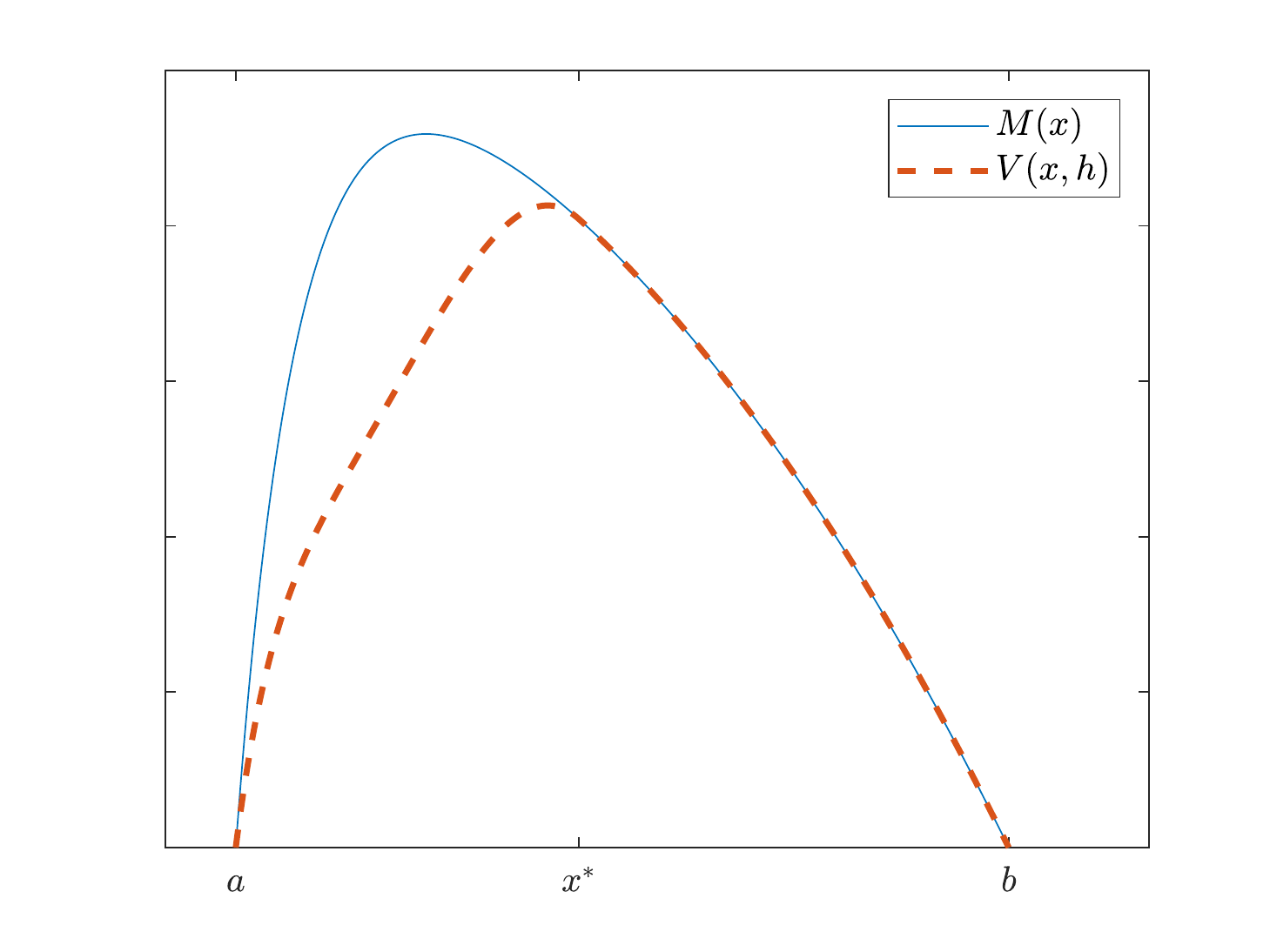}
\caption{Plots of the map $x\mapsto V(x,h)$ and $M(x)$ for different values of the initial stock holding $h$ using parameters $r=3\%$, $\sigma=30\%$, $K=100$, $b=150$, and $a=\am=K/(1+d^{-1})=40$.} 
\label{p3.0}
\end{figure}


\section{Regularity of the stopping boundaries}\label{sec:3}
So far we have studied an optimal control problem for a fixed initial stock holding $h \in \cH$. Optimal stopping boundaries $x^*_1, x^*_2$ from Proposition \ref{prop4.33} obviously depend on $h$; denote them by $x^*_{1,h}$ and $x^*_{2,h}$ with one of them possibly being equal to $a$ or $b$. We will show that $x^*_{1,h}$ and $x^*_{2,h}$ are non-decreasing and continuous in $h$. Apart from these results being of interest on their own, they will be instrumental in studying the mapping $h \mapsto V(x, h)$ and, consequently, in determining, in Section \ref{sec:4}, an optimal initial stock holding $h^*$ in problem \eqref{1.0}.

Recalling $\tau^*_{x,h}$ from \eqref{eq:tauh} we introduce functions $\gg_{h,i}$, $i=1,2$, and $\GG_h$, which are analogues of those in \eqref{eq:Gamma} and \eqref{2.4}:
\begin{equation}
\begin{aligned}
\label{eq:gg1}&\gg_{h,1}(x):=\E_x\Big[\int_{0}^{\tau^*_{h}}e^{-2ru}\sigma^2S_{u}^{2}\ud u\Big],\quad\gg_{h,2}(x):=\E_x\Big[\int_{0}^{\tau^*_{h}}e^{-2ru}P'(S_u)\sigma^2S_{u}^{2}\ud u\Big],\\
&\gg_{h,3}(x):=\E_x\left[\int_{0}^{\tau^*_{h}}e^{-2ru}(P'(S_u))^2\sigma^2S_{u}^{2}\ud u\right]\quad \text{and}\quad\GG_h(x):=\frac{\gamma_{h,2}(x)}{\gamma_{h,1}(x)}.
\end{aligned}
\end{equation}
By minimising \eqref{2.5} with respect to $h$, it is easy to see that the value $\GG_h(x)$ determines an optimal initial stock holding, provided that the next trade happens at the time $\tau^*_{x,h}$. This leads to a fixed point, so that an optimal $h^*$ in \eqref{1.0} must satisfy $\GG_{h^*}(x) = h^*$.

Applying similar arguments as in the proof of Proposition \ref{prop4.1} to $\GG_h$ with $x^*_{1,h}$ and $x^*_{2,h}$ in place of $a$ and $b$ we obtain the next result.
\begin{proposition}
\label{prop5.2.1}
For any $h\in\cH$, we have $\GG_h$ is $C^{\infty}$ and strictly increasing on $(x^*_{1,h},x^*_{2,h})$ with
\begin{align*}
\GG_h(x^*_{1,h}):&=\lim_{x\downarrow x^*_{1,h}}\GG_h(x)>P'(x^*_{1,h}),\\ 
\GG_h(x^*_{2,h}):&=\lim_{x\uparrow x^*_{2,h}}\GG_h(x)<P'(x^*_{2,h}).       
\end{align*}
\end{proposition}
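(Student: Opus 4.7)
The plan is to mimic the proof of Proposition \ref{prop4.1} verbatim, replacing the interval $\cI=(a,b)$ with the continuation set $\cC_h=(x^*_{1,h},x^*_{2,h})$ and exploiting the fact that, by Proposition \ref{prop4.33}, $\tau^*_h=\inf\{t\ge 0:S_t\notin\cC_h\}$ so that $S$ effectively behaves as a diffusion killed on the boundary of $\cC_h$ under the functionals in \eqref{eq:gg1}. Let $\tilde\varphi,\tilde\psi$ denote the decreasing and increasing fundamental solutions of $(\cL-2r)v=0$ on $\cC_h$, built from $\hat\varphi(x)=x^{q_2}$, $\hat\psi(x)=x^{q_1}$ via the construction in \eqref{eq:fund}, with associated Wronskian $\tilde w>0$ given by \eqref{eq:w}.

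First, applying the resolvent formula \eqref{diffusionformula} (with $a,b,\varphi,\psi,w$ replaced by $x^*_{1,h},x^*_{2,h},\tilde\varphi,\tilde\psi,\tilde w$) to $\gg_{h,1}$ and $\gg_{h,2}$, I obtain explicit integral representations analogous to \eqref{eq:gamma1}--\eqref{eq:gamma2}. The smoothness of the integrands and the positivity $\gg_{h,1}>0$ on $\cC_h$ (as $\tau^*_h>0$ $\P_x$-a.s.~and $\sigma^2 S^2>0$) give $\GG_h=\gg_{h,2}/\gg_{h,1}\in C^\infty(\cC_h)$.

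Next, I introduce the auxiliary functions
\[
\tilde p_1(x)=\!\!\int_{x^*_{1,h}}^{x}\!\!\tilde\psi(z)\sigma^2z^2 m'(z)\ud z,\ \ \tilde p_3(x)=\!\!\int_{x}^{x^*_{2,h}}\!\!\tilde\varphi(z)\sigma^2z^2 m'(z)\ud z,
\]
and $\tilde p_2,\tilde p_4$ obtained from $\tilde p_1,\tilde p_3$ by inserting the extra factor $P'(z)$ in the integrand. As in \eqref{eq:gamma}, one has
\[
\GG_h(x)=\frac{\tilde\varphi(x)\tilde p_2(x)+\tilde\psi(x)\tilde p_4(x)}{\tilde\varphi(x)\tilde p_1(x)+\tilde\psi(x)\tilde p_3(x)}.
\]
Repeating the algebra leading to \eqref{cancel1}, I get
\[
\GG_h'(x)=\frac{\tilde w\, \tilde s'(x)}{\bigl(\tilde\varphi(x)\tilde p_1(x)+\tilde\psi(x)\tilde p_3(x)\bigr)^2}\bigl(\tilde p_1(x)\tilde p_4(x)-\tilde p_2(x)\tilde p_3(x)\bigr),
\]
and since $P'$ is strictly increasing on $\cI\supset\cC_h$ (because $a\ge\am$), the bounds $\tilde p_2(x)<P'(x)\tilde p_1(x)$ and $\tilde p_4(x)>P'(x)\tilde p_3(x)$ imply $\GG_h'(x)>0$.

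Finally, for the boundary limits I apply de L'Hôpital's rule at $x^*_{1,h}$ using $\tilde\psi(x^*_{1,h}+)=0$ and $\tilde p_1(x^*_{1,h})=\tilde p_2(x^*_{1,h})=0$, together with the identities $\tilde p_2'(x)=P'(x)\tilde p_1'(x)$ and $\tilde p_4'(x)=P'(x)\tilde p_3'(x)$. After simplification this yields $\GG_h(x^*_{1,h})=\tilde p_4(x^*_{1,h})/\tilde p_3(x^*_{1,h})$, and the strict inequality $\tilde p_4(x^*_{1,h})>P'(x^*_{1,h})\tilde p_3(x^*_{1,h})$ (again by strict monotonicity of $P'$ on the non-degenerate integration domain) gives $\GG_h(x^*_{1,h})>P'(x^*_{1,h})$. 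The argument at $x^*_{2,h}$ is symmetric. The only point requiring some care is the possibility, allowed by Proposition \ref{prop4.33}(ii), that $x^*_{1,h}=a$ or $x^*_{2,h}=b$; but this does not affect the argument because $\tilde\varphi,\tilde\psi$ and the resolvent formula are constructed directly on $\cC_h$ irrespective of how its endpoints relate to $\partial\cI$. I expect this boundary/edge-case bookkeeping to be the only mildly delicate point in an otherwise routine transcription of the proof of Proposition \ref{prop4.1}.
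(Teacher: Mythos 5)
Your proposal is correct and matches the paper's intended argument exactly: the paper simply states that Proposition \ref{prop5.2.1} follows by ``applying similar arguments as in the proof of Proposition \ref{prop4.1} to $\GG_h$ with $x^*_{1,h}$ and $x^*_{2,h}$ in place of $a$ and $b$,'' which is precisely the transcription you carry out. Your added care about the edge cases $x^*_{1,h}=a$ or $x^*_{2,h}=b$ and about $P'$ being strictly increasing on $\cC_h\subset(\am,\infty)$ is sound and consistent with how the paper later uses $\varphi_h,\psi_h,w_h$ in Lemma \ref{lemma5.3.1}.
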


We now prove a technical lemma which is fundamental for showing the monotonicity of the stopping boundaries. 
For $h\in(\Gamma(a),\Gamma(b))$, thanks to the monotonicity of $\Gamma$ (Proposition \ref{prop4.1}) we have that there exists a unique point $x_{\Gamma}(h)\in \cI$ such that 
\begin{equation}
\label{x_h}
\Gamma(x_{\Gamma}(h))=h.
\end{equation}
Moreover $x_{\Gamma}(h)\in \mathcal{C}_h$, because $G(x_{\Gamma}(h),h)<0$ by \eqref{2.6}.

\begin{lemma}
\label{lemma5.3.1}
Fix $h\in\cH$ and let Assumption \ref{ass:zeros1} hold.
\begin{itemize}
\item[(i)] If $x^*_{1,h}>a$, then $h>\Gamma(x^*_{1,h})$ and
$h+\Gamma(x^*_{1,h})\geq 2\GG_h(x^*_{1,h})$.
\item[(ii)] If $x^*_{2,h}<b$, then $h<\Gamma(x^*_{2,h})$ and $h+\Gamma(x^*_{2,h})\leq 2\GG_h(x^*_{2,h})$.
\item[(iii)] If $h>\Gamma(a)$ and $x^*_{1,h}=a$, then $h+\Gamma(x)>2\GG_h(x)$ for all $x\in (a,x_{\Gamma}(h))$.
\item[(iv)] If $h<\Gamma(b)$ and $x^*_{2,h}=b$, then $h+\Gamma(x)<2\GG_h(x)$ for all $x\in (x_{\Gamma}(h),b)$.
\end{itemize}
\end{lemma}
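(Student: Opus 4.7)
The strategy is to establish parts (i) and (iii); parts (ii) and (iv) then follow by a symmetric argument that swaps the roles of the two boundaries and reverses the relevant inequalities. The main tools are the smooth-fit property of $V(\cdot,h)$ from Theorem \ref{thm:C1}, the explicit formula for $G$ in Lemma \ref{corobdedM}, the resolvent representation \eqref{diffusionformula}, and the case analysis of Proposition \ref{prop:Gzeros2}.

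For the first inequality in (i), $h>\Gamma(x^*_{1,h})$, the assumption $x^*_{1,h}>a$ excludes case (A.3), so we are in (A.1) or (A.2). In (A.1), $h\ge \Gamma(b)>\Gamma(x^*_{1,h})$ by strict monotonicity of $\Gamma$ (Proposition \ref{prop4.1}) and $x^*_{1,h}<b$. In (A.2), the point $x_\Gamma(h)\in\cI$ satisfies $G(x_\Gamma(h),h)<0$ by \eqref{2.6}, hence $x_\Gamma(h)\in\cC_h$ by Lemma \ref{propcalculateG}; since $x^*_{1,h}$ is the left endpoint of the open set $\cC_h$, necessarily $x^*_{1,h}<x_\Gamma(h)$, and then $\Gamma(x^*_{1,h})<\Gamma(x_\Gamma(h))=h$.

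For the second inequality, $h+\Gamma(x^*_{1,h})\ge 2\GG_h(x^*_{1,h})$, smooth fit together with $V(\cdot,h)\le M$ and $V(x^*_{1,h},h)=M(x^*_{1,h})$ implies $V_{xx}(x^*_{1,h}+,h)\le M''(x^*_{1,h})$, which via the ODEs for $V$ and $M$ is equivalent to $G(x^*_{1,h},h)\ge 0$. Using Lemma \ref{corobdedM} and the first inequality, this yields only the weaker bound $h+\Gamma(x^*_{1,h})>2P'(x^*_{1,h})$. To strengthen to the claim, I express $\GG_h(x^*_{1,h})$ via L'H\^opital's rule and the resolvent formula on $(x^*_{1,h},x^*_{2,h})$ with fundamental solutions $\varphi_h,\psi_h$, obtaining
\[
\GG_h(x^*_{1,h})=\frac{\int_{x^*_{1,h}}^{x^*_{2,h}}\varphi_h(z)P'(z)\sigma^2z^2 m'(z)\,\ud z}{\int_{x^*_{1,h}}^{x^*_{2,h}}\varphi_h(z)\sigma^2z^2 m'(z)\,\ud z},
\]
so the claim is equivalent to $\int_{x^*_{1,h}}^{x^*_{2,h}}\varphi_h(z)[h+\Gamma(x^*_{1,h})-2P'(z)]\sigma^2z^2 m'(z)\,\ud z\ge 0$. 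Since $P'$ is strictly increasing, the bracket is positive at $z=x^*_{1,h}$ (just shown) and, by the analog of (ii) applied at $x^*_{2,h}$ when $x^*_{2,h}<b$, negative at $z=x^*_{2,h}$. Non-negativity of the integral is extracted from the decomposition $V-\widehat M=c_1\hat\psi+c_2\hat\varphi$ in $\cC_h$ (valid since $(\cL-2r)\widehat M(\cdot,h)=-f(\cdot,h)$) combined with matching values and slopes of $V-\widehat M$ and $M-\widehat M=-(h-\Gamma)^2\gamma_1$ at both boundaries of $\cC_h$, which produce algebraic identities linking $\Gamma(x^*_{1,h})$, $\Gamma(x^*_{2,h})$ and $\GG_h(x^*_{1,h})$.

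Part (iii) requires a separate treatment because smooth fit is not available at $x^*_{1,h}=a$. With $\cC_h=(a,x^*_{2,h})$ and $x^*_{2,h}<b$ by Proposition \ref{prop4.33}(ii), the boundary condition $V(a,h)=\widehat M(a,h)=0$ collapses $V-\widehat M=c_1\hat\psi+c_2\hat\varphi$ to $c_1\psi$ in $\cC_h$ (with $\psi$ as in \eqref{eq:fund}); since $V<\widehat M$ strictly on $\cC_h$, we have $c_1<0$. The inequality $V\le M$ then reads $c_1\psi(x)\le -(h-\Gamma(x))^2\gamma_1(x)$ on $\cC_h$, and combining this with the smooth-fit condition at $x^*_{2,h}$ and the explicit expression for $\GG_h(x)$ yields the strict inequality $h+\Gamma(x)>2\GG_h(x)$ on $(a,x_\Gamma(h))$. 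The hardest step throughout is the verification of the non-negativity of the weighted integral in (i): although the integrand's boundary signs are controlled by the $G$-analysis at each endpoint, establishing the global sign requires the full strength of the two-boundary smooth-fit identities together with the interplay between integrals on $(a,b)$ (entering $\Gamma$) and on $(x^*_{1,h},x^*_{2,h})$ (entering $\GG_h$).
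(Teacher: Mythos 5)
Your argument for the first inequality in (i) is correct and is a legitimate alternative to the paper's (the paper instead derives $h>\Gamma(x^*_{1,h})$ directly from $G(x^*_{1,h},h)\ge 0$ and the strict positivity of $(\Gamma'(x^*_{1,h}))^2\gamma_1(x^*_{1,h})$ in \eqref{2.6}). Your formula for $\GG_h(x^*_{1,h})$ via L'H\^opital and the resolvent on $(x^*_{1,h},x^*_{2,h})$ is also right. The problem is that everything after that point is a plan, not a proof. For the second inequality in (i) you reduce the claim to the non-negativity of $\int_{x^*_{1,h}}^{x^*_{2,h}}\varphi_h(z)[h+\Gamma(x^*_{1,h})-2P'(z)]\sigma^2z^2m'(z)\,\ud z$, observe the signs of the bracket at the two endpoints (which by themselves say nothing about the sign of the integral), and then assert that ``algebraic identities'' obtained from matching values and slopes of $V-\widehat M$ and $M-\widehat M$ will close the argument. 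You never produce those identities, and you yourself flag this as the hardest step. The same happens in (iii): after the correct observation that $V-\widehat M(\cdot,h)=c_1\psi$ with $c_1<0$, the conclusion is claimed to follow by ``combining'' this with smooth fit at $x^*_{2,h}$, with no derivation. As written, the central inequalities of the lemma are unproved.

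The idea you are missing is the paper's decomposition of both $V$ and $M$ at the stopping time $\tau^*_h$ rather than at $\tau_\cI$. Writing $V=V_1+V_2$ and $M=M_1+M_2$, where the subscript $1$ denotes the contribution on $[0,\tau^*_h]$ and the subscript $2$ the contribution on $[\tau^*_h,\tau_\cI]$ (with stock holding $\Gamma(S_{\tau^*_h})$ for $V_2$ and $\Gamma(x)$ frozen for $M_2$), one has $V_2\le M_2$ on $\cC_h$ by the optimality of $\Gamma(S_{\tau^*_h})$, with equality at $x^*_{1,h}$; hence $V_2'(x^*_{1,h})\le M_2'(x^*_{1,h})$, and smooth fit converts this into $V_1'(x^*_{1,h})\ge M_1'(x^*_{1,h})$. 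Expressing $V_1'$ and $M_1'$ through $\gg_{h,i}'$ (the terms $\gg_{h,1},\gg_{h,2}$ vanish at $x^*_{1,h}$) gives exactly $(h^2-\Gamma^2)\gg_{h,1}'-2(h-\Gamma)\gg_{h,2}'\ge 0$ at $x^*_{1,h}$, and dividing by $(h-\Gamma(x^*_{1,h}))\gg_{h,1}'(x^*_{1,h})>0$ yields the claim — no weighted-integral positivity is needed. For (iii)/(iv) the paper's mechanism is again different from yours: it starts from the strict pointwise inequality $M(x)<\widehat M(x,h)$ for $x\ne x_\Gamma(h)$ (strict convexity of $\zeta\mapsto\widehat M(x,\zeta)$), decomposes both sides at $\tau^*_h$, and uses the strong Markov property together with $\{S_{\tau^*_h}=b\}=\{\tau^*_h=\tau_\cI\}$ to show $\widetilde V_2<M_2$, hence $V_1>M_1$, which after dividing by $(h-\Gamma(\hat x))\gg_{h,1}(\hat x)<0$ gives the strict reversed inequality. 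I would recommend you rebuild parts (i)--(iv) around this decomposition; your boundary-sign observations and the ODE identity $V-\widehat M=c_1\psi$ are compatible with it but do not substitute for it.
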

The statement of the lemma has an intuitive financial interpretation. In (i), if the left stopping boundary $x^*_{1,h}$ is non-trivial, then the optimal trade at $x^*_{1,h}$ is to increase the short position in the stock (recall that $\Gamma$ is negative). This is consistent with the Delta hedge $P'$ being an increasing function starting from $-1$ at $\am$ and increasing to $0$ at $\infty$. Analogously, statement (ii) says that if the right stopping boundary $x^*_{2,h}$ is non-trivial, the optimal trade at $x^*_{2,h}$ is to reduce the short position in the stock. Statements (iii)-(iv) formulate a stronger version of the previous two when the stopping boundaries are trivial.

\begin{proof}[Proof of Lemma \ref{lemma5.3.1}]
Recall that $x_p(h) \in \cI$ and $G(x_p(h),h)<0$, where $x_p(h)$ is defined in \eqref{2.7}. Hence, it must be $x^*_{1,h}<x_p(h)<x^*_{2,h}$ for $h\in \cH$. The relative placement of $x^*_{1,h}$, $x^*_{2,h}$, $x_p(h)$ and $x_\Gamma(h)$ will be central in this proof. 

{\em Proof of $(i)$:}
If $x^*_{1,h}>a$, then $G(x^*_{1,h},h)\geq0$ by \eqref{eq:simple} and
\begin{equation}\label{5.3}
\begin{aligned}
0 \leq G(x^*_{1,h},h) &<\sigma^2(x^*_{1,h})^2\left((h-P'(x^*_{1,h}))^2-(\Gamma(x^*_{1,h})-P'(x^*_{1,h}))^2\right)\\
&=\sigma^2(x^*_{1,h})^2(h-\Gamma(x^*_{1,h}))(h+\Gamma(x^*_{1,h})-2P'(x^*_{1,h})),
\end{aligned}
\end{equation}
where the strict inequality comes from \eqref{2.6} upon noting that $\big(\Gamma'(x^*_{1,h})\big)^2\gamma_1(x^*_{1,h})>0$ since $x^*_{1,h}\in\cI$.
Recalling that $x^*_{1,h}<x_p(h)$ and $P'$ is strictly increasing, we have $P'(x^*_{1,h})<h$, so 
\begin{align}\label{eq:pos}
h+\Gamma(x^*_{1,h})-2P'(x^*_{1,h})>\Gamma(x^*_{1,h})-h.
\end{align}
If $h- \Gamma(x^*_{1,h}) < 0$, combining \eqref{5.3} and \eqref{eq:pos} gives $(h-\Gamma(x^*_{1,h}))^2 < 0$, which is impossible. The equality $h - \Gamma(x^*_{1,h}) = 0$ contradicts \eqref{5.3}. Hence, $h- \Gamma(x^*_{1,h})>0$, which is the first claim in $(i)$.

For the second claim we expand the square in $f(S_u,h)$ and obtain
\begin{equation}\label{eq:fh}
\E_x\left[\int_0^{\tau^*_{h}}e^{-2ru}f(S_u,h)\ud u\right]=h^2\gg_{h,1}(x)-2h\gg_{h,2}(x)+\gg_{h,3}(x)
\end{equation}
with the notation introduced in \eqref{eq:gg1}. The explicit formulae for $\gg_{h,i}$, $i=1,2,3$, can be derived from \eqref{diffusionformula} upon replacing $a$ and $b$ by $x^*_{1,h}$ and $x^*_{2,h}$, and $\varphi$ and $\psi$ by $\varphi_{h}$ and $\psi_{h}$. The latter are, respectively, the decreasing and increasing fundamental solutions of the ODE
\begin{align*}
(\mathcal{L}-2r)u(x)&=0,\quad x\in(x^*_{1,h},x^*_{2,h}),
\end{align*} 
with the boundary conditions 
\[
\psi_{h}(x^*_{1,h}+)=0,\quad\psi'_{h}(x^*_{1,h}+)>0,\quad\varphi_{h}(x^*_{2,h}-)=0, \quad\varphi'_{h}(x^*_{2,h}-)<0.
\]
Again, these can be calculated explicitly using \eqref{eq:fund}.
Later we will also use that
\begin{align}
\label{eq:gineq1}\gg_{h,1}'(x^*_{1,h})&=\wron^{-1}_h\psi_{h}'(x^*_{1,h})\int_{x^*_{1,h}}^{x^*_{2,h}}\varphi_{h}(z)\sigma^2z^2m'(z)\ud z>0,
\end{align}
where $w_h=\hat w\big(1-(x^*_{1,h}/x^*_{2,h})^{q_1-q_2}\big)$ is the Wronskian (c.f. \eqref{eq:w}). 

From \eqref{eq:M} and \eqref{2.5} we can write the value function $V$ and the stopping payoff $M$ as
\begin{align}
\label{eq:V12}V(x,h)&=\E_x\left[\int_0^{\tau^*_{h}}e^{-2ru}f(S_u,h)\ud u\right]+\E_x\left[\int_{\tau^*_{h}}^{\tau_\cI}e^{-2ru}f(S_u,\Gamma(S_{\tau^*_{h}}))\ud u\right]\\
             &=V_1(x)+V_2(x),\nonumber\\
\label{eq:M12}M(x)&=\E_x\left[\int_0^{\tau^*_{h}}e^{-2ru}f(S_u,\Gamma(x))\ud u\right]+\E_x\left[\int_{\tau^*_{h}}^{\tau_\cI}e^{-2ru}f(S_u,\Gamma(x))\ud u\right]\\
          &=M_1(x)+M_2(x),\nonumber
\end{align}
where in $V_1,V_2, M_1, M_2$ we omit the dependence on $h\in\cH$ which is fixed. Thanks to the explicit formulae for $\psi_h$ and $\varphi_h$,  \eqref{diffusionformula} and $\Gamma \in C^\infty(\cI)$ (Proposition \ref{prop4.1}) it is not hard to verify that $M_1,V_1\in C^1([x^*_{1,h},x^*_{2,h}])$. For $V_2$, using the strong Markov property we have 
\begin{align*}
V_2(x)&=\widehat M(x^*_{1,h},\Gamma(x^*_{1,h}))\E_{x}\left[e^{-2r\tau^*_{1,h}}1_{\{\tau^*_{1,h}<\tau^*_{2,h}\}}\right]
+\widehat M(x^*_{2,h},\Gamma(x^*_{2,h}))\E_{x}\left[e^{-2r\tau^*_{2,h}}1_{\{\tau^*_{1,h}>\tau^*_{2,h}\}}\right],
\end{align*}
where $\tau^*_{1,h}$, $\tau^*_{2,h}$ denote the first entry time to $[a,x^*_{1,h}]$ and $[x^*_{2,h},b]$, respectively. It is well-known that the two expected values on the right-hand side of the equation above can be expressed in terms of $\psi_h$ and $\varphi_h$ (\cite[Chapter II, Par.~10]{borodin2012handbook}), hence proving $V_2 \in C^1([x^*_{1,h},x^*_{2,h}])$. An analogous argument applies for $M_2$. 

Since $V=M$ at $x^*_{1,h}$ and the smooth-fit holds we have
\begin{align}
\label{eq:CF}V_1(x^*_{1,h})+V_2(x^*_{1,h})&=M_1(x^*_{1,h})+M_2(x^*_{1,h}),\\
\label{eq:SF}V_1'(x^*_{1,h})+V_2'(x^*_{1,h})&=M_1'(x^*_{1,h})+M_2'(x^*_{1,h}).
\end{align}
Noticing that $\P_{x^*_{1,h}}(\tau^*_h=0)=1$ we have $V_1(x^*_{1,h})= M_1(x^*_{1,h}) = 0$ and hence,
\begin{align}
V_2(x^*_{1,h})= M_2(x^*_{1,h}).
\end{align} 
Using the optimality of $\Gamma(x)$ for $\widehat M(x, \cdot)$ and the strong Markov property, for $x\in(x^*_{1,h},x^*_{2,h})$ we have
\begin{align*}
V_2(x)=\E_x\left[e^{-2r\tau^*_{h}}M(S_{\tau^*_{h}})\right]\leq\E_x\left[e^{-2r\tau^*_{h}}\widehat{M}(S_{\tau^*_{h}},\Gamma(x))\right]=M_2(x).
\end{align*}
Hence, $V_2'(x^*_{1,h})\leq M_2'(x^*_{1,h})$. Inserting the latter into \eqref{eq:SF} we deduce
\begin{equation}
\label{v1'>m1'}
V_1'(x^*_{1,h})\geq M_1'(x^*_{1,h}).
\end{equation}

Our task is now to rewrite both sides of \eqref{v1'>m1'} using \eqref{eq:fh} and \eqref{Mprobability}. For an arbitrary $x\in[x^*_{1,h},x^*_{2,h}]$ we have 
\begin{align*} 
V_1'(x)&=h^2\gg_{h,1}'(x)-2h\gg_{h,2}'(x)+\gg_{h,3}'(x)\\
M_1'(x)&=\Gamma^2(x)\gg_{h,1}'(x)-2\Gamma(x)\gg_{h,2}'(x)+\gg_{h,3}'(x)
                    +2\Gamma(x)\Gamma'(x)\gg_{h,1}(x)-2\Gamma'(x)\gg_{h,2}(x).
\end{align*}
Inserting the above in \eqref{v1'>m1'} we obtain 
\begin{equation}
\label{5.8}
 (h^2-\Gamma^2(x^*_{1,h}))\gg_{h,1}'(x^*_{1,h})-2(h-\Gamma(x^*_{1,h}))\gg_{h,2}'(x^*_{1,h})\geq0.
\end{equation}
Since $\gg_{h,1}'(x^*_{1,h})>0$ by \eqref{eq:gineq1} and we have shown above that $h-\Gamma(x^*_{1,h})>0$, we can
divide both sides of \eqref{5.8} by $(h-\Gamma(x^*_{1,h}))\gg_{h,1}'(x^*_{1,h})$, thus obtaining
\begin{equation*}
h+\Gamma(x^*_{1,h})\geq2\frac{\gamma_{h,2}'(x^*_{1,h})}{\gamma_{h,1}'(x^*_{1,h})}=\lim_{x\downarrow x^*_{1,h}}\GG_h(x) =: \GG_h(x^*_{1,h}),
\end{equation*}
where the first equality follows from d'Hospital's rule (see \eqref{eq:gg1}). This concludes the proof of $(i)$.
\vspace{+4pt}

{\em Proof of (ii):} This is analogous to that of (i), hence we omit further details.
\vspace{+4pt}

{\em Proof of (iii) and (iv):} We give a full argument only for (iv) as the case of (iii) can be treated analogously. Fix $h \in \cH$ such that $h<\Gamma(b)$ and $x^*_{2,h}=b$. 

First we notice that for all $x\in\cI\setminus\{x_{\Gamma}(h)\}$ we have
\begin{equation}
\label{intsplit}
M(x)=\E_x\left[\int_0^{\tau_\cI}e^{-2ru}f(S_u,\Gamma(x))\ud u\right]<\E_x\left[\int_0^{\tau_\cI}e^{-2ru}f(S_u,h)\ud u\right],
\end{equation}
where the strict inequality is due to the fact that for each $x\in\cI$, the mapping $\zeta \mapsto \widehat M(x, \zeta)$ is strictly convex and attains its minimum at $\zeta=\Gamma(x)$.

Now fix an arbitrary point $\hat x\in (x_\Gamma(h),b)$. With the notation introduced in \eqref{eq:M12} we rewrite \eqref{intsplit} as 
\begin{equation}
\label{comp1}
\begin{aligned}
&M_1(\hat{x})+M_2(\hat{x})
<\E_{\hat{x}}\left[\int_0^{\tau^*_{h}}e^{-2ru}f(S_u,h)\ud u\right]+\E_{\hat{x}}\left[\int_{\tau^*_{h}}^{\tau_\cI}e^{-2ru}f(S_u,h)\ud u\right]
=:V_1(\hat{x})+ \widetilde{V}_2(\hat{x}).
\end{aligned}
\end{equation}
Here we are again omitting the dependence of $V_1$ and $\widetilde V_2$ on $h$ and note that $V_1$ is the same as in \eqref{eq:V12}, whereas $\widetilde V_2$ is not.
Since $x^*_{2,h}=b$, we have $\{S_{\tau^*_{h}}=b\}=\{\tau^*_{h}=\tau_\cI\}$, so $\{S_{\tau^*_{h}}=x^*_{1,h}\}=\{\tau^*_{h}<\tau_\cI\}$.
Using this fact and the strong Markov property we obtain
\begin{align}\label{eq:V2M2}
\widetilde{V}_2(\hat{x})-M_2(\hat{x})
&=\E_{\hat{x}}\left[e^{-2r\tau^*_{h}}\E_{S_{\tau^*_{h}}}\left[\int_0^{\tau_\cI}e^{-2ru}\big(f(S_u,h)-f(S_u,\Gamma(\hat x))\big)\ud u\right]\right]\notag\\
&=\E_{\hat{x}}\left[e^{-2r\tau^*_{h}}1_{\{\tau^*_{h}<\tau_\cI\}}\right]\E_{x^*_{1,h}}\left[\int_0^{\tau_\cI}e^{-2ru}\big(f(S_u,h)-f(S_u,\Gamma(\hat x))\big)\ud u\right]\notag\\
&=\E_{\hat{x}}\left[e^{-2r\tau^*_{h}}1_{\{\tau^*_{h}<\tau_\cI\}}\right] \big(\widehat M(x^*_{1,h}, h) - \widehat M(x^*_{1,h}, \Gamma(\hat x)) \big)<0,
\end{align}
where it remains to justify the final inequality. Since $\hat{x}>x_{\Gamma}(h)>x^*_{1,h}$, by the monotonicity of $\Gamma$ (Proposition \ref{prop4.1}), we have 
\begin{align}\label{eq:h0G}
\Gamma(\hat{x})>h>\Gamma(x^*_{1,h}). 
\end{align}
Since the mapping $\zeta \mapsto \widehat M(x^*_{1,h}, \zeta)$ is strictly convex and attains its minimum at $\Gamma(x^*_{1,h})$, it is strictly increasing for $\zeta>\Gamma(x^*_{1,h})$. Hence the inequality in \eqref{eq:V2M2} holds and $\widetilde{V}_2(\hat{x})<M_2(\hat{x})$ upon noticing that $\P_{\hat x}(\tau^*_{h}<\tau_\cI)>0$.

Combining \eqref{comp1} with \eqref{eq:V2M2} implies $V_1(\hat{x})>M_1(\hat{x})$. Rewriting this inequality in terms of the functions $\gg_{h,i}$, $i=1,2,3$, given in \eqref{eq:gg1}, we obtain
\begin{equation}
\label{5.9}
 (h^2-\Gamma^2(\hat{x}))\gg_{h,1}(\hat{x})-2(h-\Gamma(\hat{x}))\gg_{h,2}(\hat{x})>0.
\end{equation}
It is clear from \eqref{eq:gg1} that $\gg_{h,1}(\hat x)>0$ since $\hat x\in(x^*_{1,h},x^*_{2,h})$. Then, using also \eqref{eq:h0G} we can divide both sides of \eqref{5.9} by $(h-\Gamma(\hat{x}))\gg_{h,1}(\hat{x})<0$ to obtain
\begin{equation*}
h+\Gamma(\hat{x})<2\frac{\gg_{h,2}(\hat{x})}{\gg_{h,1}(\hat{x})}=2\GG_{h}(\hat{x}).
\end{equation*}
\end{proof}

With Lemma \ref{lemma5.3.1} in place we can now show that the optimal stopping boundaries $x^*_{1,h}$, $x^*_{2,h}$ are non-decreasing in $h$.  
\begin{theorem}
\label{prop5.4.1}
Let Assumption \ref{ass:zeros1} hold. Then, the mappings $h\mapsto x^*_{1,h}$ and $h\mapsto x^*_{2,h}$ are non-decreasing on $\cH$.
\end{theorem}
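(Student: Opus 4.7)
The plan is to argue by contradiction, combining the probabilistic identity
\begin{equation*}
V(x,h)-M(x)=\E_x\Big[\int_0^{\tau^*_{x,h}} e^{-2ru}G(S_u,h)\ud u\Big],
\end{equation*}
obtained by applying Dynkin's formula to $M\in C^2(\cI)$ up to the bounded stopping time $\tau^*_{x,h}\le\tau_\cI$, with the boundary inequalities of Lemma \ref{lemma5.3.1}. I will carry out the argument for the left boundary; the right one follows from the mirror argument using Lemma \ref{lemma5.3.1}(ii) in place of (i).

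The trivial boundary cases are dispatched via Proposition \ref{prop:Gzeros2}, which gives $x^*_{1,h}=a$ if and only if $h\in[P'(a),\Gamma(a)]$. If $h_1\le\Gamma(a)$, then $x^*_{1,h_1}=a\le x^*_{1,h_2}$ trivially; if $h_2\le\Gamma(a)$, then $h_1<h_2\le\Gamma(a)$ forces $x^*_{1,h_1}=x^*_{1,h_2}=a$. Hence I may assume $h_1,h_2\in(\Gamma(a),P'(b)]$, so that $x^*_{1,h_1},x^*_{1,h_2}\in(a,b)$ and Lemma \ref{lemma5.3.1}(i) applies at both points.

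Now suppose, for contradiction, that $x^*_{1,h_2}<x^*_{1,h_1}$. Since $x^*_{1,h_2}<x^*_{2,h_2}$, the interval $\cI_0:=(x^*_{1,h_2},\min\{x^*_{2,h_2},x^*_{1,h_1}\})$ is non-empty and each $x_0\in\cI_0$ lies in $\cC_{h_2}\cap\cD_{h_1}$. Applying the identity above with $h=h_2$ (using optimality of $\tau^*_{x_0,h_2}$) and with $h=h_1$ (using $\tau^*_{x_0,h_2}\in\cT_{x_0}$ as a sub-optimal rule and $V(x_0,h_1)=M(x_0)$) yields, respectively,
\begin{equation*}
\E_{x_0}\Big[\int_0^{\tau^*_{x_0,h_2}} e^{-2ru}G(S_u,h_2)\ud u\Big]<0 \quad\text{and}\quad \E_{x_0}\Big[\int_0^{\tau^*_{x_0,h_2}} e^{-2ru}G(S_u,h_1)\ud u\Big]\ge 0.
\end{equation*}
Subtracting and inserting the explicit difference $G(x,h_2)-G(x,h_1)=\sigma^2x^2(h_2-h_1)(h_1+h_2-2P'(x))$, together with $h_2-h_1>0$ and $\gg_{h_2,1}(x_0)>0$ for $x_0\in\cC_{h_2}$, I conclude $\GG_{h_2}(x_0)>(h_1+h_2)/2$ for every $x_0\in\cI_0$.

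Letting $x_0\downarrow x^*_{1,h_2}$ along $\cI_0$ and invoking the continuity of $\GG_{h_2}$ at $x^*_{1,h_2}^+$ from Proposition \ref{prop5.2.1}, the pointwise inequality passes to $\GG_{h_2}(x^*_{1,h_2})\ge(h_1+h_2)/2$. Combining this with Lemma \ref{lemma5.3.1}(i) applied to $h_2$, namely $\GG_{h_2}(x^*_{1,h_2})\le(h_2+\Gamma(x^*_{1,h_2}))/2$, gives $h_1\le\Gamma(x^*_{1,h_2})$. Lemma \ref{lemma5.3.1}(i) applied to $h_1$ gives $\Gamma(x^*_{1,h_1})<h_1$, so strict monotonicity of $\Gamma$ (Proposition \ref{prop4.1}) forces $x^*_{1,h_1}<x^*_{1,h_2}$, contradicting the standing assumption. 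The main obstacle is the intermediate extraction of $\GG_{h_2}(x_0)>(h_1+h_2)/2$ via the $V-M$ identity and the admissibility of $\tau^*_{x_0,h_2}$ in the $h_1$-problem; once it is in hand, the sandwich $\Gamma(x^*_{1,h_1})<h_1\le\Gamma(x^*_{1,h_2})$ closes the argument in one line, and the mirror version—using Lemma \ref{lemma5.3.1}(ii) and the interval $(\max\{x^*_{2,h_2},x^*_{1,h_1}\},x^*_{2,h_1})$ under the assumption $x^*_{2,h_2}<x^*_{2,h_1}$—handles the right boundary.
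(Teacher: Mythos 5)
Your core mechanism is sound and is essentially the paper's: compare $V$ and $M$ at a point of $\cC_{h_2}\cap\cD_{h_1}$ using the common stopping time $\tau^*_{x_0,h_2}$, extract the inequality $2\GG_{h_2}(x_0)>h_1+h_2$ (your $G$-difference is the same as the paper's $f$-difference, since the $M$-part of $G$ cancels), and close with Lemma \ref{lemma5.3.1}. Your final sandwich $\Gamma(x^*_{1,h_1})<h_1\le\Gamma(x^*_{1,h_2})$ even streamlines the paper's sub-case split (Cases 2a/2b) into one line. However, there is a genuine gap in your case reduction: the claim that $h\in(\Gamma(a),P'(b)]$ forces $x^*_{1,h}>a$ is false. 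Proposition \ref{prop:Gzeros2} and Proposition \ref{prop4.33}(ii) only say that for $h\in(\Gamma(a),\Gamma(b))$ (regime (A.2)) \emph{at least one} of $x^*_{1,h},x^*_{2,h}$ lies in $\cI$; the other may sit at $a$ or at $b$. So "$x^*_{1,h}=a$ iff $h\le\Gamma(a)$" holds only in the "if" direction. Consequently the case $h_1<h_2$ with $h_2\in(\Gamma(a),\Gamma(b))$, $x^*_{1,h_2}=a$ and (putatively) $x^*_{1,h_1}>a$ is not covered: Lemma \ref{lemma5.3.1}(i) cannot be applied to $h_2$ there, and your sandwich breaks down. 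This is exactly the situation for which parts (iii)--(iv) of Lemma \ref{lemma5.3.1} exist, and it is what the paper isolates as ``Case 1'' (for the right boundary, where the same issue occurs when $x^*_{2,h_1}=b$ with $h_1<\Gamma(b)$ — your mirror argument has the identical hole).

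The gap is repairable with tools you already cite. In the missing case your interval $\cI_0$ becomes $(a,\min\{x^*_{2,h_2},x^*_{1,h_1}\})$ and the inequality $2\GG_{h_2}(x_0)>h_1+h_2$ still holds on it. Intersecting with $(a,x_\Gamma(h_2))$ and invoking Lemma \ref{lemma5.3.1}(iii) (valid since $h_2>\Gamma(a)$ and $x^*_{1,h_2}=a$) gives
\begin{equation*}
h_2+\Gamma(x_0)>2\GG_{h_2}(x_0)>h_1+h_2,\qquad\text{i.e.}\qquad \Gamma(x_0)>h_1,
\end{equation*}
and letting $x_0\downarrow a$ yields $\Gamma(a)\ge h_1$, contradicting the standing reduction $h_1>\Gamma(a)$. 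With this extra case (and its mirror via Lemma \ref{lemma5.3.1}(iv) for the right boundary) your proof is complete and matches the paper's in substance.
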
     
\begin{proof}
We only show that $h\mapsto x^*_{2,h}$ is non-decreasing as the arguments for the monotonicity of $h \mapsto x^*_{1,h}$ are analogous. For the clarity of notation let us set $x^*_i(h)=x^*_{i,h}$ for $i=1,2$.

Fix $h<\tilde h$ in $\cH$. If $\tilde h\ge \Gamma(b)$ then $x^*_{2}(\tilde h)=b$ by Propositions \ref{prop:Gzeros2} and \ref{prop4.33}, so trivially $x^*_2(h) \le x^*_2(\tilde h)$. Assume now that $\tilde h<\Gamma(b)$. We split the proof into two cases.
\vspace{+4pt}

({\em Case 1}). Let us first consider $x^*_2(h)=b$ (this can occur under (A.2); see Proposition \ref{prop4.33}). Arguing by contradiction we assume $x^*_2(\tilde h)<b$. Then, we have
\begin{equation}\label{eq:VM}
V(x,\tilde{h})=M(x)>V(x,h),\quad \text{for all $x\in (x^*_2(\tilde h)\vee x^*_1(h), b)$}.
\end{equation}
Taking $\tau^*_h$ optimal for $V(x,h)$ and noticing that it is also admissible for $V(x,\tilde h)$, it is easy to check that \eqref{eq:VM} implies 
\begin{equation*}
\E_x\left[\int_0^{\tau^*_{h}}e^{-2ru}f(S_u,\tilde{h})\ud u\right]>\E_x\left[\int_0^{\tau^*_{h}}e^{-2ru}f(S_u,h)\ud u\right]\quad \text{for all $x\in (x^*_2(\tilde h)\vee x^*_1(h), b)$}.
\end{equation*}

Both expected values above can be written using the functions $\gg_{h,i}$, $i=1,2,3$, introduced in \eqref{eq:gg1} (see also \eqref{eq:fh}). This gives
\begin{equation*}
(\tilde{h}^2-h^2)\gg_{h,1}(x)-2 (\tilde{h}-h)\gg_{h,2}(x)>0.
\end{equation*}
Dividing both sides by $(\tilde{h}-h)\gg_{h,1}(x)>0$ we obtain
 \begin{equation}\label{eq:C1}
h+\tilde{h}>2\frac{\gg_{h,2}(x)}{\gg_{h,1}(x)}=2\GG_h(x).
\end{equation}
Since $x_{\Gamma}(\tilde{h})\in \mathcal{C}_{\tilde{h}}$ by \eqref{x_h} and $\Gamma$ is strictly increasing, we have $x_\Gamma(\tilde h)<x^*_{2}(\tilde h)$ and $\tilde h <\Gamma(x)$ for $x\in (x^*_2(\tilde{h})\vee x^*_1(h), b)$. Hence,
\begin{equation*}
\Gamma(x)+h>\tilde{h} + h>2\GG_h(x),
\end{equation*}
which contradicts (iv) in Lemma \ref{lemma5.3.1}.
\vspace{+4pt}

({\em Case 2}). Let us now consider $x^*_2(h)<b$. In this case we have $\Gamma(x^*_2(h))<\Gamma(b)$, which gives rise to two sub-cases.

({\em Case 2a}). If $h<\Gamma(x^*_2(h))\leq\tilde{h}<\Gamma(b)$, by monotonicity of $\Gamma$ we obtain 
$x^*_2(h)\leq x_{\Gamma}(\tilde{h})$. Moreover, using that $x_{\Gamma}(\tilde{h})\in \mathcal{C}_{\tilde{h}}$, it must be $x_{\Gamma}(\tilde{h})<x^*_2(\tilde{h})$. Hence the claim.

({\em Case 2b}). If $h<\tilde{h}<\Gamma(x^*_2(h))<\Gamma(b)$, we adapt arguments from Case 1 above. Assume, by contradiction, that $x^*_2(\tilde{h}) < x^*_2(h)$. Then, as in \eqref{eq:C1}, we have $h+\tilde{h}>2\GG_h(x)$ for all $x\in (x^*_2(\tilde{h})\vee x^*_1(h), x^*_2(h))$.
By assumption $\tilde{h}<\Gamma(x^*_2(h))$, hence
\begin{equation*}
h+\Gamma(x^*_2(h))>h+\tilde{h}\geq 2\lim_{x\uparrow x^*_2(h)}\GG_h(x),
\end{equation*}
which contradicts (ii) in Lemma \ref{lemma5.3.1}.
\end{proof}

Theorem \ref{prop5.4.1} allows us to prove the continuity of the optimal boundaries and the continuity of the optimal stopping time with respect to $x$ and $h$ (jointly). This is needed to prove that $\partial_hV$ exists and it is (jointly) continuous, which will then allow to establish first order conditions for a minimiser in \eqref{eq:vscript}.

\begin{theorem}
\label{prop5.5.1}
Let Assumption \ref{ass:zeros1} hold. Then the mappings $h\mapsto x^*_{1,h}$ and $h\mapsto x^*_{2,h}$ are continuous on $\cH$. Moreover, $(x,h)\mapsto \tau^*_{x,h}$ is continuous on $\cI\times\cH$, $\P$-a.s.
\end{theorem}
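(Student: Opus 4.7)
My plan is to leverage the monotonicity established in Theorem \ref{prop5.4.1} together with the Lipschitz continuity of $V$ from Proposition \ref{vlips}. For a sequence $h_n\to h$ in $\cH$, monotonicity guarantees existence of limits $\bar x_i:=\lim_n x^*_{i,h_n}$ along monotone subsequences. From $x^*_{1,h_n}<x_p(h_n)<x^*_{2,h_n}$ and continuity of $h\mapsto x_p(h)$ we get $\bar x_1\le x_p(h)\le \bar x_2$, and in particular $\bar x_1<\bar x_2$. It then suffices to rule out $\bar x_i\ne x^*_{i,h}$.

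Two of the four possible jumps can be excluded at once by continuity of $V$. For instance, if $h_n\downarrow h$ and $\bar x_1>x^*_{1,h}$, then any $x\in(x^*_{1,h},\bar x_1)$ satisfies $x<x^*_{1,h_n}$ for $n$ large, so $x\in\cD_{h_n}$ and $V(x,h_n)=M(x)$; on the other hand $x\in\cC_h$ (since $\bar x_1\le x_p(h)<x^*_{2,h}$) gives $V(x,h)<M(x)$, contradicting $V(x,h_n)\to V(x,h)$. The symmetric case $h_n\uparrow h$ with $\bar x_2<x^*_{2,h}$ is handled identically.

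The remaining two cases (e.g.\ $h_n\uparrow h$ with $\bar x_1<x^*_{1,h}$) are subtler because now $x\in(\bar x_1,x^*_{1,h})$ lies in $\cC_{h_n}$ yet in $\cD_h$, which is perfectly consistent with $V(x,h_n)\uparrow V(x,h)=M(x)$. Here I would exploit the explicit representation \eqref{V:analytic}: on $\cC_{h_n}$, $V(\cdot,h_n)=C_1(h_n)x^{q_1}+C_2(h_n)x^{q_2}-x^2(h_n-d^{-1}(\am/x)^{1+d})^2$, where the pair $(C_1(h_n),C_2(h_n))$ solves the $2\times 2$ value-matching system at $x^*_{1,h_n}$ and $x^*_{2,h_n}$. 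Since $\bar x_1<\bar x_2$, the determinant $(x^*_{1,h_n})^{q_1}(x^*_{2,h_n})^{q_2}-(x^*_{1,h_n})^{q_2}(x^*_{2,h_n})^{q_1}$ stays bounded away from zero, so $C_i(h_n)\to \bar C_i$ and $V(x,h_n)\to \bar V(x):=\bar C_1 x^{q_1}+\bar C_2 x^{q_2}-x^2(h-d^{-1}(\am/x)^{1+d})^2$ for every $x\in(\bar x_1,\bar x_2)$. Continuity of $V$ forces $V(\cdot,h)=\bar V$ on $(\bar x_1,\bar x_2)$; together with $V(\cdot,h)=M$ on $(\bar x_1,x^*_{1,h})\subset\cD_h$ and $(\cL-2r)\bar V+f(\cdot,h)\equiv 0$, this gives $G(\cdot,h)=0$ on a nontrivial subinterval. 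Real-analyticity of $G(\cdot,h)$ on $\cI$ then propagates to $G(\cdot,h)\equiv 0$ on $\cI$, contradicting the existence of a point with $G(\cdot,h)<0$ identified in the proof of Proposition \ref{prop4.202}. The case $h_n\downarrow h$ with $\bar x_2>x^*_{2,h}$ is symmetric; this analytic-continuation step is, in my view, the main technical obstacle.

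Continuity of $(x,h)\mapsto\tau^*_{x,h}$ is then a pathwise sandwich built on the boundary continuity just obtained. For $(x_n,h_n)\to(x_0,h_0)$ and $s<\tau^*_{x_0,h_0}$, the compact set $\{S^{x_0}_t:0\le t\le s\}$ lies in the open set $\cC_{h_0}$, so uniform convergence $S^{x_n}\to S^{x_0}$ on $[0,s]$ (immediate from $S^x_t=x\exp((r-\sigma^2/2)t+\sigma W_t)$) combined with continuity of $h\mapsto x^*_{i,h}$ keeps $S^{x_n}_t\in\cC_{h_n}$ up to time $s$ for large $n$, yielding $\liminf_n\tau^*_{x_n,h_n}\ge \tau^*_{x_0,h_0}$. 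The reverse inequality relies on \eqref{eq:tauOa}: almost surely $S^{x_0}$ penetrates the interior of $\cD_{h_0}$ arbitrarily soon after $\tau^*_{x_0,h_0}$, and boundary continuity together with uniform path convergence carry that penetration into $\cD_{h_n}$, giving $\limsup_n\tau^*_{x_n,h_n}\le \tau^*_{x_0,h_0}$.
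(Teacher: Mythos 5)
Your proof is correct, and the overall architecture (monotone limits $\bar x_i$, four jump cases split into easy/hard, then a pathwise sandwich for the stopping times) matches the paper's; the genuine divergence is in the hard half of the boundary continuity. The easy cases and the continuity of $(x,h)\mapsto\tau^*_{x,h}$ are essentially the paper's arguments: the paper phrases the former as closedness of $\cD$ (equivalent to your $V(x,h_n)=M(x)\to V(x,h)<M(x)$ contradiction) and proves the latter via the distance of the path to $\partial\cC$ together with the a.s.\ identity $\tau^*_{x,h}=\hat\tau^*_{x,h}$ from \eqref{eq:tauOa}, exactly as you do. For the hard cases the paper, following \cite{de2015note}, multiplies $(\cL-2r)V(\cdot,h+\eps)=-f(\cdot,h+\eps)$ by a nonnegative $\Psi\in C^\infty_c$, integrates against the adjoint $\cL^*$, lets $\eps\to0$ by dominated convergence, and deduces $\int\Psi(z)G(z,h)\,\ud z=0$, hence $G(\cdot,h)=0$ a.e.\ on a nontrivial interval; you instead exploit the explicit general solution \eqref{V:analytic}, noting that the $2\times2$ value-matching system has determinant bounded away from zero because $\bar x_1<\bar x_2$, so the constants converge and $V(\cdot,h)$ is a classical solution of the inhomogeneous ODE on $(\bar x_1,\bar x_2)$, forcing $G(\cdot,h)\equiv 0$ on the putative gap. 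Both routes land on the same contradiction with Assumption \ref{ass:zeros1}; yours is more computational and tied to the one-dimensional explicit solvability, the paper's weak-formulation argument is more robust. Two minor remarks: (a) once $G(\cdot,h)=0$ on a nontrivial interval, Assumption \ref{ass:zeros1} (at most two roots) is contradicted directly, so your real-analyticity/continuation step is superfluous, though valid since $G(\cdot,h)$ is an explicit combination of powers of $x$ on $\cI$; (b) the strict inequality $\bar x_1<\bar x_2$ does not follow from $\bar x_1\le x_p(h)\le\bar x_2$ alone --- it should be obtained in each hard case by first settling the easy case for the other boundary, e.g.\ for $h_n\uparrow h$ one has $\bar x_2=x^*_{2,h}>x_p(h)>x^*_{1,h}\ge\bar x_1$.
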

\begin{proof}
First we show continuity of the optimal boundaries and then continuity of the stopping times. For the clarity of notation let us set $x^*_i(h)=x^*_{i,h}$ for $i=1,2$.
\vspace{+4pt}

({\em Continuity of the boundaries}). We only give full arguments for the upper boundary $x^*_2$ as the case of the lower boundary $x^*_1$ can be handled analogously. First we show that $x^*_2$ is left-continuous using a standard argument (see, e.g., \cite[Chapter \rom{7}]{peskir2006optimal}). Fix $h\in \cH$ and consider an increasing sequence $(h_n)_{n\ge 1} \subset \cH$ such that $h_n\uparrow h$ as $n\rightarrow \infty$. For each $n\geq1$, we have $(x^*_2(h_n),h_n)\in \cD$ and in the limit
\begin{equation*}
\lim_{n\rightarrow \infty}(x^*_2(h_n),h_n)=(x^*_2(h-), h),
\end{equation*}
where the left limit $x^*_2(h-)$ is well-defined by the monotonicity of $x^*_2$. Since $\cD$ is closed it must be $(x^*_2(h-), h)\in \cD$ and then $x^*_2(h-)\geq x^*_2(h)$. However, since $x^*_2(\cdot)$ is increasing we also have $x^*_2(h-)\leq x^*_2(h)$, so that left-continuity follows.

The proof of right-continuity of $x^*_2$ follows ideas contained in \cite{de2015note}. If $x^*_2(h)=b$ the claim is trivial. Consider the case $x^*_2(h)<b$. Arguing by contradiction let us assume that $x^*_2(h+)>x^*_2(h)$. Then we can find $x_d$ and $x_u$, such that $x^*_2(h)<x_d<x_u<x^*_2(h+)$, and a sufficiently small $\eps>0$ such that $(x_d,x_u)\times(h,h+\eps]\subset\cC$. Recalling \eqref{eq:FBPV}, we have
\begin{equation}\label{eq:ODE}
(\mathcal{L}-2r)V(x,h+\eps)=-f(x,h+\eps),\quad \text{for $x\in(x_d,x_u)$}.
\end{equation}
Take any $\Psi\in C_{c}^{\infty}((x_d,x_u))$ with $\Psi\ge 0$. Multiplying \eqref{eq:ODE} by $\Psi$, integrating over $[x_d,x_u]$ and using integration by parts we obtain
\begin{equation}\label{eq:ODE2}
\int_{x_d}^{x_u}V(z,h+\eps)(\mathcal{L}^*\Psi-2r\Psi)(z)\ud z=-\int_{x_d}^{x_u}f(z,h+\eps)\Psi(z)\ud z,
\end{equation}
where $\cL^*$ is the adjoint of $\cL$:
\begin{equation*}
(\mathcal{L}^*-2r)\Psi(x)=\frac{1}{2}\frac{\partial^2}{\partial x^2}(\Psi(x)\sigma^2x^2)-\frac{\partial}{\partial x}(\Psi(x)rx)-2r\Psi(x).
\end{equation*}

By the continuity of $V$, we have $\lim_{\eps \to 0} V(z,h+\eps)= V(z, h) = M(z)$ for all $z\in(x^*_{2}(h),x^*_2(h+))$. Then, using the dominated convergence theorem in \eqref{eq:ODE2} to pass to the limit as $\eps\to 0$ we get
\begin{equation*}
-\int_{x_d}^{x_u}f(z,h)\Psi(z)\ud z=\int_{x_d}^{x_u}M(z)(\mathcal{L}^*-2r)\Psi(z)\ud z
=\int_{x_d}^{x_u}\Psi(z)(\mathcal{L}-2r)M(z)\ud z.
\end{equation*}
This is equivalent to $\int_{x_d}^{x_u}\Psi(z) G(z, h) = 0$. However, $[x_d,x_u]$ is in the stopping region $\cD_h$, so $G(z, h) \ge 0$. Recalling that $\Psi$ is arbitrary and non-negative, we conclude that $G(z, h) = 0$ for almost all $z \in [x_d, x_u]$, which contradicts Assumption \ref{ass:zeros1}.

\vspace{+4pt}

({\em Continuity of optimal stopping times}). This part of the proof is based on ideas from \cite{de2018global} (see also, e.g., \cite{menaldi1980optimal}). 
Let
\[
\hat \tau^*_{x,h}:=\inf\{t\ge 0: S^x_t\notin [x^*_1(h),x^*_2(h)]\}
\]
and $\Omega^0 = \{ \tau^*_{x,h} = \hat \tau^*_{x,h} \}$. By \eqref{eq:tauOa}, we have $\P(\Omega^0) = 1$.

Fix $(x,h)\in \cI\times\cH$ and let $(x_n,h_n)_{n\ge 1}$ be a sequence converging to $(x,h)$ as $n\rightarrow \infty$. For any $\omega \in \Omega^0$, if $\tau^*_{x,h}(\omega)=0$, then lower semi-continuity holds trivially. If $\tau^*_{x,h}(\omega)>0$, then for any $t>0$ such that $\tau^*_{x,h}(\omega)>t$, there exists $\eps>0$ (depending on $(t,x,h,\omega)$) such that 
 \begin{align}\label{eq:d1}
 \inf_{0\leq u\leq t}d((S^x_u(\omega),h), \partial \cC)\ge \eps>0,
 \end{align}
 where we use the standard Euclidean distance 
 \[
 d(y, \partial \cC):=\inf_{\hat{y}\in \partial \cC}d(y,\hat{y}), \quad\text{for $y\in \closure{\cI}\times \cH$}.
 \]
 By uniform continuity of $(t,x)\mapsto S_t^x(\omega)$ on compact sets, for $n$ sufficiently large we have 
 \begin{align}\label{eq:d2}
 \inf_{0\le u\le t}d((S_u^{x_n}(\omega),h_n), (S^x_u(\omega),h))\le \eps/2.
 \end{align}
 Combining \eqref{eq:d1} and \eqref{eq:d2} we obtain 
 \begin{equation*}
 \inf_{0\leq u\leq t}d((S_u^{x_n}(\omega),h_n), \partial \cC)>\eps/2,
 \end{equation*}
 for all sufficiently large $n$. Hence $\tau^*_{x_n,h_n}(\omega) > t$ for all such $n$. Since $t>0$ was arbitrary we have 
 \[
 \liminf_{n\to \infty}\tau^*_{x_n,h_n}(\omega)\ge\tau^*_{x,h}(\omega). 
 \]

To prove the upper semi-continuity we use $\hat\tau^*_{x,h}$ which is identical to $\tau^*_{x,h}$ on $\Omega^0$. Recall that $\P(\hat\tau^*_{x,h}<\infty)=1$ as $\hat\tau^*_{x,h}$ is the exit time of a geometric Brownian motion from a bounded interval. For any $\omega \in \Omega^0$, there is $t>\hat \tau^*_{x,h}(\omega)$ and arbitrarily close to $\hat \tau^*_{x,h}(\omega)$ such that $S^x_t(\omega) \notin [x^*_1(h), x^*_2(h)]$. By the continuity of $x^*_1(\cdot)$, $x^*_2(\cdot)$ and $x \mapsto S^x_t(\omega)$, we have $S^{x_n}_t(\omega) \notin [x^*_1(h_n), x^*_2(h_n)]$ and $\hat\tau^*_{x_n,h_n} (\omega) < t$ for sufficiently large $n$. Hence $\limsup_{n \to \infty} \hat\tau^*_{x_n,h_n} (\omega) \le \hat\tau^*_{x,h}(\omega)$. Combined with the lower semi-continuity proved above, this implies the a.s. continuity of $(x,h) \mapsto \tau^*_{x,h}$.
\end{proof}
\begin{figure}[tb]
\centering
\includegraphics[width=9cm]{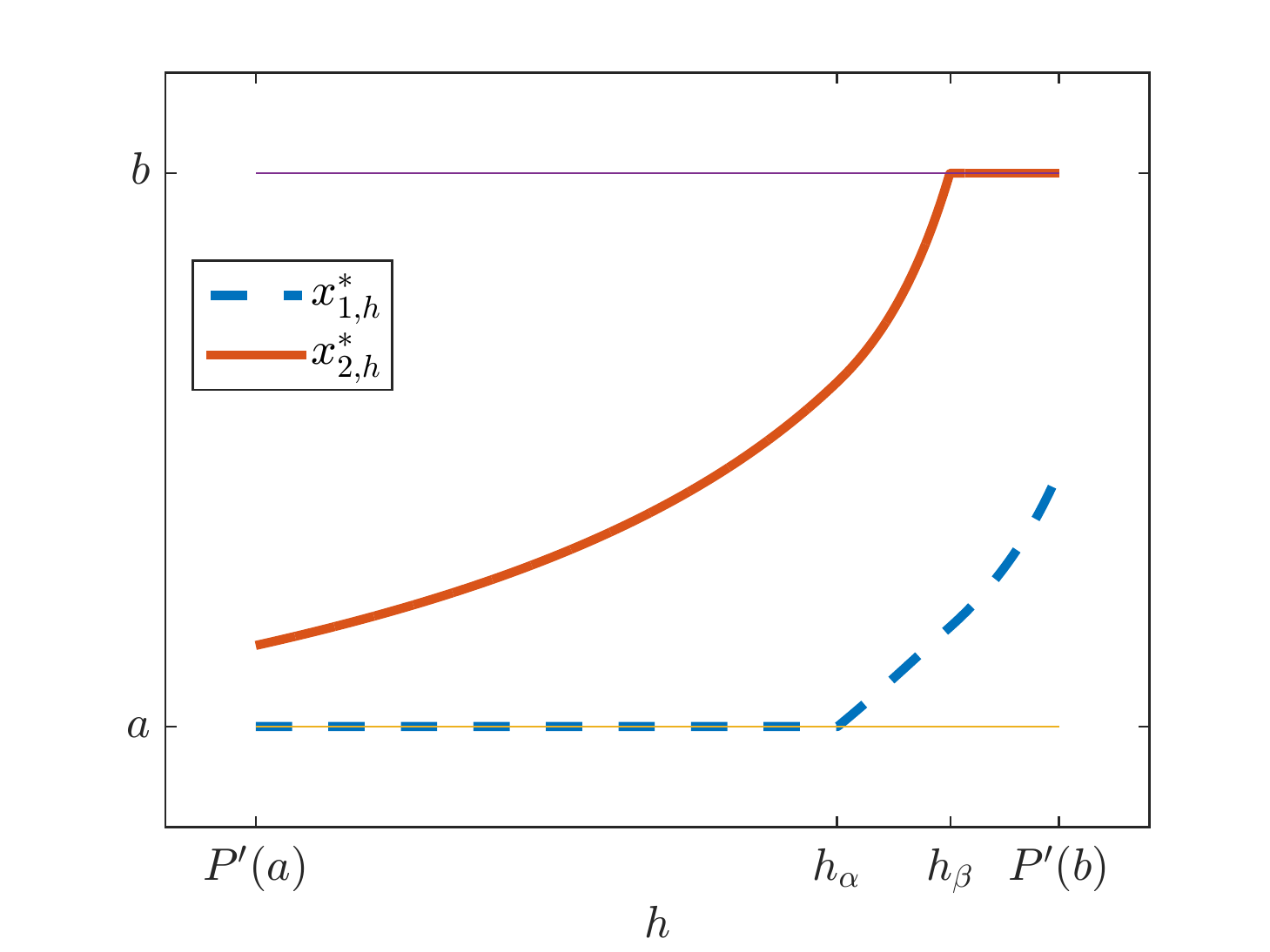}
\caption{Plots of the optimal stopping boundaries $x^*_{1,h},x^*_{2,h}$ as functions of $h$ using parameters $r=3\%$, $\sigma=30\%$, $K=100$, $b=150$ and $a=\hat{a}=K/(1+d^{-1})=40$.} 
\label{p4.0}
\end{figure} 
Figure \ref{p4.0} illustrates the optimal stopping boundaries $x^*_{1,h}$ and $x^*_{2,h}$ when $h\in\cH$ is varying. We highlight points $h_\alpha$ and $h_\beta$ where the continuation region changes from $(a, x^*_{2,h})$ to $(x^*_{1,h},x^*_{2,h})$ and from $(x^*_{1,h},x^*_{2,h})$ to $(x^*_{1,h},b)$, respectively. The three regimes (i)--(iii) of Proposition \ref{prop4.33} are clearly visible on the graph. 


\section{Optimal initial stock holding}\label{sec:4}

The existence of an optimal initial stock holding in \eqref{eq:vscript} follows from compactness of $\cH$ and continuity of $h \mapsto V(x,h)$. Here we show that the minimum of $V(x,\,\cdot\,)$ is attained in the interior of $\cH$. Moreover, although an optimal $h^*$ cannot be obtained explicitly, we show that it must solve a simple algebraic equation whose numerical solution is straightforward.

\begin{proposition}
\label{prop5.7.1}
Under Assumption \ref{ass:zeros1}, we have $V(x,\,\cdot\,)\in C^1(\cH)$ for all $x\in\cI$. Moreover, we have
\begin{equation}
\label{derivative v}
\partial_hV(x,h)=\E_x\left[\int_0^{\tau^*_{h}}e^{-2ru}2(h-P'(S_u))\sigma^2 S^2_u\ud u\right],
\end{equation}
and $\partial_hV\in C(\cI\times\cH)$.
\end{proposition}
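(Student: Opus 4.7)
The plan is to establish the formula for $\partial_h V$ via an envelope-type (or Danskin-style) argument based on the optimality of $\tau^*_h$, and then deduce the stated $C^1$ and joint continuity properties from the continuity of the optimal stopping time proved in Theorem \ref{prop5.5.1}. The starting observation is that $\partial_h f(x,h) = 2(h-P'(x))\sigma^2 x^2$ is exactly the integrand appearing in \eqref{derivative v}, and that $S_u$ is bounded in $[a,b]$ for $u\le \tau_\cI$, so all integrands in sight are uniformly bounded on $[0,\tau_\cI]$.

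First I would obtain matching one-sided bounds on the difference quotients. Fix $(x,h)\in\cI\times\cH$ and $\eps>0$ with $h+\eps\in\cH$. Since $\tau^*_h$ is admissible for $V(x,h+\eps)$ and optimal for $V(x,h)$, subtracting the two expressions gives
\begin{equation*}
\frac{V(x,h+\eps)-V(x,h)}{\eps}\le \E_x\!\left[\int_0^{\tau^*_h}\!\!e^{-2ru}\frac{f(S_u,h+\eps)-f(S_u,h)}{\eps}\,du\right].
\end{equation*}
Symmetrically, using $\tau^*_{h+\eps}$ as an admissible but sub-optimal stopping time in $V(x,h)$ yields
\begin{equation*}
\frac{V(x,h+\eps)-V(x,h)}{\eps}\ge \E_x\!\left[\int_0^{\tau^*_{h+\eps}}\!\!e^{-2ru}\frac{f(S_u,h+\eps)-f(S_u,h)}{\eps}\,du\right].
\end{equation*}
The explicit form $f(y,h)=(h-P'(y))^2\sigma^2 y^2$ makes the difference quotient equal to $(2(h-P'(S_u))+\eps)\sigma^2 S_u^2$, hence bounded uniformly in $\omega$ and $\eps$ on $[0,\tau_\cI]$.

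The main ingredient, and the only non-trivial step, is passing to the limit in the lower bound: for this one needs $\tau^*_{x,h+\eps}\to \tau^*_{x,h}$ $\P$-a.s.\ as $\eps\downarrow 0$, which is exactly what Theorem \ref{prop5.5.1} provides. Combined with the a.s.\ continuity of $u\mapsto S_u$ and bounded convergence, both one-sided bounds converge to the same quantity
\begin{equation*}
\E_x\!\left[\int_0^{\tau^*_h}\!\!e^{-2ru}\,2\bigl(h-P'(S_u)\bigr)\sigma^2 S_u^2\,du\right].
\end{equation*}
The identical argument with $\eps<0$ gives the left derivative, and the two match, establishing \eqref{derivative v} and $V(x,\,\cdot\,)\in C^1(\cH)$. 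Lipschitz continuity of $V$ in $h$ (Proposition \ref{vlips}) makes the one-sided limits legitimate without integrability concerns.

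Finally, for joint continuity of $\partial_h V$ on $\cI\times\cH$, consider $(x_n,h_n)\to(x_0,h_0)$. Theorem \ref{prop5.5.1} gives $\tau^*_{x_n,h_n}\to \tau^*_{x_0,h_0}$ $\P$-a.s., while the explicit SDE yields $S^{x_n}_u\to S^{x_0}_u$ uniformly on compacts in $u$. Since $S^{x_n}_u\in K_{a,b}$ (see \eqref{eqn:Kab}) and $h\mapsto 2(h-P'(y))\sigma^2 y^2$ is bounded on $K_{a,b}\times\cH$, the dominated convergence theorem applied to
\begin{equation*}
\partial_h V(x_n,h_n)=\E\!\left[\int_0^{\tau_\cI^{x_n}}\!\!e^{-2ru}\,2\bigl(h_n-P'(S^{x_n}_u)\bigr)\sigma^2 (S^{x_n}_u)^2\,1_{\{u\le \tau^*_{x_n,h_n}\}}\,du\right]
\end{equation*}
delivers $\partial_h V(x_n,h_n)\to\partial_h V(x_0,h_0)$. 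The only delicate point, namely the behaviour of the indicator $1_{\{u\le \tau^*_{x_n,h_n}\}}$, is again resolved by the a.s.\ convergence of the stopping times from Theorem \ref{prop5.5.1}, noting that the Lebesgue-null set where convergence can fail at $u=\tau^*_{x_0,h_0}(\omega)$ is negligible under $du\otimes d\P$.
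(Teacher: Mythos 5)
Your proposal is correct and follows essentially the same route as the paper: matching one-sided bounds on the difference quotients obtained from the (sub)optimality of $\tau^*_{h}$ and $\tau^*_{h+\eps}$, passage to the limit via the a.s.\ continuity of $(x,h)\mapsto\tau^*_{x,h}$ from Theorem \ref{prop5.5.1}, and dominated convergence for the joint continuity of $\partial_h V$. The only cosmetic difference is that you compute the difference quotient of $f$ explicitly where the paper invokes the mean value theorem.
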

\begin{proof}
The argument of proof is analogous to the one used to prove Theorem \ref{thm:C1}, so we only provide a sketch.
Let $\eps>0$ and denote by $\tau^*_{x, h}$ an optimal stopping time for $V(x,h)$. Since $\tau^*_{x,h}$ is admissible but sub-optimal for $V(x,h+\eps)$, an application of the mean value theorem yields
\begin{align*}
&V(x,h+\eps)-V(x,h)\le\eps\E_{x}\left[\int_0^{\tau^*_{h}}e^{-2ru}2(h_\eps-P'(S_u))\sigma^2 S^2_u\ud u\right],
\end{align*}
where $h_\eps\in[h,h+\eps]$. Dividing both sides of the inequality by $\eps$ and letting $\eps\to 0$, we obtain 
\begin{equation}
\label{5.12}
\limsup_{\eps\rightarrow0}\frac{V(x,h+\eps)-V(x,h)}{\eps}\leq \E_{x}\left[\int_0^{\tau^*_{h}}e^{-2ru}2(h-P'(S_u))\sigma^2 S^2_u\ud u\right].
\end{equation}
For the lower bound we denote by $\tau^*_{x, h+\eps}$ the optimal stopping time for $V(x,h+\eps)$ and arguing as above we get
\begin{align*}
&V(x,h+\eps)-V(x,h)\ge \eps\E_{x}\left[\int_0^{\tau^*_{h+\eps}}e^{-2ru}2(h_\eps-P'(S_u))\sigma^2 S^2_u\ud u\right].
\end{align*}
Dividing by $\eps$ both sides of the inequality, letting $\eps\to 0$ and recalling the continuity of the map $h \mapsto \tau^*_{x,h}$ (Theorem \ref{prop5.5.1}) we obtain
\begin{equation}
\label{5.13}
\liminf_{\eps\rightarrow0}\frac{V(x,h+\eps)-V(x,h)}{\eps}\geq \E_{x}\left[\int_0^{\tau^*_{h}}e^{-2ru}2(h-P'(S_u))\sigma^2 S^2_u\ud u\right].
\end{equation}
Combining \eqref{5.13} and \eqref{5.12} gives
\begin{equation*}
\partial_h^{+} V(x,h)=\E_{x}\left[\int_0^{\tau^*_{h}}e^{-2ru}2(h-P'(S_u))\sigma^2 S^2_u\ud u\right],
\end{equation*}
where $\partial^+_h$ denotes the right partial derivative. The same arguments can be applied to obtain the same expression as above also for the left partial derivative $\partial_h^{-} V$, hence \eqref{derivative v} holds.

Continuity of the map $(x,h)\mapsto \partial_hV(x,h)$ is easily deduced from $\P$-a.s. continuity of the maps 
\[
(x,h)\mapsto(h-P'(S^x_u))\sigma^2 (S^x_u)^2\quad\text{and}\quad (x,h)\mapsto\tau^*_{x,h},
\]
and the dominated convergence theorem.
\end{proof}

Finally, we give our result regarding an optimal initial stock holding $h^*$.
\begin{theorem}
\label{optimalh}
Under Assumption \ref{ass:zeros1}, for each initial stock price $S_0=x\in \cI$,
\begin{equation*}
\argmin_{h\in\cH}V(x,h) \subseteq (P'(a),P'(b))=\textrm{int}(\cH).
\end{equation*}
Moveover, each minimiser $h^* \in \argmin_{h\in\cH}V(x,h)$ is a solution of the following equation   
\begin{equation}\label{hstar}   
h^*=\GG_{h^*}(x),
\end{equation}
where $\GG_h$ was defined in \eqref{eq:gg1}.
\end{theorem}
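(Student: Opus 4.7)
The theorem has two distinct claims. The fixed-point equation \eqref{hstar} is the easier one: if $h^*$ is a minimiser lying in $\operatorname{int}(\cH)$, then Proposition \ref{prop5.7.1} supplies the first-order condition $\partial_h V(x,h^*)=0$. Writing this out via \eqref{derivative v}, splitting the expectation linearly in $h^*$, and recognising the terms $\gamma_{h^*,1}(x)$ and $\gamma_{h^*,2}(x)$ from \eqref{eq:gg1}, we obtain $h^*\gamma_{h^*,1}(x)=\gamma_{h^*,2}(x)$, i.e.\ $h^*=\GG_{h^*}(x)$. Thus the substantive task is to rule out the endpoints $P'(a)$ and $P'(b)$ of $\cH$ as minimisers.

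By symmetry, it suffices to handle $h=P'(a)$; the argument for $P'(b)$ is analogous. I split into two complementary cases according to whether $x\in\cC_{P'(a)}$ or $x\in\cD_{P'(a)}$. In the first case, $\tau^*_{x,P'(a)}>0$ $\P$-a.s., and for $0<u<\tau^*_{x,P'(a)}$ the trajectory $S_u$ stays in $\cC_{P'(a)}\subset(a,b)$. Since $a\ge\am$ and $P'$ is strictly increasing on $[\am,\infty)$, this yields $P'(S_u)>P'(a)$, so the derivative formula \eqref{derivative v} gives
\[
\partial_h V(x,P'(a))=2\,\E_x\!\left[\int_0^{\tau^*_{P'(a)}}\!e^{-2ru}\bigl(P'(a)-P'(S_u)\bigr)\sigma^2 S_u^2\,du\right]<0.
\]
The $C^1$ regularity from Proposition \ref{prop5.7.1} then provides some small $\varepsilon>0$ with $V(x,P'(a)+\varepsilon)<V(x,P'(a))$, contradicting any purported minimality of $P'(a)$.

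The second case, $x\in\cD_{P'(a)}$, is where the derivative argument degenerates, since $\tau^*_{x,P'(a)}=0$ a.s.\ and $V(x,P'(a))=M(x)$. Here I produce an explicitly better competitor $\tilde h:=\Gamma(x)$, which belongs to $(\Gamma(a),\Gamma(b))\subset\operatorname{int}(\cH)$ by Proposition \ref{prop4.1}. Plugging $h=\Gamma(x)$ into \eqref{2.6} collapses the first two squared terms and yields
\[
G(x,\Gamma(x))=-\sigma^2 x^2\bigl(\Gamma'(x)\bigr)^2\gamma_1(x)<0,
\]
so Lemma \ref{propcalculateG} places $x$ in $\cC_{\tilde h}$, giving $V(x,\tilde h)<M(x)=V(x,P'(a))$ and again contradicting minimality. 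The main subtlety is precisely this need to treat the degenerate case $x\in\cD_{P'(a)}$ separately: the one-sided derivative at the boundary of $\cH$ carries no information whenever $x$ already lies in the corresponding stopping set, and the "trade immediately to $\Gamma(x)$" perturbation is what rescues the argument in that regime.
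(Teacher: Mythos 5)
Your proof is correct and takes essentially the same route as the paper's: a first-order condition for interior minimisers, a sign argument on the one-sided derivative $\partial_h V(x,P'(a)+)$ when $x\in\cC_{P'(a)}$, and the exhibition of an $h$ with $V(x,h)<M(x)$ (your competitor $\Gamma(x)$ plays exactly the role of the paper's observation that $\cC^x:=\{h\in\cH:V(x,h)<M(x)\}$ is nonempty) to dispose of the degenerate case $x\in\cD_{P'(a)}$. The only cosmetic difference is that you read off the strict negativity of the derivative directly from the integrand $P'(a)-P'(S_u)<0$ on $(0,\tau^*)$, whereas the paper invokes $\GG_{P'(a)}(x)>P'(a)$ from Proposition \ref{prop5.2.1}; the two statements are equivalent, and your case-2 computation also supplies the fact, left implicit in your final step, that $\gg_{h^*,1}(x)>0$ so that the division defining $\GG_{h^*}(x)$ is legitimate.
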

\begin{proof}
Fix $x \in \cI$ and let $\cC^{x}\!:=\!\{h\!\in\!\cH: V(x,h)\!<\!M(x)\}$. We have $\cC^x\!\neq\!\varnothing$ due to Proposition \ref{prop4.202}. 
Hence $\argmin_{h\in\cH}V(x,h) \subset \cC^x$ given that $V\le M$ and $M$ is independent of $h$.

Although it is possible that $P'(a)$ or $P'(b)$ are in $\cC^x$, we will show that the minimum of $V(x,\cdot)$ cannot be attained there. For that purpose, notice that 
\[
\partial_h V(x,h) = 2 \hat \gamma_{h,1}(x) \big(h  - \GG_h(x)\big)
\]
thanks to \eqref{derivative v} and with the notation of \eqref{eq:gg1}. If $P'(a) \in \cC^x$, 
the inequality $\GG_{h}(P'(a)+)>P'(a)$ (see Proposition \ref{prop5.2.1}) implies that $\partial_h V(x,P'(a)+)<0$. Hence the minimum of $V(x, \cdot)$ is not attained at $P'(a)$. Similarly, if $P'(b) \in \cC^x$, then $\partial_h V(x,P'(b)-)>0$, so the minimum of $V(x, \cdot)$ cannot be attained at $P'(b)$.

Consequently, each minimiser $h^*$ of $V(x, \cdot)$ is in $(P'(a),P'(b))$ and must satisfy $\partial_h V(x,h^*) = 0$, which is equivalent to \eqref{hstar}.
\end{proof}

\begin{figure}[tb]
\centering
\includegraphics[width=8cm]{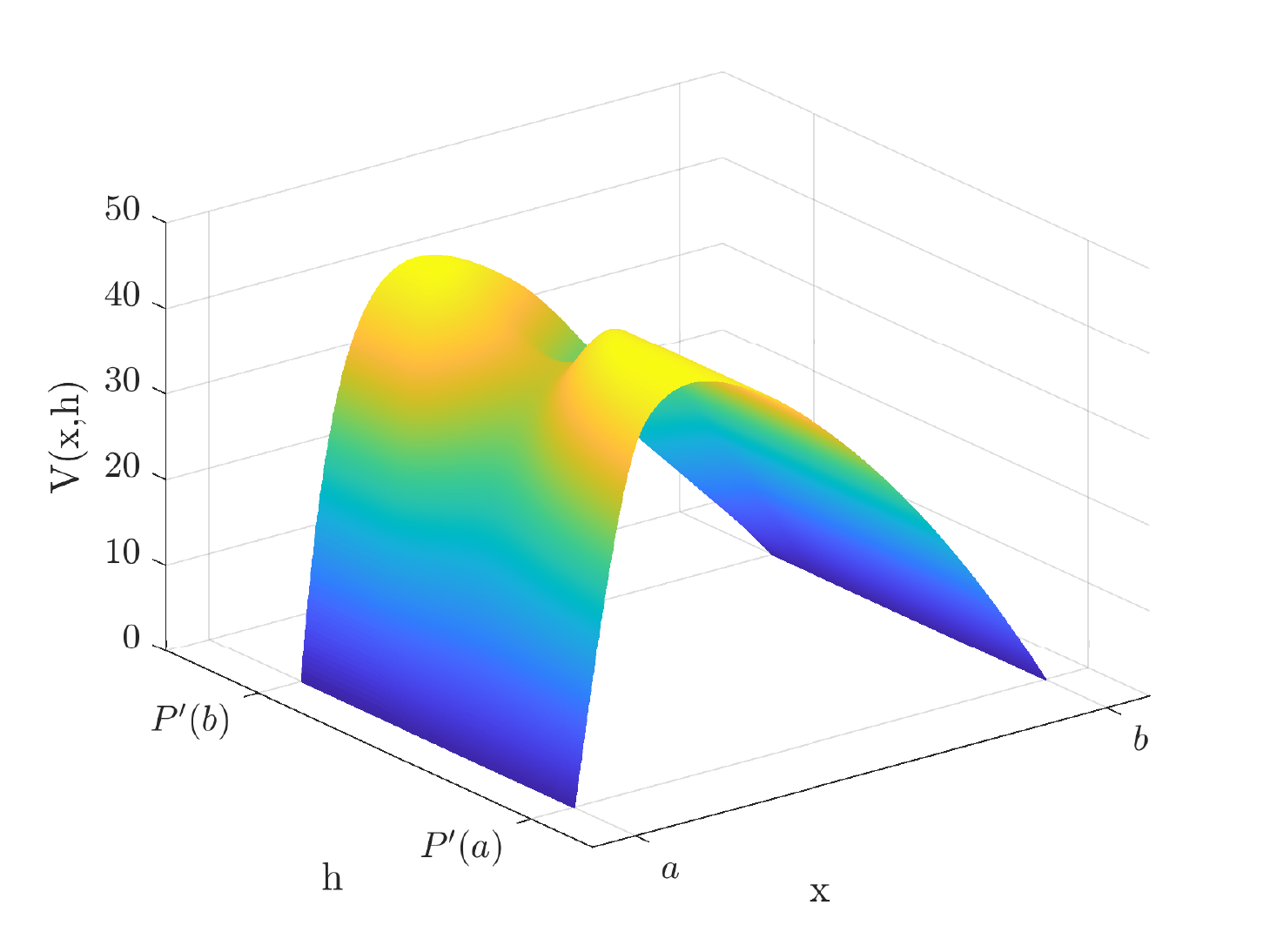}
\includegraphics[width=8cm]{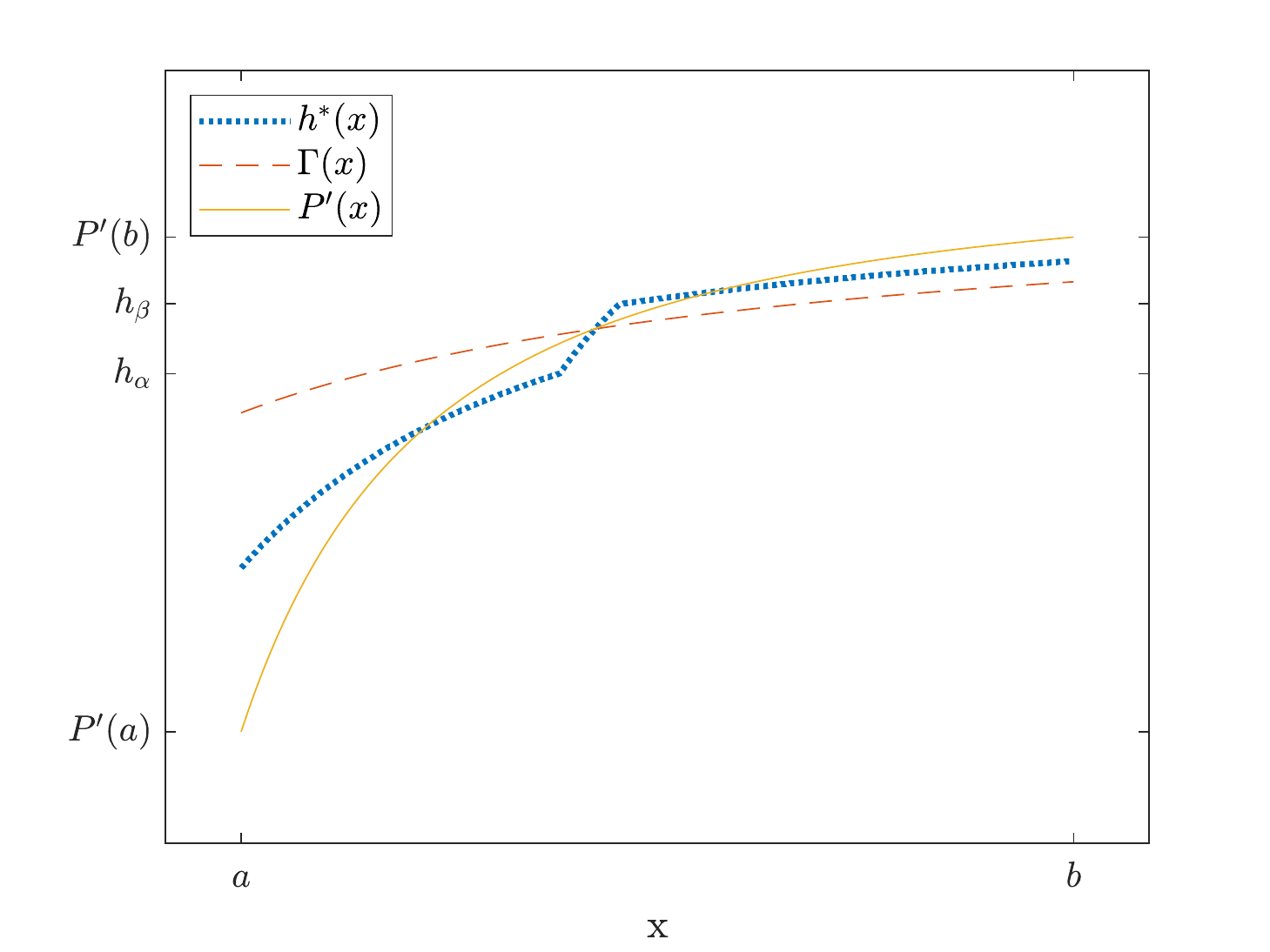}
\caption{Left panel: 3-D plot of the value function $(x,h)\mapsto V(x,h)$. Right panel: plot of optimal stock holdings $x\mapsto h^*(x)$, $x\mapsto\Gamma(x)$ and the Black-Scholes Delta $x\mapsto P'(x)$ using parameters $r=3\%$, $\sigma=30\%$, $K=100$, $b=150$ and $a=\hat{a}=K/(1+d^{-1})=40$. } 
\label{p6.0}
\end{figure} 
We used the first order condition \eqref{hstar} to numerically compute the optimal initial stock holding and it turned out that \eqref{hstar} admitted a unique solution in all examples we considered. 

The left panel of Figure \ref{p6.0} displays the three dimensional plot of the value function $V$. The right panel plots the optimal initial stock holding $h^*(x)$, the optimal hedge $\Gamma(x)$ at the rebalance time and the benchmark Black-Scholes Delta $P'(x)$. Notice that the optimal stock holding after rebalancing $\Gamma(x)$ is flatter then the Delta $P'(x)$, thus the constrained trader under/over-hedges, compared to the Black-Scholes benchmark, if the option is in-the-money/out-of-the-money.\footnote{We are grateful to an anonymous reviewer for this observation.} This reflects the fact that no further trades are possible before $\tau_{\cI}$. For example, if rebalancing occurs when the option is out of the money (close to $b$), there is still a positive probability of reaching the left boundary $a$ before hitting $b$. Therefore, the optimal stock holding $\Gamma(x)$ strikes a balance between optimal Black-Scholes hedges $P'(b)$ at $b$ and $P'(a)$ at $a$. This is unnecessary in the Black-Scholes setting because the portfolio can be rebalanced continuously reacting to changes in the underlying price. The optimal initial stock holding $h^*(x)$ exhibits similar flatter characteristics as $\Gamma(x)$ close to boundaries $a, b$ but is steeper than the Black-Scholes hedge $P'(x)$ in the middle of the graph. The kinks in the map $x\mapsto h^*(x)$ correspond to the points $h_\alpha,h_\beta$ from Figure \ref{p4.0}. They are the points at which the transition between single and double boundaries is observed. The steep part of the graph of $h^*(x)$ coincides with the region where the rebalancing occurs at two boundaries.

\section{Remarks on the role of the upper bound $b$}\label{secSETB}
Before moving on to the numerical illustration, it is worth turning our attention to the question of what happens if we take $b=+\infty$.

In this case, $\tau_\cI=\inf\{t\ge 0: S_t\le a\}=:\tau_a$ and since $S_t$ has a positive drift, we have $\P(\tau_a=\infty)>0$. The martingale $(e^{-rt}S_t)_{t\ge 0}$ is not uniformly integrable and neither is the one defined by \eqref{portdyna}, for a general admissible trading strategy $(\tau, \theta)$. Then the derivation of \eqref{eq:V} via optional sampling is not possible (since \eqref{eq:mean} does not hold) and the whole problem formulation becomes less transparent. We propose here two possible problem formulations and their corresponding solutions. We note that such solutions appear to be structurally different as a consequence of different mathematical ways in which we can interpret the event $\{\tau_a=\infty\}$ in our model.

Thanks to the explicit dynamics of $S$ we can easily derive $\lim_{t\to\infty}e^{-rt}S_t=0$, $\P_x$-a.s., for all $x\in (0,\infty)$. Then, using a standard convention on the event $\{\tau_a=\infty\}$, we have
\begin{align}\label{eq:SS}
e^{-r\tau_a}S_{\tau_a}&=e^{-r\tau_a}S_{\tau_a}1_{\{\tau_a<\infty\}}+e^{-r\tau_a}S_{\tau_a}1_{\{\tau_a=\infty\}}\\
&=e^{-r\tau_a}a1_{\{\tau_a<\infty\}}+\lim_{t\to\infty}e^{-r t}S_{t}1_{\{\tau_a=\infty\}}=e^{-r\tau_a}a1_{\{\tau_a<\infty\}}.\notag
\end{align}
Analogously, recalling that the put option price is bounded by $K$ we also have
\begin{align}\label{eq:PP}
e^{-r\tau_a}P(S_{\tau_a})=e^{-r\tau_a}P(S_{\tau_a})1_{\{\tau_a<\infty\}}=e^{-r\tau_a}P(a)1_{\{\tau_a<\infty\}}.
\end{align}

\subsection{Zero-mean tracking}\label{subsec:zero_mean}
With the aim of retaining a zero-mean tracking error analogue to \eqref{eq:mean} we set 
\[
\tau_n:=\inf\{t\geq 0: S_t\ge n\},\qquad \text{for $n\in [a,\infty)$},
\]
and, recalling that $\tau_\cI=\tau_a$, we study the problem
\begin{align}\label{eq:V-inf-b}
\mathcal V(x):=&\inf_{(\tau,\theta)\in\cA^{\infty}_x}\limsup_{n\uparrow \infty} \mathcal{V}ar_{x}\left[e^{-r\tau_a\wedge\tau_n}\bigl(\Pi^{\tau,\theta}_{{\tau_a\wedge\tau_n}}-P(S_{{\tau_a\wedge\tau_n}})\bigr)\right],
\end{align}
where $\cA^{\infty}_x$ is defined in the same way as Definition \ref{Deftradingstrategy} with $\cI$ replaced by $(a,\infty)$ and $\tau^x_{\cI}$ replaced by $\tau^x_a$. Notice also that we have $h\in\cH=[P'(a),0]$. With this approach the mean tracking error can be computed as
\[
\lim_{n\to\infty}\E_x\left[e^{-r\tau_a\wedge\tau_n}\bigl(\Pi^{\tau,\theta}_{{\tau_a\wedge\tau_n}}-P(S_{{\tau_a\wedge\tau_n}})\bigr)\right]=0
\]
by an application of optional sampling for each $n\ge a$ given and fixed. Clearly for $b<\infty$ problem formulations \eqref{eq:V} and \eqref{eq:V-inf-b} are equivalent since $\tau_\cI=\tau_\cI\wedge\tau_n$ for all $n>b$.

As in Sections \ref{sec:1} and \ref{sec2.1} (with a slight abuse of notation) we have
\begin{equation}
\label{eq:V-inf-b-1}
\begin{aligned}
\mathcal V(x)&=\inf_{(\tau,\theta)\in\cA^{\infty}_x}\limsup_{n\uparrow \infty} \E_{x}\left[\int_0^{\tau_a\wedge\tau_n}e^{-2ru}f(S_u,\theta_u)\ud u\right]
=:\inf_{h\in\mathcal H} V(x,h),
\end{aligned}
\end{equation} 
where
\begin{align}
\label{eq:V-inf-b-2}
V(x,h)&=\inf_{\tau\le\tau_a, h_1\in\mathcal{H}_{m}^{\tau}}\limsup_{n\uparrow \infty}\E_x\left[\int_0^{\tau\wedge\tau_n}e^{-2ru}f(S_u,h)\ud u +e^{-2r(\tau\wedge\tau_n)}\widehat{M}_n(S_{\tau\wedge\tau_n},h_1)\right],
\end{align}
and
\begin{equation}
\label{eq:V-inf-b-3}
\widehat{M}_n(x,\zeta):=\E_x\left[\int_{0}^{\tau_a\wedge\tau_n}e^{-2ru}f(S_u,\zeta)\ud u\right],\qquad \zeta\in\R, \quad x \in \closure\cI.
\end{equation}
First, we show that $\lim_{n\uparrow \infty}\widehat M_n(x,\zeta)=\infty$ for all $\zeta\neq 0$. Then we will use it to argue that the infimum in \eqref{eq:V-inf-b-2} is attained for $h_1 \equiv 0$. 
 
For each $n> a$ and $x \in (a,n)$ we have 
\begin{align}\label{eq:taun}
&\E_x\left[\int_{0}^{\tau_a\wedge\tau_n}e^{-2ru}f(S_u,\zeta)\ud u\right]\\
&=\zeta^2\E_{x}\left[\int_{0}^{\tau_a\wedge\tau_n}e^{-2ru}\sigma^2 S^2_u\ud u \right]-2\zeta\E_{x}\left[\int_{0}^{\tau_a\wedge\tau_n}e^{-2ru}P'(S_u)\sigma^2 S^2_u\ud u \right]\notag\\
&\quad+\E_{x}\left[\int_{0}^{\tau_a\wedge\tau_n}e^{-2ru}(P'(S_u))^2\sigma^2 S^2_u\ud u \right].\notag
\end{align}
The first term on the right-hand side can be written using \eqref{diffusionformula} as
\begin{align} \label{setb1}
&\E_{x}\left[\int_{0}^{\tau_a\wedge\tau_n}e^{-2ru}\sigma^2 S^2_u\ud u\right]\\
&=\wron^{-1}_n\left(\varphi_n(x)\int_{a}^x\psi(z)\sigma^2z^2m'(z)\ud z+\psi(x)\int_{x}^{n}\varphi_n(z)\sigma^2z^2m'(z)\ud z\right),\nonumber
\end{align}
where $\psi$ and $\varphi_n$ are, respectively, the increasing and decreasing fundamental solutions to \eqref{funODE}, with boundary conditions $\psi(a+)=0$, $\psi'(a+)>0$ and $\varphi_n(n-)=0$, $\varphi_n'(n-)<0$, while $w_n$ is the associated Wronskian. 
These quantities can be computed explicitly as in \eqref{eq:fund} and \eqref{eq:w}, and they read
\[
\psi(x)= x^{q_1}-a^{q_1-q_2}x^{q_2},\quad\varphi_n(x)=x^{q_2}-n^{q_2-q_1} x^{q_1},\quad w_n=\hat w (1-(a/n)^{q_1-q_2}),
\]
where $q_2<0<q_1$ are given in \eqref{eq:q12}. 

Clearly $w_n\uparrow \hat w$ and $\varphi_n(x)\uparrow x^{q_2}$ as $n\to \infty$. Then, the first integral on the right-hand side of \eqref{setb1} remains bounded as $n\to\infty$. For the second integral we have, by monotone convergence,
\[
\lim_{n\to\infty}\int_{x}^{n}\varphi_n(z)\sigma^2z^2m'(z)\ud z=\int_{x}^{\infty}z^{q_2}\sigma^2z^2m'(z)\ud z=+\infty,
\] 
where the final equality can be easily obtained by recalling the expression of $m'(z)$ (see, \eqref{speedmeasure}) and upon noticing that $q_2+d+1>0$.
Using the same method one can check that the second and third terms on the right-hand side of \eqref{eq:taun} remain finite as $n\to \infty$, due to the damping effect of $P'(x)$ as $x\to\infty$. Then, we have $\lim_{n \uparrow \infty} \widehat{M}_n(x,\zeta)=+\infty$ unless $\zeta\equiv 0$.

For any $\tau \le \tau_a$ and $h_1 \in \mathcal{H}_{m}^{\tau}$, using $\widehat{M}_n(n, h_1) = 0$ we obtain
\begin{align*}
&\E_x\left[\int_0^{\tau\wedge\tau_n}e^{-2ru}f(S_u,h)\ud u +e^{-2r(\tau\wedge\tau_n)}\widehat{M}_n(S_{\tau\wedge\tau_n},h_1)\right]\\
&=\E_x\left[\int_0^{\tau\wedge\tau_n}e^{-2ru}f(S_u,h)\ud u +e^{-2r\tau}\widehat{M}_n(S_{\tau},h_1)1_{\{\tau<\tau_n\}}\right].
\end{align*}
Since $\tau_n\uparrow \infty$ as $n\to\infty$, $f\ge 0$ and $\widehat M_n$ is non-negative and increasing in $n$, we can apply monotone convergence theorem to pass the limit under expectation. Hence, 
\begin{align*}
&\limsup_{n\uparrow \infty}\E_x\left[\int_0^{\tau\wedge\tau_n}e^{-2ru}f(S_u,h)\ud u +e^{-2r\tau}\widehat{M}_n(S_{\tau},h_1)1_{\{\tau<\tau_n\}}\right]\\
&=
\E_x\left[\int_0^{\tau} e^{-2ru}f(S_u,h)\ud u + \lim_{n\uparrow \infty} e^{-2r\tau}\widehat{M}_n(S_{\tau},h_1) 1_{\{\tau < \tau_n\}} \right].
\end{align*}
Recalling that $\lim_{n \uparrow \infty} \widehat{M}_n(x,\zeta)=+\infty$ for $\zeta \ne 0$, we have that the second term above is infinite unless $h_1 = 0$, $\P_x$-a.s. It follows that the infimum in \eqref{eq:V-inf-b} must necessarily be attained for $h_1 = 0$, $\P_x$-a.s., and using the tower property we have
\[
\limsup_{n\uparrow \infty}\E_x\left[e^{-2r\tau}\widehat{M}_n(S_{\tau},0) 1_{\{\tau < \tau_n\}} \right] = \E_{x}\left[\int_{\tau}^{\tau_a}e^{-2ru}(P'(S_u))^2\sigma^2 S^2_u\ud u \right] =: \E_x \left[e^{-2r\tau} M(S_{\tau})\right].
\]

In light of the above, the hedging problem becomes
\begin{equation*}
V(x,h)=\inf_{\tau\leq\tau_a}\E_x\left[\int_0^{\tau}e^{-2ru}f(S_u,h)\ud u+e^{-2r\tau}M(S_\tau)\right].
\end{equation*}
In this case, we have
\begin{equation*}
G(x,h)=(\mathcal{L}-2r)M(x)+f(x,h)=\sigma^2x^2 h(h-2P'(x)),
\end{equation*}
and it is easy to check that, for each $h\in \cH$, the map $x\mapsto G(x,h)$ has a unique root $x_G = x_p(h/2)$ on $(a,\infty)$ (see \eqref{2.7}). It follows that $G(x,h)<0$ for $x\in(a,x_G)$ and $G(x,h)>0$ for $x>x_G$. By the same argument as in the proof of Proposition \ref{prop4.33} we have that $\cC_h=(a,x^*_h)$ for some $x^*_h\geq x_G$ that can be found explicitly by solving an analogue of \eqref{eq:ODEp}. The corresponding optimal hedging strategy prescribes to clear the stock position (i.e., $h^*_1\equiv0$) as soon as the stock price $S_t$ enters the interval $[x^*_h,\infty)$. 

\subsection{Non-zero mean tracking error}\label{subsec:non-zero-mean}

We can formulate the problem directly with the random time horizon $\tau_\cI=\tau_a$. With the same notation as in Section \ref{subsec:zero_mean}, here we want to solve 
\[
\mathcal V(x)=\inf_{(\tau,\theta)\in\mathcal A^\infty_x}\mathcal{V}ar_x\left[e^{-r\tau_a}\bigl(\Pi^{\tau,\theta}_{\tau_a}-P(S_{\tau_a})\bigr)\right],
\]
and we will indeed produce explicit solutions.

Consider an admissible strategy
\[
\tau=0\quad\text{and}\quad h_1=P(a)a^{-1},
\]
i.e., the rebalancing is immediate at $t=0$ and the bond holding after the trade is $\bar m = P(x) - h_1 x$. 
The discounted portfolio value associated to the above strategy is $\hat\Pi_t:= e^{-rt} \Pi_t = P(a) a^{-1}e^{-rt}S_t+ \bar m$. Using \eqref{eq:SS} and \eqref{eq:PP} the tracking error at time $\tau_a$ is deterministic and amounts to
\[
e^{-r\tau_a}\bigl(\Pi_{\tau_a}-P(S_{\tau_a})\bigr)=\bar m.
\]
Hence, the associated variance is zero and the proposed strategy is optimal.

There is, however, a catch: the hedging portfolio under-replicates the claim. Indeed, recalling the expression for $\bar m$ we have
\[
e^{-r\tau_a}\bigl(\Pi_{\tau_a}-P(S_{\tau_a})\bigr)=\bar m= P(x)-(x/a)P(a)<0, \quad \text{$\P_x$-a.s.},
\]
for all $x>a$, where we used that $P(x)<P(a)$.

One can, however, construct a strategy with a non-negative tracking error (the portfolio value $\Pi_{\tau_a}$ dominates $P(S_{\tau_a})$, $\P_x$-a.s.), but with non-zero variance. This strategy prescribes to initially take a position $h = P'(a)$ in stocks (recall that $P'(a) < 0$ so this is short-selling), and buy $m_0 = P(x) - h x$ bonds. We will show that, on the one hand, if the stock price approaches the boundary $a$, the value of this portfolio grows and allows us to rebalance to a perfect hedge for the boundary $a$. On the other hand, if the stock price diverges to $\infty$ before rebalancing, the discounted portfolio value converges to $m_0$ thanks to \eqref{eq:SS}; it follows from $h < 0$ and $P(x) > 0$ that $m_0 > 0$, so the tracking error is non-negative as required.

Before rebalancing the hedging portfolio evolves according to $e^{rt}m_0+hS_t$.  We choose the rebalancing time so that a perfect hedge can be constructed. We set
\begin{equation}\label{eqn:tau_st}
\tau=\tau^* := \inf\big\{t \ge 0: e^{rt} m_0 + hS_t = S_t P(a) a^{-1}\big\},
\end{equation}
and $h_1 = P(a)a^{-1}$ so that the whole portfolio wealth $e^{rt} m_0 + hS_t$ is invested in stocks. The strategy is clearly self-financing and in order to show that it provides a hedge we first show that $\tau^*<\tau_a$, $\P_x$-a.s., i.e., the rebalancing occurs before hitting the boundary $a$.

First of all the stopping time $\tau^*$ can be rewritten as the first time the discounted stock price $\hat S_t:=e^{-rt}S_t$ falls below a certain threshold:
\[
\tau^*=\inf\bigg\{t \ge 0: \hat S_t \le \frac{m_0}{a^{-1}P(a)-h}\bigg\}.
\]
Using that $a^{-1}P(a)>x^{-1}P(x)$ for all $x>a$ by the monotonicity of $y\mapsto P(y)$, we have 
\[
\frac{m_0}{a^{-1}P(a)-h}=\frac{x^{-1}P(x)-h}{a^{-1}P(a)-h}x<x=\hat S_0=S_0.
\]
Therefore $\tau^*>0$, $\P_x$-a.s. Since the mapping $y \mapsto P(y) - h y$ is strictly increasing for $y > a$ (because $P'(y) > P'(a) = h$) we also have
\[
\frac{P(x)-hx}{a^{-1}P(a)-h}=\frac{P(x)-hx}{P(a)-ha}a>a,
\]
so that $\hat S_t = (P(x)-hx)/(a^{-1}P(a)-h)$ implies $S_t>a$ and therefore $\tau_a>\tau^*$, $\P_x$-a.s., as needed.

Denoting by $\Pi_t$ the portfolio value, we have $\Pi_{\tau_a} = P(a)$ on $\{\tau_a < \infty\}$.  Using \eqref{eq:SS} and \eqref{eq:PP} the discounted tracking error at time $\tau_a$ amounts to
\[
e^{-r\tau_a}\bigl(\Pi_{\tau_a}-P(S_{\tau_a})\bigr)= e^{-r\tau_a}\bigl(\Pi_{\tau_a}-P(S_{\tau_a})\bigr) 1_{\{\tau^* = \infty\}} = m_0 1_{\{\tau^* = \infty\}}, \quad \text{$\P_x$-a.s.}
\]
Hence, neither the associated variance nor the expectation is zero but the portfolio value dominates the payoff at time $\tau_a$.

The hedging strategies obtained in Section \ref{subsec:non-zero-mean} seem economically unintuitive as they prescribe to take a long position in the stock and null investment in bonds after the rebalance. It should be clear that this is a mathematical artefact due to the infinite-time horizon.

\section{Numerical comparisons}\label{sec:concl}

We assess the performance of our optimal hedging strategy against the performance of ad-hoc strategies inspired by those often used in practice (see \cite[Chapter 6]{sinclair2011}). The quality of each strategy is measured in terms of the variance of the tracking error at $\tau_\cI$.

We consider five hedging strategies:
\begin{description}
\item[(Strategy 1)] Our optimal strategy $(\theta^*,\tau^*)$.
\item[(Strategy 2)]
Start with an initial stock holding $\theta_0=P'(x)$ and rebalance at the stopping time 
\[
\zeta:=\inf\left\{t\ge 0: S^x_t \notin \left(\tfrac{1}{2}(a+x),\tfrac{1}{2}(b+x)\right)\right\},
\]
with the classical Delta hedge $\theta_{\zeta}=P'(S^x_{\zeta})$; then hold until $\tau_\cI$.
\item[(Strategy 3)]
Start with an initial stock holding $\theta_0=P'(x)$ and rebalance when the Delta of the current stock price leaves a certain region. That is, let 
\[
\rho:=\inf\left\{t\ge 0: P'(S^x_t) \notin \left(\tfrac{1}{2}(P'(a)+P'(x)),\tfrac{1}{2}(P'(b)+P'(x))\right)\right\},
\]
and rebalance at $\rho$ with the classical Delta hedge $\theta_{\rho}=P'(S^x_{\rho})$; then hold until $\tau_\cI$.
\item[(Strategy 4)]
Start with $\theta_0=\Gamma(x)$ and hold the same amount of stock until $\tau_\cI$.
\item[(Strategy 5)]
Start with $\theta_0=P'(x)$ and hold the same amount of stock until $\tau_\cI$.
\end{description}

An illustration of Strategy 1, which is the optimal one from our analysis, is given in Figure \ref{p5.0}. Strategies with fixed thresholds, like Strategy 2 and 3 above, are popular in the finance sector (see \cite[p.~95]{sinclair2011}) and have an intuitive meaning: the trader makes the portfolio Delta-neutral when the underlying stock price or the associated Delta diverge by a `fixed amount' from their initial values. Such an `amount' of course can be chosen in several different ways; here we only display results for the specific choices made above. However, other specifications of the intervals in the stopping rules $\zeta$ and $\rho$ give results qualitatively consistent with those presented in this section.

Strategies 4 and 5 are so-called static hedging strategies. In Strategy $4$ the static hedging is optimal in the sense that $\Gamma(x)$ minimises the variance of the tracking error when no other rebalancing is allowed. 
\begin{figure}[tb]
\centering
\includegraphics[width=8cm]{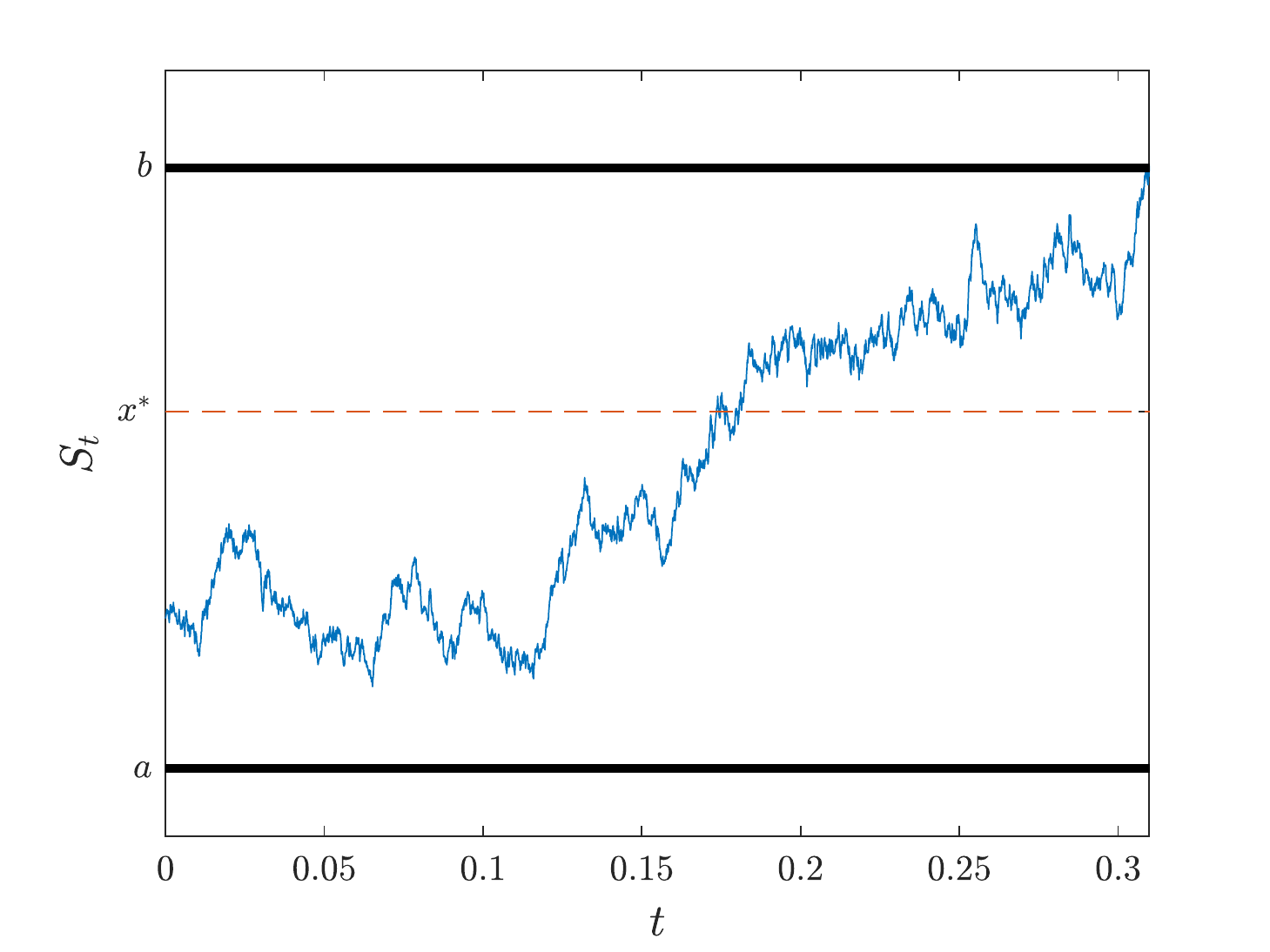}
\includegraphics[width=8cm]{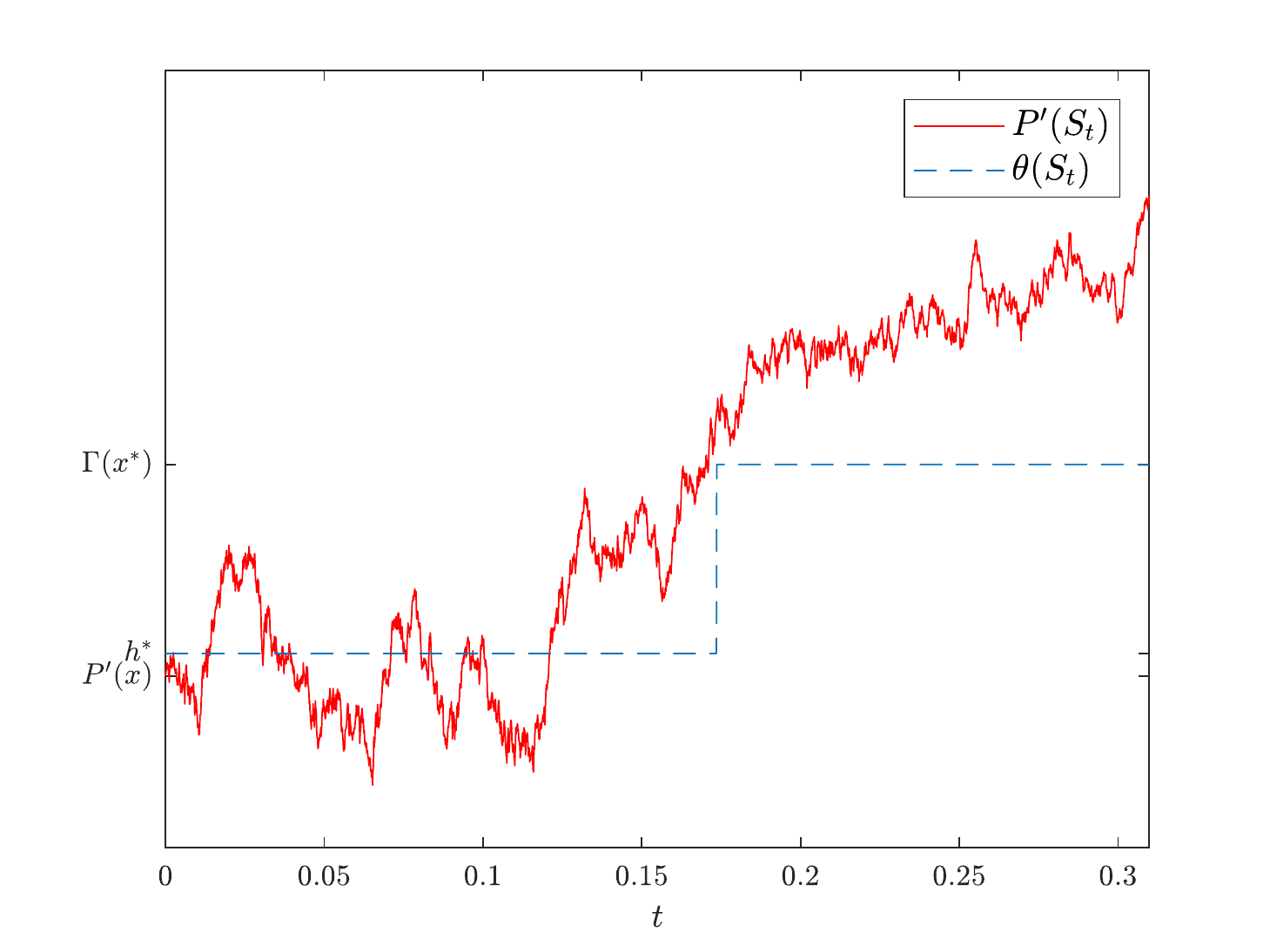}
\caption{A simulation of stock price and the optimal hedging strategy $1$ using parameters $r=3\%,\sigma=30\%, K=100, a=90, x=100,b=130$.} 
\label{p5.0}
\end{figure} 

We evaluate the performance of these five strategies by conducting three experiments: we calculate the sample variance of tracking error with different values of initial stock price $S_0$, volatility $\sigma$ and upper re-assessment boundary $b$, respectively, when other parameters are fixed.  In all experiments, the estimates are based on the same $N=1000$ sample paths of the stock price and model parameters are fixed as
\begin{equation}
\label{par:1}
r=3\%,\sigma=30\%, K=100, S_0 = 100, a=90, b=110.
\end{equation}

\begin{figure}[tb]
\centering
\includegraphics[width=10cm]{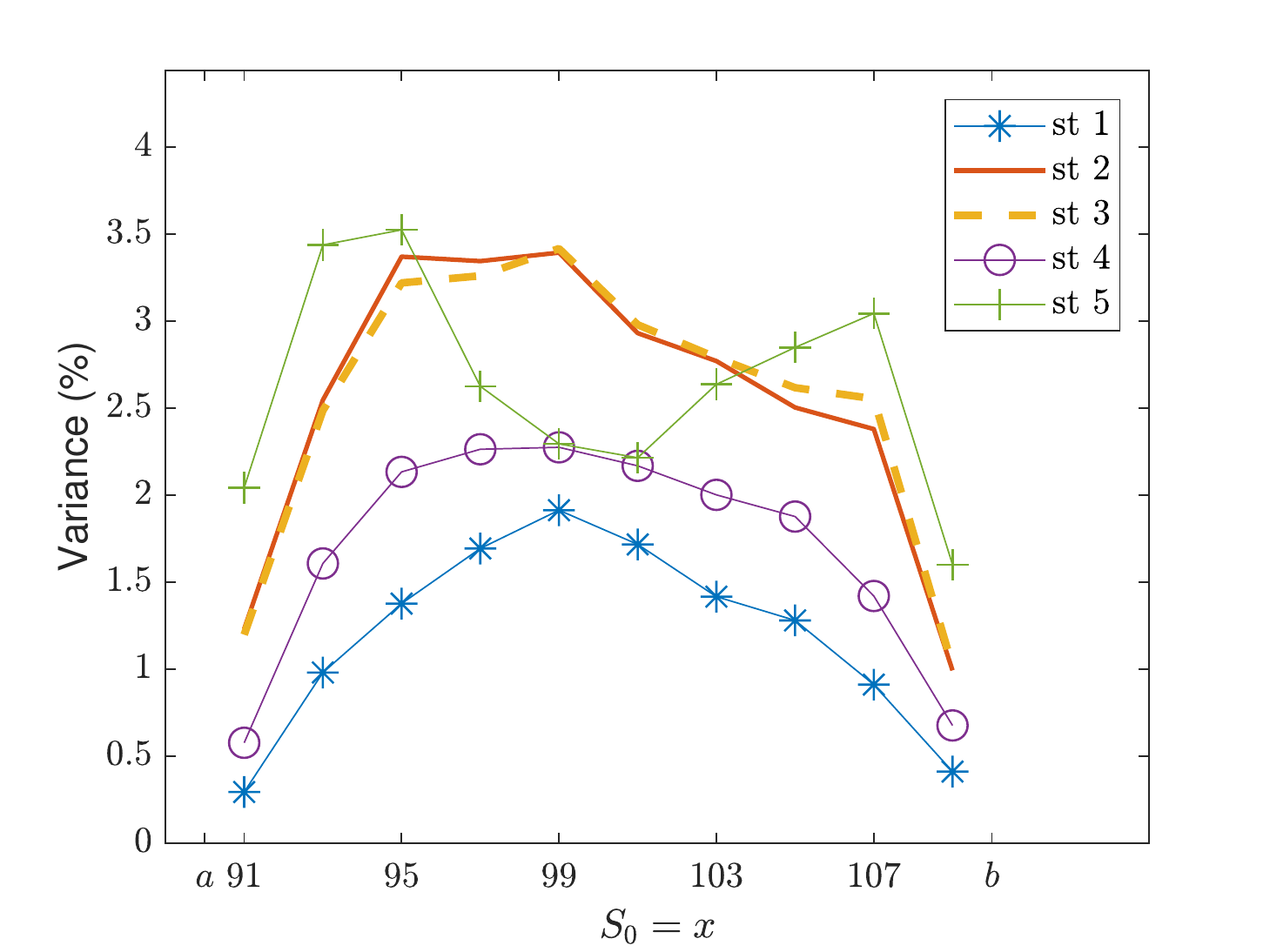}
\caption{Sample variance of the hedging error for different values of $S_0$, with parameters as in \eqref{par:1}.} 
\label{p-compare-x}
\end{figure} 
\begin{table}[ht]
\caption{Sample variance of the hedging error (\%) for different values of $S_0$ with parameters as in  \eqref{par:1}.}
\centering 
\begin{tabular}{lccccc}
\hline
\hline 
         & Strategy 1      &Strategy 2      &Strategy 3      & Strategy 4      &Strategy 5      \\[0.5ex]
\hline
$S_0=$ 91     & 0.29 & 1.23 & 1.20 & 0.58 & 2.04 \\
$S_0=$ 93     & 0.98 & 2.55 & 2.49 & 1.61 & 3.44 \\
$S_0=$ 95     & 1.38 & 3.37 & 3.22 & 2.13 & 3.53 \\
$S_0=$ 97     & 1.69 & 3.35 & 3.26 & 2.26 & 2.63 \\
$S_0=$ 99     & 1.91 & 3.39 & 3.42 & 2.27 & 2.29 \\
$S_0=$ 101    & 1.72 & 2.93 & 2.98 & 2.17 & 2.21 \\
$S_0=$ 103    & 1.42 & 2.77 & 2.79 & 2.00 & 2.64 \\
$S_0=$ 105    & 1.28 & 2.50 & 2.62 & 1.88 & 2.85\\
$S_0=$ 107    & 0.91& 2.38 & 2.55 & 1.42 & 3.04 \\
$S_0=$ 109    & 0.41 & 0.99 & 1.01 & 0.68 & 1.60\\[1ex]
\hline 
\end{tabular}\label{table:1}
\end{table}

\begin{figure}[tb]
\centering
\includegraphics[width=10cm]{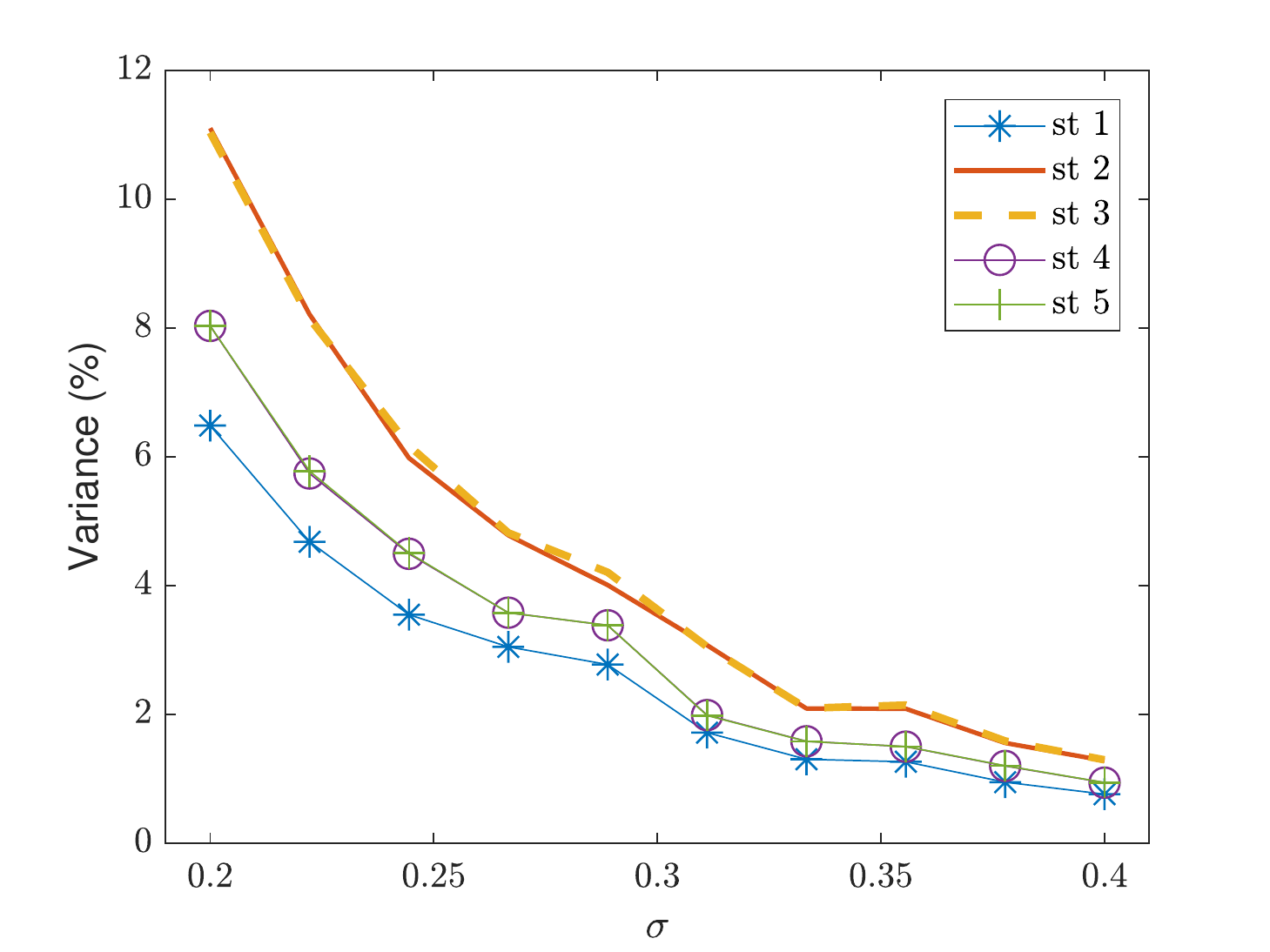}
\caption{Sample variance of the hedging error for different values of $\sigma$ with parameters as in \eqref{par:1}.} 
\label{p-compare-sigma}
\end{figure}

\begin{figure}[tb]
\centering
\includegraphics[width=0.45\textwidth]{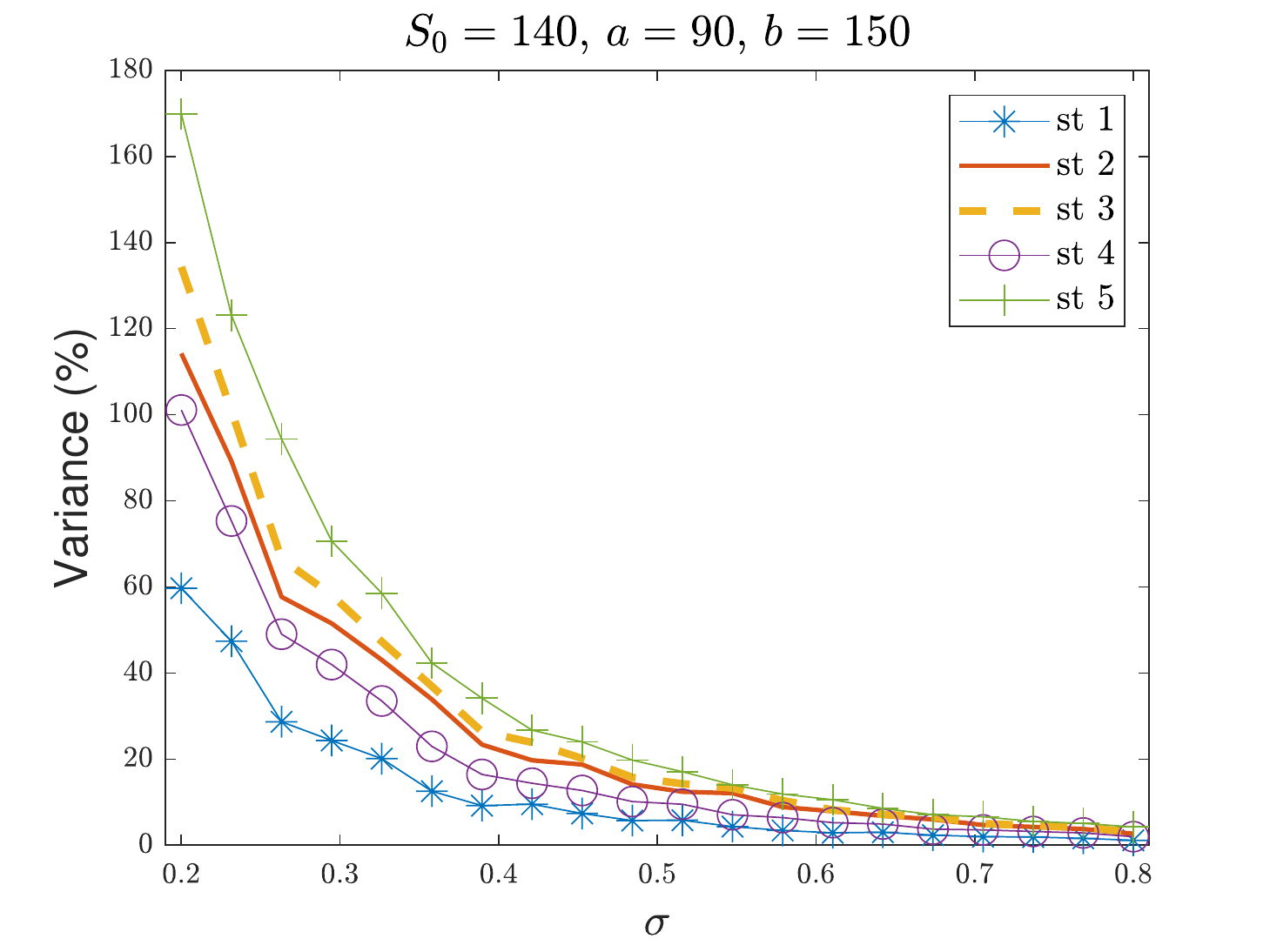}
\includegraphics[width=0.45\textwidth]{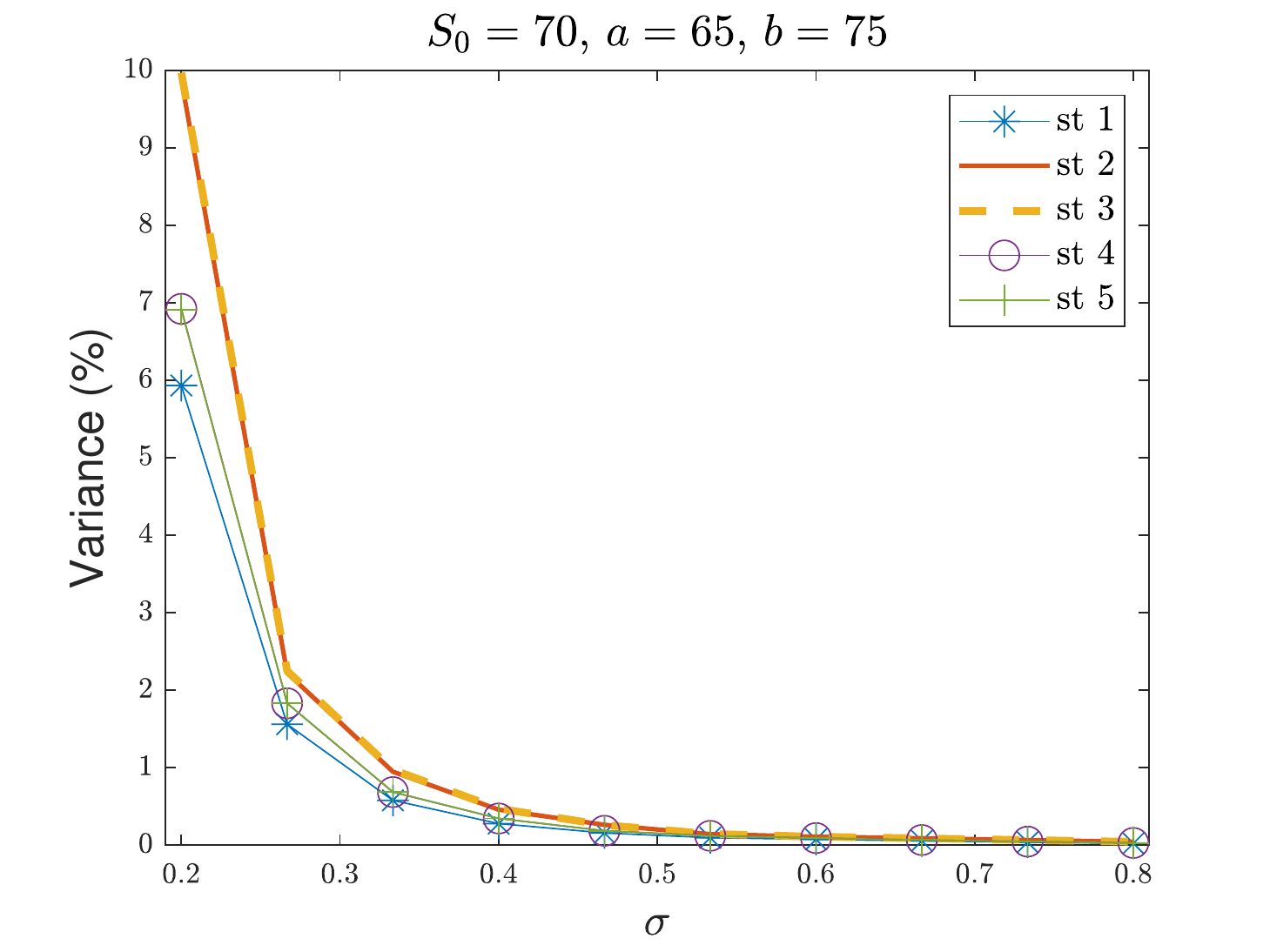}
\caption{Sample variance of the hedging error (\%) for different values of $\sigma$ with parameters $r=3\%$, $K=100$ and deep out-of-the-money option (left panel) and deep in-the-money option (the right panel).} 
\label{p-compare-sigma-2}
\end{figure}

\begin{table}[tb]
\caption{Sample variance of the hedging error (\%) for different values of $\sigma$ with parameters as in \eqref{par:1}.}

\centering 
\begin{tabular}{lccccc}
\hline
\hline 
             &Strategy   1      &Strategy   2      &Strategy   3      &Strategy   4      &Strategy   5      \\[0.5ex]
\hline
$\sigma=$ 20.00\%     & 6.48 & 11.10 & 11.03 & 8.03 & 8.03 \\
$\sigma=$ 22.22\%    & 4.68 & 8.21  & 8.14  & 5.74 & 5.78 \\
$\sigma=$ 24.44\%     & 3.55 & 5.98 & 6.17 & 4.49 & 4.50 \\
$\sigma=$ 26.67\%    & 3.05& 4.78  & 4.82  & 3.58 & 3.57 \\
$\sigma=$ 28.89\%    & 2.77 & 4.00 & 4.20  & 3.38 & 3.38 \\
$\sigma=$ 31.11\%  & 1.71 & 3.07  & 3.05  & 1.98 & 1.98 \\
$\sigma=$ 33.33\%    & 1.30 & 2.09 & 2.09  & 1.58 & 1.58 \\
$\sigma=$ 35.56\%  & 1.26 & 2.08  & 2.14  & 1.49& 1.50 \\
$\sigma=$ 37.78\%    & 0.94 & 1.55  & 1.58  & 1.20 & 1.20 \\
$\sigma=$ 40.00\%  & 0.76 & 1.28  & 1.29  & 0.93 & 0.94\\[1ex]
\hline 
\end{tabular}\label{table:2}
\end{table}

\begin{figure}[tb]
\centering
\includegraphics[width=10cm]{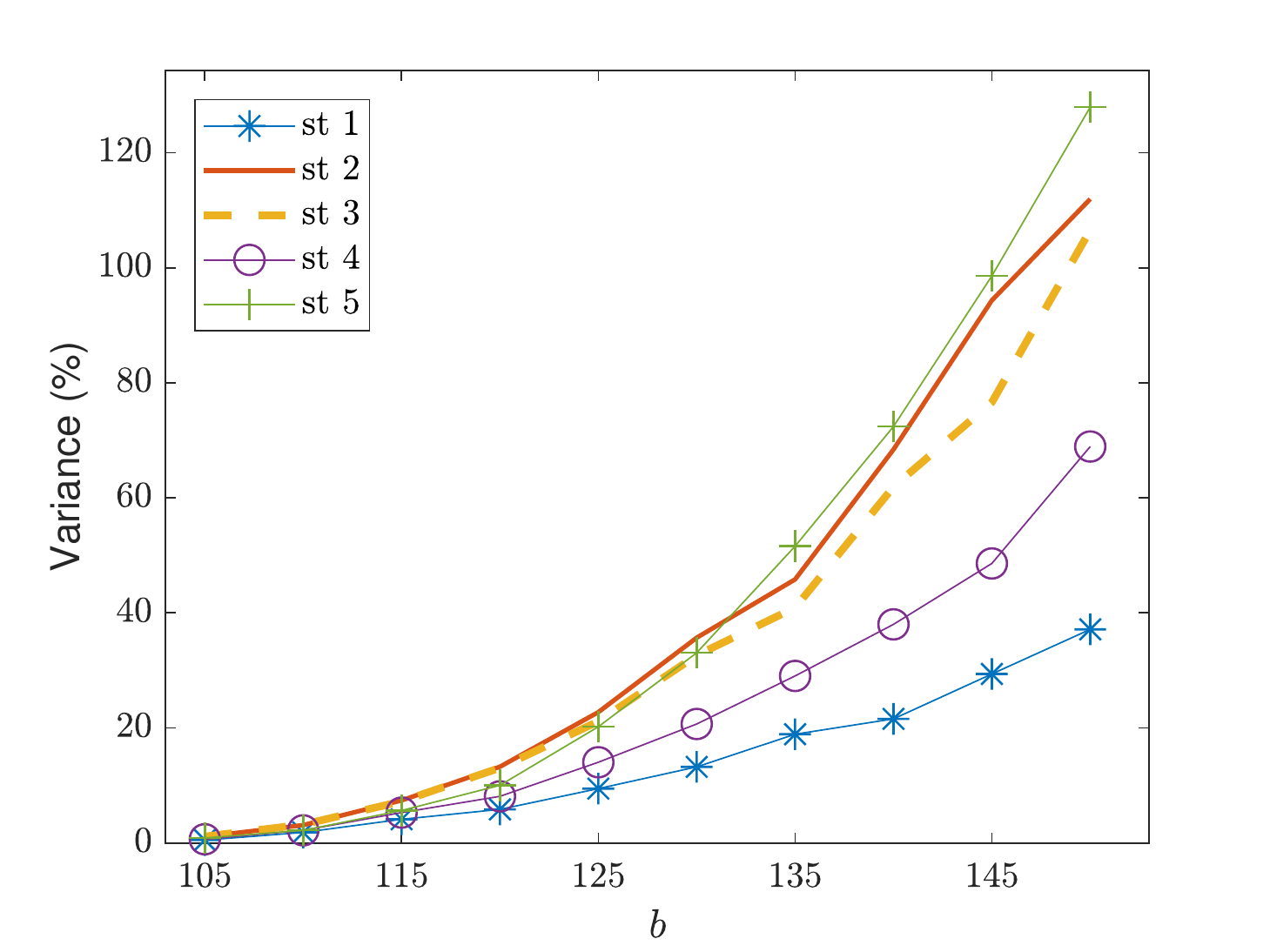}
\caption{Sample variance of the hedging error for different values of $b$ with parameters as in \eqref{par:1}.} 
\label{p-compare-b}
\end{figure} 

\begin{table}[tb]
\caption{Sample variance of the hedging error (\%) for different values of $b$ with parameters as in \eqref{par:1}.}
\centering 
\begin{tabular}{lccccc}
\hline
\hline 
             &Strategy 1      &Strategy 2      &Strategy 3      &Strategy 4      &Strategy 5      \\[0.5ex]
\hline
$b=$ 105  & 0.51  & 1.12  & 1.14   & 0.66  & 0.88   \\
$b=$ 110  & 1.84  & 3.10   & 3.21  & 2.20  & 2.20   \\
$b=$ 115  & 4.12  & 7.40   & 7.24  & 5.29  & 5.65  \\
$b=$ 120  & 5.85  & 13.26  & 13.06  & 8.11  & 10.03  \\
$b=$ 125  & 9.45  & 22.75  & 21.14  & 14.04 & 20.21 \\
$b=$ 130 & 13.23 & 35.68  & 32.63  & 20.69 & 33.07  \\
$b=$ 135 & 18.91 & 45.82  & 40.91  & 29.04 & 51.60  \\
$b=$ 140 & 21.58 & 68.40 & 61.91  & 38.00 & 72.40  \\
$b=$ 145 & 29.38 & 94.32  & 76.60  & 48.59 & 98.58  \\
$b=$ 150 & 37.14 & 111.93 & 106.53 & 68.92 & 127.93\\[1ex]
\hline 
\end{tabular}\label{table:3}
\end{table}

For the first experiment we consider $10$ different values for the initial stock price $S_0$ evenly spaced in the interval 
\[
[91,109]
\]
with all other parameters fixed as in \eqref{par:1}.
As shown in Figure \ref{p-compare-x} and Table \ref{table:1}, the variance of the tracking error for Strategy 1 is at least $40\%$ lower than the variance for the dynamic strategies 2 and 3, and at least $15\%$ lower than the variance for the static hedging strategies 4 and 5. It is worth noticing that the static strategy 4 outperforms the dynamic strategies 2 and 3.

In the second experiment, we take $10$ values of the volatility $\sigma$ evenly spaced in the interval
\[
[20\%,40\%]
\]
with all other parameters fixed as in \eqref{par:1}.
Results are shown in Figure \ref{p-compare-sigma} and Table \ref{table:2}. Our optimal strategy (Strategy 1) produces the variance of the tracking error which is about $30-40\%$ lower than the variance for strategies 2 and 3, and about $15-20\%$ lower than the variance for strategies 4 and 5. The relative gap between different strategies does not vary significantly as the volatility changes. Strategies 4 and 5 produce almost the same results; this happens because $S_0$ is taken as the middle point in $(a,b)$ and therefore the difference between $P'(S_0)$ and $\Gamma(S_0)$ is very small (for example, when $\sigma=31.11\%$ we have $P'(100)=-0.2110$ and $\Gamma(100)=-0.2115$). Strategies 2 and 3 also give very similar results, but they are out-performed by the static strategies.

The steep decline of all graphs in Figure \ref{p-compare-sigma} may seem at odds with the intuition that a high volatility corresponds to a risky trading environment. However, a high volatility also causes the option price to change slower as a function of the stock price: the difference $P(a) - P(b)$ is above $0.14$ for $\sigma = 20\%$ and less than $0.05$ for $\sigma = 40\%$. A larger volatility makes the tracked values closer to each other and, hence, the tracking problem easier. This intution was confirmed by extensive numerical studies with representative results for in-the-money and out-of-the-money options displayed on Figure \ref{p-compare-sigma-2}.

The third experiment studies the effect of the upper boundary $b$. We take $10$ values of $b$ evenly spaced in the interval
\[
[105,150]
\]
with all other parameters fixed as in \eqref{par:1}. The results are displayed in Figure \ref{p-compare-b} and Table \ref{table:3}. For values of $b$ close to $S_0$, the variance of the tracking error for all strategies is low because the stock price leaves the interval $(a,b)$ quickly. Observe that when $b$ is large, the dynamic optimal strategy 1 produces the variance which is $40\%$ lower than the second best (Strategy 4). This gap shrinks to about $20\%$ when $b$ is small. This indicates that both dynamic hedging and optimisation are important when one of the re-assessment boundaries is far away from $S_0$. Quite remarkably, in all the above experiments, the optimised static hedging (Strategy 4) gives a smaller variance of the tracking error than strategies 2 and 3, despite the fact that the latter two allow for one rebalancing opportunity.

\vspace{+15pt}
\noindent{\bf Acknowledgment}: C.~Cai gratefully acknowledges support by China Scholarship Council. T.~De Angelis gratefully acknowledges support by EPSRC grant EP/R021201/1. All authors are indebted to the reviewer for insightful comments, and for suggesting the interpretation in Remark \ref{rem:prem}.

\bibliographystyle{abbrv} 
\bibliography{References}

\begin{thebibliography}{10}

\bibitem{ahn2009note}
H.~Ahn and P.~Wilmott.
\newblock A note on hedging: restricted but optimal delta hedging, mean,
  variance, jumps, stochastic volatility, and costs.
\newblock {\em Wilmott Journal: The International Journal of Innovative
  Quantitative Finance Research}, 1(3):121--131, 2009.

\bibitem{Alv2001}
L.~H. Alvarez.
\newblock Singular stochastic control, linear diffusions, and optimal stopping:
  a class of solvable problems.
\newblock {\em SIAM Journal on Control and Optimization}, 39(6):1697--1710,
  2001.

\bibitem{borodin2012handbook}
A.~N. Borodin and P.~Salminen.
\newblock {\em Handbook of Brownian motion-facts and formulae}.
\newblock Birkh{\"a}user, 2012.

\bibitem{boyle1980discretely}
P.~P. Boyle and D.~Emanuel.
\newblock Discretely adjusted option hedges.
\newblock {\em Journal of Financial Economics}, 8(3):259--282, 1980.

\bibitem{cai2016optimal}
J.~Cai, M.~Fukasawa, M.~Rosenbaum, and P.~Tankov.
\newblock Optimal discretization of hedging strategies with directional views.
\newblock {\em SIAM Journal on Financial Mathematics}, 7(1):34--69, 2016.

\bibitem{Dayanik2008}
R.~Carmona and S.~Dayanik.
\newblock Optimal multiple stopping of linear diffusions.
\newblock {\em Mathematics of Operations Research}, 33(2):446--460, 2008.

\bibitem{carmona2008optimal}
R.~Carmona and N.~Touzi.
\newblock Optimal multiple stopping and valuation of swing options.
\newblock {\em Mathematical Finance}, 18(2):239--268, 2008.

\bibitem{de2015note}
T.~De~Angelis.
\newblock A note on the continuity of free-boundaries in finite-horizon optimal
  stopping problems for one-dimensional diffusions.
\newblock {\em SIAM Journal on Control and Optimization}, 53(1):167--184, 2015.

\bibitem{de2017optimal}
T.~De~Angelis and Y.~Kitapbayev.
\newblock On the optimal exercise boundaries of swing put options.
\newblock {\em Mathematics of Operations Research}, 43(1):252--274, 2017.

\bibitem{de2018global}
T.~De~Angelis and G.~Peskir.
\newblock Global {$C^1$} regularity of the value function in optimal stopping
  problems.
\newblock {\em Annals of Applied Probability}, 30(3):1007--1031, 2020.

\bibitem{ekren2018optimal}
I.~Ekren, R.~Liu, and J.~Muhle-Karbe.
\newblock Optimal rebalancing frequencies for multidimensional portfolios.
\newblock {\em Mathematics and Financial Economics}, 12(2):165--191, 2018.

\bibitem{figlewski1989options}
S.~Figlewski.
\newblock Options arbitrage in imperfect markets.
\newblock {\em The journal of Finance}, 44(5):1289--1311, 1989.

\bibitem{follmer1991hedging}
H.~Follmer and M.~Schweizer.
\newblock Hedging of contingent claims.
\newblock {\em Applied stochastic analysis}, 5:389, 1991.

\bibitem{Fukasawa}
M.~Fukasawa.
\newblock Asymptotically efficient discrete hedging.
\newblock In A.~Kohatsu-Higa, N.~Privault, and S.-J. Sheu, editors, {\em
  Stochastic Analysis with Financial Applications}, pages 331--346. Springer
  Basel, 2011.

\bibitem{gobet2014almost}
E.~Gobet and N.~Landon.
\newblock Almost sure optimal hedging strategy.
\newblock {\em The Annals of Applied Probability}, 24(4):1652--1690, 2014.

\bibitem{karatzas2012brownian}
I.~Karatzas and S.~Shreve.
\newblock {\em Brownian motion and stochastic calculus}, volume 113.
\newblock Springer Science \& Business Media, 2012.

\bibitem{lempa2014}
J.~Lempa.
\newblock Mathematics of swing options: a survey.
\newblock In F.~Benth, V.~Kholodnyi, and P.~Laurence, editors, {\em
  Quantitative Energy Finance}, pages 115--133. Springer, 2014.

\bibitem{martini1999variance}
C.~Martini and C.~Patry.
\newblock {\em Variance optimal hedging in the Black-Scholes model for a given
  number of transactions}.
\newblock PhD thesis, INRIA, 1999.

\bibitem{mastinvsek2006discrete}
M.~Mastin{\v{s}}ek.
\newblock Discrete--time delta hedging and the {B}lack--{S}choles model with
  transaction costs.
\newblock {\em Mathematical Methods of Operations Research}, 64(2):227--236,
  2006.

\bibitem{mello1998portfolio}
A.~S. Mello and H.~J. Neuhaus.
\newblock A portfolio approach to risk reduction in discretely rebalanced
  option hedges.
\newblock {\em Management Science}, 44(7):921--934, 1998.

\bibitem{menaldi1980optimal}
J.~L. Menaldi.
\newblock On the optimal stopping time problem for degenerate diffusions.
\newblock {\em SIAM Journal on Control and Optimization}, 18(6):697--721, 1980.

\bibitem{mercurio1996option}
F.~Mercurio and T.~C. Vorst.
\newblock Option pricing with hedging at fixed trading dates.
\newblock {\em Applied Mathematical Finance}, 3(2):135--158, 1996.

\bibitem{palczewski2011stopping}
J.~Palczewski and {\L}.~Stettner.
\newblock Stopping of functionals with discontinuity at the boundary of an open
  set.
\newblock {\em Stochastic Processes and their Applications},
  121(10):2361--2392, 2011.

\bibitem{peskir2006optimal}
G.~Peskir and A.~Shiryaev.
\newblock {\em Optimal stopping and free-boundary problems}.
\newblock Springer, 2006.

\bibitem{prigent2004option}
J.-L. Prigent, O.~Renault, and O.~Scaillet.
\newblock Option pricing with discrete rebalancing.
\newblock {\em Journal of Empirical Finance}, 11(1):133--161, 2004.

\bibitem{rogers1994diffusions}
L.~C.~G. Rogers and D.~Williams.
\newblock {\em Diffusions, Markov processes and martingales: Volume 2, It{\^o}
  calculus}, volume~2.
\newblock Cambridge university press, 1994.

\bibitem{rosenbaum2014}
M.~Rosenbaum and P.~Tankov.
\newblock Asymptotically optimal discretization of hedging strategies with
  jumps.
\newblock {\em The Annals of Applied Probability}, 24(3):1002--1048, 2014.

\bibitem{ruschendorf2008class}
L.~R{\"u}schendorf and M.~A. Urusov.
\newblock On a class of optimal stopping problems for diffusions with
  discontinuous coefficients.
\newblock {\em The Annals of Applied Probability}, 18(3):847--878, 2008.

\bibitem{schal1994quadratic}
M.~Sch{\"a}l.
\newblock On quadratic cost criteria for option hedging.
\newblock {\em Mathematics of Operations Research}, 19(1):121--131, 1994.

\bibitem{schweizer1995variance}
M.~Schweizer.
\newblock Variance-optimal hedging in discrete time.
\newblock {\em Mathematics of Operations Research}, 20(1):1--32, 1995.

\bibitem{sinclair2011}
E.~Sinclair.
\newblock Volatility trading.
\newblock In {\em Volume 331}. John Wiley and Sons, 2011.

\bibitem{trabelsi2003}
F.~Trabelsi.
\newblock {\em Strat{\'e}gies et sur-strat{\'e}gies discr{\`e}tes dans un
  mod{\`e}le en temps continu}.
\newblock PhD thesis, Universit{\'e} de Tunis El Manar, Facult{\'e} des
  Sciences De Tunis, 2003.

\bibitem{trabelsi2002}
F.~Trabelsi and A.~Trad.
\newblock {$L^2$}-discrete hedging in a continuous-time model.
\newblock {\em Applied Mathematical Finance}, 9(3):189--217, 2002.

\end{thebibliography}

\end{document}